\title{Estimating overidentified linear models with heteroskedasticity and outliers}
\author{Lei Bill Wang}
\date{}
\newcolumntype{L}[1]{>{\raggedright\arraybackslash}p{#1}}
\newcolumntype{C}[1]{>{\centering\arraybackslash}p{#1}}
\newcolumntype{R}[1]{>{\raggedleft\arraybackslash}p{#1}}
\providecommand{\keywords}[1]{\textbf{\textit{Keywords: }} #1}
\newtheorem{theorem}{Theorem}[section]
\newtheorem{lemma}{Lemma}[section]
\newtheorem{assumption}{Assumption}
\newtheorem*{assumption*}{Assumption}
\newtheorem{definition}{Definition}
\newtheorem{corollary}{Corollary}
\newtheorem{claim}{Claim}[section]
\begin{document}
	\maketitle
\begin{abstract}
A large degree of overidentification causes severe bias in TSLS. A conventional heuristic rule used to motivate new estimators in this context is approximate bias. This paper formalizes the definition of approximate bias and expands the applicability of approximate bias to various classes of estimators that bridge OLS, TSLS, and Jackknife IV estimators (JIVEs). By evaluating their approximate biases, I propose new approximately unbiased estimators, including UOJIVE1 and UOJIVE2. UOJIVE1 can be interpreted as a generalization of an existing estimator UIJIVE1. Both UOJIVEs are proven to be consistent and asymptotically normal under a fixed number of instruments and controls. The asymptotic proofs for UOJIVE1 in this paper require the absence of high leverage points, whereas proofs for UOJIVE2 do not. In addition, UOJIVE2 is consistent under many-instrument asymptotic. The simulation results align with the theorems in this paper: (i) Both UOJIVEs perform well under many instrument scenarios with or without heteroskedasticity, (ii) When a high leverage point coincides with a high variance of the error term, an outlier is generated and the performance of UOJIVE1 is much poorer than that of UOJIVE2. 

\end{abstract}
\keywords{Approximate bias, instrumental variables (IV), overidentification, k-class estimator, Jackknife IV estimator (JIVE), many-instrument asymptotics, robustness to outlier}

\newpage
\section{Introduction}
Overidentified two-stage least squares (TSLS) are commonplace in economics research. \citet{mogstad2021causal} summarize that from January 2000 to October 2018, 57 papers from American Economic Review, Quarterly Journal of Economics, Journal of Political Economy, Econometrica, and the Review of Economic Studies adopt overidentified TSLS. 
Unfortunately, overidentification introduces a bias problem to TSLS. The intuition for the relationship between bias and the number of IVs can be illustrated with a pathological example in which a researcher adds so many instruments in the first-stage regression that the number of first-stage regressors (which include both the IVs and exogenous control variables)  is equal to the number of observations. The first stage regression equation has a perfect fit and its fitted value is exactly the observed endogenous variable values. Under this pathological example, TSLS and OLS perform the same, and so TSLS is equally biased as OLS.

To reduce or even completely resolve the problem, one may want to evaluate the bias of TSLS. However, evaluating the exact bias for estimators under the presence of endogeneity requires knowledge about the distribution of observable and unobservable variables. For example, \citet{anderson1982exact} and \citet{harding2016finite} assume that the error terms of the simultaneous equations are jointly normal and the assumption allows them to evaluate the finite sample distribution of TSLS and therefore, its bias. However, such assumptions can be too strong for economists' preferences. Therefore, many econometricians resort to a different evaluation criterion called ``approximate bias''. Many past works on IV estimators have evaluated the approximate bias of TSLS and used that to motivate new estimators \citep{nagar1959bias,fuller1977some,buse1992bias,angrist1999jackknife,ackerberg2009improved}. The idea behind approximate bias is to divide the difference between an estimator and the target parameter into two parts. One part is of a higher stochastic order than the other and is dropped out of the subsequent approximate bias calculation. The expectation of the lower stochastic order part is called approximate bias.

This paper formalizes the definition of approximate bias and shows why this definition is a reasonable heuristic rule for motivating new estimators. The definition of approximate bias proposed by \citet{nagar1959bias} applies only to the k-class estimator. On the other hand, the definition proposed in the working paper of \citet{angrist1999jackknife} (later referred to as AIK 1995) and used in \citet{ackerberg2009improved} (later referred to as AD 2009) applies to a large class of estimators which includes OLS, TSLS, all k-class estimators, JIVE1, IJIVE1 and UIJIVE1.\footnote{IJIVE1 and UIJIVE1 are originally termed IJIVE and UIJIVE, respectively, by \citet{ackerberg2009improved}. I attach the number 1 at the end of their names for comparison purposes which will become clear in later sections of this paper.} I show that the definition of approximate bias in AIK 1995 and AD 2009 is valid for an even larger class of estimators which additionally includes \citet{fuller1977some}, JIVE2, \citet{hausman2012instrumental}, and many other classes of estimators that bridge between OLS, TSLS, and JIVE2.

By expanding the applications of approximate bias to additional classes of estimators, I propose new estimators that are approximately unbiased. This paper focuses on two of the new estimators named UOJIVE1 and UOJIVE2. UOJIVE1 can be interpreted as a generalization of UIJIVE1. This paper shows that both UOJIVE1 and UOJIVE2 are consistent and asymptotically normal under a fixed number of instruments as the sample size goes to infinity. However, UOJIVE1's asymptotic proofs rely on an assumption that rules out high leverage points. On the other hand, UOJIVE2 does not require such an assumption. In addition, \citet{hausman2012instrumental}'s Theorem 1 directly applies to UOJIVE2, which proves that UOJIVE2 is consistent even under many-instrument asymptotic. Simulation results align with the theoretical results. UOJIVE1 and UOJIVE2 perform similarly well under a large degree of overidentification and heteroskedasticity. UOJIVE2's performance is much more stable than UOJIVE1's when an outlier (a high leverage point coinciding with a large variance of the error term) is deliberately introduced in the DGP.

The paper is organized as follows. Section \ref{Problem Setup} describes the problem setup and existing estimators in the approximate bias literature. Section \ref{sec def for approximate bias} defines approximate bias for a large class of estimators and explains the importance of this formal definition. Sections \ref{sec from UIJIVE1 to UIJIVE2} and \ref{sec from UIJIVE to UOJIVE} propose new estimators that are approximately unbiased. Section \ref{sec asymptotic proofs} shows that (i) under a fixed number of instruments, UOJIVE1 and UOJIVE2 are consistent and asymptotically normal, and (ii) under many-instrument asymptotic, UIJIVE2 is consistent. Section \ref{simulation} and section \ref{Empirical} demonstrate the superior performance of UOJIVE2 in simulation and empirical studies. Section \ref{conclusion} concludes the paper.

\section{Model setup and existing estimators}\label{Problem Setup}
\subsection{Overidentified linear model setup}
The model concerns a typical endogeneity problem: 
\begin{align}
	y_i & = X_i^* \beta^* + W_i\gamma^*+\epsilon_i \label{key1} \\
	X_i & = Z_i^*\pi^* + W_i\delta^* + \eta_i \label{key2}
\end{align}

Eq.(\ref{key1}) contains the parameter of interest $\beta^*$. $y_i$ is the response/outcome/dependent variable and $X_i^*$ is an $L_1$-dimensional row vector of endogenous covariate/explanatory/independent variables. Exogenous control variables, denoted as $W_i$, is an $L_2$-dimensional vector. Eq.(\ref{key2}) relates all the endogenous explanatory variables $X_i$ to instrumental variables, $Z_i$, and included exogenous control variables $W_i$ from Eq.(\ref{key1}). $Z_i$ is a $K_1$-dimensional row vector, where $K_1 \geq L_1$. Since this paper focuses on overidentified cases, I will assume that $K_1 > L_1$ throughout the rest of the paper. As \citet{hausman2012instrumental}, \citet{bekker2015jackknife}, this paper treats $Z$ as nonrandom (Alternatively, $Z$ can be assumed to be random, but conditioned on like in \citet{chao2012asymptotic}). $X_i$ is endogenous, ${\rm Cov}(\epsilon_i,X_i) = {\rm Cov}(\epsilon_i,Z_i\pi + \eta_i)= {\rm Cov}(\epsilon_i, \eta_i)  = \sigma_{\epsilon\eta} \neq 0$. 
I further assume that each pair of $(\epsilon_i,\eta_i)$ are independently and identically distributed with mean zero and covariance matrix
$\begin{pmatrix}
	\sigma_{\epsilon}^2 & \sigma_{\epsilon\eta} \\
	\sigma_{\eta\epsilon} & \sigma_{\eta}^2 
\end{pmatrix}$. $\sigma_{\epsilon}$ is a scalar. $\sigma_{\epsilon\eta}$ is a $L$-dimensional vector. $\sigma_{\eta}$ is a $L \times L$ matrix. 
I also impose a relevance constraint that $\pi \neq 0$. 
In matrix notation, I have Eqs.(\ref{key1}) and (\ref{key2}) as:
\begin{align}
	y = & X^*\beta^* + W\gamma^* + \epsilon \label{key3} \\
	X = & Z^*\pi^* + W\delta^* + \eta \label{key4}
\end{align}
where $y$ and $\epsilon$ are $(N \times 1)$ column vector; $X$ and $\eta$ are $(N \times L_1)$ matrices; $W$ is $(N \times L_2)$ and $Z$ is a $( N \times K_1)$ matrix, where $N$ is the number of observations. I also define the following notations for convenience:
\begin{equation*}
    \begin{aligned}
        X & = [X^*\quad W] \qquad \beta &&= \begin{pmatrix}
            \beta^* \\
            \gamma^*
        \end{pmatrix} \\
        Z & = [Z^*\quad W] \qquad \pi && = \begin{pmatrix}
            \pi^* & 0_{K_1\times L_2} \\
            \delta^* & I_{L_2}
        \end{pmatrix}
    \end{aligned}
\end{equation*}
Then, we have the following equivalent expressions for Eqs. (\ref{key3}) and (\ref{key4}):
\begin{align}
    y = X\beta + \epsilon \\
    X = Z\pi + \eta
\end{align}
It is also useful to define the following partialled out version of variables:
\begin{align*}
    \tilde{y} = y - W(W'W)^{-1}W'y \\
    \tilde{X} = X^* - W(W'W)^{-1}W'X^* \\
    \tilde{Z} = Z^* - W(W'W)^{-1}W'Z^* 
\end{align*}

\subsection{Existing estimators}
IV estimator is often used to solve this simultaneous equation problem. I tabulate in Table \ref{Table for a few estimators} some of the existing IV estimators this paper repeatedly refers to. They all have the matrix expression $(X'C'X)^{-1}(X'C'y)$. The caption of Table \ref{Table for a few estimators} summarizes how these estimators are linked to each other. 

\begin{table}
	\centering
	\renewcommand{\arraystretch}{1.25}
	\begin{tabular}{| c | c |}
		\hline
		Estimators& C \\
            \hline
		OLS &   $I$ \\
		TSLS &    $P_Z$ \\
            k-class & $(1-k)I + kP_Z$ \\
            JIVE2 &    $P_Z-D$ \\
		JIVE1 &    $(I-D)^{-1}(P_Z-D)$ \\
            IJIVE1 & $(I-\tilde{D})^{-1}(P_{\tilde{Z}}-\tilde{D})$\\
            UIJIVE1 & $(I-\tilde{D}+ \omega I)^{-1}(P_{\tilde{Z}}-\tilde{D} + \omega I)$ \\
		\hline
	\end{tabular}
	\caption{All these IV estimators are of the analytical form $(X'C'X)^{-1}(X'C'y)$. $D$ is the diagonal matrix of the projection matrix $P_Z = Z(Z'Z)^{-1}Z'$. $\tilde{Z}$ is $Z$ partialled out by $W$, $\tilde{Z} = Z^* - W(W'W)^{-1}W'Z^*$. $P_{\tilde{Z}}$ is the projection matrix of $\tilde{Z}$ and $\tilde{D}$ is the diagonal matrix of $P_{\tilde{Z}}$. JIVE2 modifies TSLS by removing the diagonal entries of the projection matrix $P_Z$. JIVE1 adds a rowwise division operation in front of the $C$ matrix of JIVE2. IJIVE is essentially equivalent to JIVE1, the only difference is that IJIVE takes in $(\tilde{y},\tilde{X},\tilde{Z})$ instead of $(y,X,Z)$. Its closed-form is written as $(\tilde{X}(I-\tilde{D})^{-1}(P_{\tilde{Z}}-\tilde{D})\tilde{X})^{-1}\tilde{X}(I-\tilde{D})^{-1}(P_{\tilde{Z}}-\tilde{D})\tilde{y})$. IJIVE reduces the approximate bias of JIVE1. UIJIVE1 further reduces the approximate bias of IJIVE1 by adding a constant $\omega$ at the diagonal of the inverse term and the term post-multiplied to the inverse in the $C$ matrix. $\omega = \frac{L_1+1}{N}$.}
    \label{Table for a few estimators}
\end{table}

\section{Approximate bias}\label{sec def for approximate bias}
IV estimator is often employed to estimate $\beta^*$ (or $\beta$) in Section \ref{Problem Setup}. The most commonly used IV estimator is TSLS which has a bias problem when the degree of overidentification is large. Unfortunately, completely removing the bias of overidentified TSLS is generally infeasible unless economists are willing to assume parametric families for instrumental variables, $Z$. Therefore, econometricians often resort to a concept called \textit{approximate bias}. For example, JIVE1, JIVE2, IJIVE1, and UIJIVE1 from Table \ref{Table for a few estimators} are all motivated by reducing approximate bias. The intuition behind the idea is to divide the difference between an estimator, $\hat{\beta}$, and the true parameter, $\beta$, that the estimator is aiming to estimate into two parts. One part is of a higher stochastic order than the other and therefore, is dropped out of the subsequent approximate bias calculation. The other part with lower stochastic order has an easy-to-evaluate expectation. Its expectation is called approximate bias. 

\subsection{Definition of approximate bias}
The following definition of approximate bias has been used by AIK 1995 and AD 2009 to motivate their development of new estimators whose $C$ matrix satisfies property $CZ = Z$ (and hence $CX = Z\pi + C\eta$):  
\begin{definition} \label{def for approximate bias}
    The approximate bias of an IV estimator is $E[R_N]$ where 
	\begin{equation*}
		R_N = J\epsilon - \frac{Q_0}{N}\pi'Z'\eta J \epsilon +\frac{Q_0}{N}\eta' C'\epsilon - \frac{Q_0}{N}\eta'P_{Z\pi}\epsilon 
	\end{equation*}
in which $Q_0 = \lim_{N\to \infty} (\pi' \frac{Z'Z}{N} \pi)^{-1}$, $J = (\pi'Z'Z\pi)^{-1}\pi'Z'$ and $P_{Z\pi} = Z\pi(\pi'Z'Z\pi)^{-1}\pi'Z'$. 
\end{definition}
For readers' convenience, I justify why this definition can be used as a reasonable heuristic rule for motivating new estimators in Appendix \ref{def for approximate bias derivation}. Note that both AIK 1995 and AD 2009 assume that $CZ = Z$ which restricts the application of Definition \ref{def for approximate bias}. The condition that $CZ=Z$ is satisfied by OLS, TSLS, all k-class estimators, JIVE1, IJIVE1, and UIJIVE1, but not many other estimators such as JIVE2, HLIM, and HFUL from \citet{hausman2012instrumental}, $\lambda_2$- and $\omega_2$-class estimators which I will introduce in later sections of the paper. In Appendix \ref{Approximate bias for other classes of estimators}, I show that the condition $CZ = Z$ is not necessary, Definition \ref{def for approximate bias} is a reasonable heuristic rule for the estimators as mentioned earlier which do not satisfy the condition $CZ = Z$.

\subsection{Approximate bias of existing estimators} \label{sec Approximate bias of existing estimators}

Definition \ref{def for approximate bias} yields Corollary \ref{corollary for approximate bias} and Definitions \ref{def approximately unbiased} and \ref{def approximate bias asymptotic vanishing}. I will use them to analyze the existing IV estimators in Table \ref{Table for a few estimators}.

\begin{corollary} \label{corollary for approximate bias}
	Approximate bias of an IV estimator is $(X'C'X)^{-1}(X'C'y)$ is 
	\[
	\frac{Q_0}{N}(tr(C')-tr(P_{Z\pi})-1)\sigma_{\eta\epsilon}
	\]
	where $Q_0 = \lim_{N\to \infty} \pi' \frac{Z'Z}{N} \pi$ and $P_{Z\pi} = Z\pi(\pi'Z'Z\pi)^{-1}\pi'Z'$. An estimator's approximate bias is proportional to $tr(C) - \mathcal{L} - 1$, where $\mathcal{L}$ is the number of columns of $Z\pi$ (or $X$).\footnote{I use $\mathcal{L}$ instead of $L$ because while for most of the estimators in Table \ref{Table for a few estimators}, $\mathcal{L} = L$; for IJIVE1 and UIJIVE1, $\mathcal{L} = L_1$.}
\end{corollary}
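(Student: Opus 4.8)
The plan is to obtain the formula by evaluating $E[R_N]$ term by term from Definition \ref{def for approximate bias}, using linearity of expectation. Throughout I would use that $Z$ and the controls $W$ are nonrandom --- so that the matrices $C$, $J=(\pi'Z'Z\pi)^{-1}\pi'Z'$ and $P_{Z\pi}$ are all nonrandom (every estimator in Table \ref{Table for a few estimators} has a $C$ built only from $Z$ and $W$, never from $X$ or $y$) --- and that the pairs $(\epsilon_i,\eta_i)$ are i.i.d.\ across $i$ with $E[\epsilon_i]=0$, $E[\eta_i'\epsilon_i]=\sigma_{\eta\epsilon}$, and $E[\eta_{il}\epsilon_j]=0$ for $i\neq j$. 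The one algebraic identity I will lean on is $\pi'Z'J' = \pi'Z'Z\pi\,(\pi'Z'Z\pi)^{-1}=I$, which holds because $Z\pi$ is assumed to have full column rank (the relevance condition, equivalently invertibility of $\pi'Z'Z\pi$).

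The first summand is immediate: $E[J\epsilon]=J\,E[\epsilon]=0$. For the third and fourth summands I would record a workhorse fact --- for any nonrandom $N\times N$ matrix $M$, if $\eta'M\epsilon$ denotes the $L$-vector whose $l$-th entry is $(\text{column }l\text{ of }\eta)'M\epsilon$, then $E[\eta'M\epsilon]=tr(M)\,\sigma_{\eta\epsilon}$; this drops out of $(\eta'M\epsilon)_l=\sum_{i,j}\eta_{il}M_{ij}\epsilon_j$ once one uses that $E[\eta_{il}\epsilon_j]$ equals $(\sigma_{\eta\epsilon})_l$ if $i=j$ and $0$ otherwise. Taking $M=C'$ gives $E[\eta'C'\epsilon]=tr(C')\,\sigma_{\eta\epsilon}$, and taking $M=P_{Z\pi}$ gives $E[\eta'P_{Z\pi}\epsilon]=tr(P_{Z\pi})\,\sigma_{\eta\epsilon}=\mathcal{L}\,\sigma_{\eta\epsilon}$, since $P_{Z\pi}$ is the orthogonal projector onto the $\mathcal{L}$-dimensional column space of $Z\pi$.

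The only summand needing a touch more care is the cross term $\frac{Q_0}{N}\pi'Z'\eta J\epsilon$, because $\eta$ sits in the middle of the product rather than at an end, so the workhorse fact does not apply verbatim. Writing it in indices, $(\pi'Z'\eta J\epsilon)_l=\sum_{a,m,c}(\pi'Z')_{la}\eta_{am}J_{mc}\epsilon_c$; the factor $E[\eta_{am}\epsilon_c]$ vanishes unless $a=c$ and then equals $(\sigma_{\eta\epsilon})_m$, which collapses the sum to $\sum_{a,m}(\pi'Z')_{la}J_{ma}(\sigma_{\eta\epsilon})_m=\sum_m(\pi'Z'J')_{lm}(\sigma_{\eta\epsilon})_m$. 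By the identity $\pi'Z'J'=I$ this is just $(\sigma_{\eta\epsilon})_l$, so $E[\pi'Z'\eta J\epsilon]=\sigma_{\eta\epsilon}$ --- this is exactly where the ``$-1$'' in the statement comes from.

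Assembling the four pieces,
\[
E[R_N]=0-\frac{Q_0}{N}\sigma_{\eta\epsilon}+\frac{Q_0}{N}tr(C')\,\sigma_{\eta\epsilon}-\frac{Q_0}{N}\mathcal{L}\,\sigma_{\eta\epsilon}=\frac{Q_0}{N}\bigl(tr(C')-\mathcal{L}-1\bigr)\sigma_{\eta\epsilon},
\]
which, using $tr(C')=tr(C)$ and $tr(P_{Z\pi})=\mathcal{L}$, is the claimed expression; here $\mathcal{L}$ is the number of columns of $Z\pi$, equal to $L$ for the generic estimator and to $L_1$ for IJIVE1 and UIJIVE1 (where $W$ has been partialled out of $Z$, so the argument runs through with the partialled-out quantities). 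I expect the only real obstacle to be careful index bookkeeping in the cross term --- in particular remembering that $\sigma_{\eta\epsilon}$ is a vector, not a scalar, when $L>1$ --- together with the (already-assumed) rank condition on $Z\pi$ needed both for $\pi'Z'J'=I$ and for $tr(P_{Z\pi})=\mathcal{L}$.
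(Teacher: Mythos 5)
Your term-by-term evaluation of $E[R_N]$ is correct and mirrors the paper's own computation in Appendix \ref{def for approximate bias derivation}: the trace identity $E[\eta'M\epsilon]=tr(M)\sigma_{\eta\epsilon}$ handles $R3$ and $R4$, and your index argument for the cross term via $\pi'Z'J'=I$ is exactly the paper's step $E[\pi'Z'((\pi'Z'Z\pi)^{-1}\pi'Z')']\sigma_{\eta\epsilon}=I\,\sigma_{\eta\epsilon}$, yielding the same $\frac{Q_0}{N}(tr(C')-tr(P_{Z\pi})-1)\sigma_{\eta\epsilon}$ with $tr(P_{Z\pi})=\mathcal{L}$. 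The only difference is one of scope, not substance: the paper's appendix additionally re-derives $R_N$ from $(X'C'X)^{-1}(X'C'\epsilon)$ (justifying Definition \ref{def for approximate bias}), whereas you take the definition as given, which is all the corollary requires.
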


\begin{definition} \label{def approximately unbiased}
    An estimator is said to be approximately unbiased if $tr(C)-tr(P_{Z\pi})-1 = 0$.
\end{definition}

\begin{definition} \label{def approximate bias asymptotic vanishing}
The approximate bias of an estimator is said to be asymptotically vanishing if 
 \begin{equation*}
     tr(C)-tr(P_{Z\pi})-1 \overset{}{\to} 0 \text{ as } N \to \infty.
 \end{equation*} 
\end{definition}


I compute the approximate bias by showing the term $tr(C) - \mathcal{L} - 1$ for the following estimators. Larger magnitude of $tr(C) - \mathcal{L} - 1$ means a larger approximate bias. Ideally, $tr(C) - \mathcal{L} - 1 = 0$, so that we can claim the estimator to be approximately unbiased. 

The approximate bias computation for OLS, TSLS, and k-class estimators is straightforward: \textbf{OLS}'s approximate bias is proportional to $N - L - 1$, \textbf{TSLS}'s approximate bias is proportional to $K - L - 1$, \textbf{k-class estimator}'s approximate bias is proportional to $kK + (1-k)N - L - 1$, where $k$ is a user-defined parameter. Setting $k = \frac{N - L - 1}{N - K}$ gives us an approximately unbiased estimator. I call this estimator AUK (Approximately unbiased k-class estimator). 

As explained earlier, TSLS' approximate bias is proportional to $K - L - 1$ which is large when the degree of overidentification is large. AIK 1995 proposes a jackknife version of TSLS to reduce the approximate bias of TSLS when the degree of overidentification is large. The authors call the proposed estimators JIVE1 and JIVE2 (Jackknife IV Estimator). Evaluating the approximate bias of \textbf{JIVE2} is straight forward:
\begin{equation*}
    tr(P_Z - D) - \mathcal{L} - 1 =  -L -1 
\end{equation*}

We obtain the approximate bias of \textbf{JIVE1}:
\begin{equation*}
    tr((I - D)^{-1}(P_Z - D)) - \mathcal{L} - 1 = \sum_{i=1}^N \frac{0}{1-D_{i}} -L -1 = -L-1
\end{equation*}

AD 2009 reduces the approximate bias of the two JIVEs by paritialling out $W$ from $(y, X^*, Z^*)$. We have the approximate bias of IJIVE1 proportional to
\begin{equation*}
    tr((I-\tilde{D})^{-1}(P_{\tilde{Z}}-\tilde{D})) - \mathcal{L} - 1 = \sum_{i=1}^N \frac{0}{1-\tilde{D}_{i}} -L_1-1 = -L_1 -1
\end{equation*}
Note that $L_1 < L$, so IJIVE1 potentially has a smaller approximate bias than JIVEs.\footnote{The comparison requires knowledge on $\lim_{N\to \infty} \pi' \frac{\tilde{Z}'\tilde{Z}}{N} \pi$. However, since approximate bias is used only as a heuristic rule to motivate new estimators, evaluating $tr(C)-tr(P_{Z\pi})-1$ is arguably sufficient for motivating purposes.} AD 2009 also further reduces approximate bias by adding a constant ($\frac{L_1+1}{N}$) to the diagonal of the $C$ matrix of IJIVE1. The new estimator is called UIJIVE1. 

To evaluate the approximate bias of \textbf{UIJIVE1}, I make the following assumption about the leverage of $Z$, ${\{{D}_i\}}_{i=1}^N$:
\begin{assumption*}[\textbf{BA}] 
    $\max_i {D}_i$ is \textbf{bounded away} from 1 for large enough N from 1.
    Equivalently, $\exists~m > 0, \text{ such that }  {D}_i \leq 1-m$  for large enough $N$ and for all $i = 1, 2, 3, \dots, N$.
\end{assumption*}
I state the assumption in terms of $Z$ as I will repeatedly use it for $Z$ in Section \ref{sec asymptotic proofs}. Here, to compute the approximate bias of \textbf{IJIVE1}, I make assumption BA, but for ${\{\tilde{D}_i\}}_{i=1}^N$ instead of for ${\{{D}_i\}}_{i=1}^N$. Under Assumption BA for ${\{\tilde{D}_i\}}_{i=1}^N$, \textbf{UIJIVE1}'s approximate bias is proportional to 
\begin{equation*}
    tr(C) - \mathcal{L} - 1 = \sum_{i=1}^{N} \frac{\frac{L_1+1}{N}}{1 - \tilde{D}_i + \frac{L_1+1}{N}} - L_1 - 1 \to 0
\end{equation*}
Therefore, we have that UIJIVE1's approximate bias is asymptotically vanishing. See proof in Appendix \ref{appendix UIJIVE1 approximate bias asymptotically vanishing}.

\section{From UIJIVE1 to UIJIVE2}\label{sec from UIJIVE1 to UIJIVE2}

This section interprets the relationship between existing estimators (in particular, JIVE1, IJIVE1, UIJIVE1, and OLS) which sheds light on how a new estimator that is approximately unbiased, namely, UIJIVE2 is developed. 

\subsection{$\omega_1$-class estimator and UIJIVE1}
I define the $\omega_1$-class estimator that contains all the following four estimators: JIVE1, IJIVE1, UIJIVE1, and OLS. Since all of them can be expressed as: $(X'C'X)^{-1}(X'C'y)$, I define matrix $C$ for the $\omega_1$-class estimators:
\begin{equation*}
    (I - D + \omega_1 I)^{-1}(P_Z - D + \omega_1 I)
\end{equation*}
when $\omega_1 = 0$, it corresponds to JIVE1; when $\omega_1 = \infty$, it corresponds to OLS. On the other hand, IJIVE1 is a special case of JIVE1 where $(y, X, Z)$ is replaced with $(\tilde{y},\tilde{X},\tilde{Z})$, and hence is belonged to $\omega_1$-class estimator. UIJIVE1 uses $(\tilde{y},\tilde{X},\tilde{Z})$ and sets $ \omega_1 = \frac{L_1+1}{N}$. As stated in Section \ref{sec Approximate bias of existing estimators}, this choice of $\omega_1 = \frac{L_1+1}{N}$ outputs UIJIVE1 whose approximate bias is asymptotically vanishing. The information of these four estimators is summarized in Table \ref{omega1 a few estimators}. 

The development from TSLS to JIVE1 to IJIVE1 to UIJIVE1 is depicted in the upper half of Figure \ref{Figure Estimator development history}. 
\textbf{TSLS to JIVE1}: JIVE1 is a jackknife version of TSLS where the first stage OLS in TSLS is replaced with a jackknife procedure for out of sample prediction. Call the resulting matrix from the first stage $\hat{X}$, the second stage IV estimation is the same for TSLS and JIVE1: $(\hat{X}'X)^{-1}(\hat{X}'y)$.
\textbf{JIVE1 to IJIVE1}: IJIVE1 is a special case of JIVE1. JIVE1 takes $(y,X,Z)$ as the input, IJIVE1 takes $(\tilde{y},\tilde{X},\tilde{Z})$ (the partialled-out version of $(y,X,Z)$) as input. \textbf{IJIVE1 to UIJIVE1}: UIJIVE1 adds a constant $\omega_1 = \frac{L_1+1}{N}$ to the diagonal of $C$ matrix of IJIVE1. We can interpret this addition of constant $\omega_1 I$ as bridging IJIVE1 and OLS since OLS has its $C$ matrix as $I$.

\begin{table}
    \centering
    \subfloat[OLS can take either $(y,X,Z)$ or $(\tilde{y},\tilde{X},\tilde{Z})$. It estimates the entire $\beta$ when it takes $(y, X, Z)$, it estimates the parameters for only the endogenous variables $\beta^*$ when it takes $(\tilde{y},\tilde{X},\tilde{Z})$. UIJIVE1 can be interpreted as an estimator that bridges IJIVE and OLS.]{\begin{tabular}{| c | c c |}
        \hline
         & $(y,X,Z)$ or $(\tilde{y},\tilde{X},\tilde{Z})$  & $\omega_1$\\
         \hline
         JIVE1& $(y,X,Z)$ & 0 \\
         IJIVE& $(\tilde{y},\tilde{X},\tilde{Z})$ & 0\\
         UIJIVE1& $(\tilde{y},\tilde{X},\tilde{Z})$ & $\frac{L_1+1}{N}$ \\
         OLS& Both & $\infty$\\
         \hline
    \end{tabular}\label{omega1 a few estimators}}
    \quad
    \subfloat[OLS can take either $(y,X,Z)$ or $(\tilde{y},\tilde{X},\tilde{Z})$. It estimates the entire $\beta$ when it takes $(y,X,Z)$, it estimates the parameters for only the endogenous variables $\beta^*$ when it takes $(\tilde{y},\tilde{X},\tilde{Z})$. UIJIVE2 can be interpreted a an estimator that bridges IJIVE2 and OLS.]{\begin{tabular}{| c | c c |}
        \hline
         & $(y,X,Z)$ or $(\tilde{y},\tilde{X},\tilde{Z})$  & $\omega_2$\\
         \hline
         JIVE2& $(y,X,Z)$ & 0 \\
         IJIVE2 & $(\tilde{y},\tilde{X},\tilde{Z})$ & 0\\
         UIJIVE2& $(\tilde{y},\tilde{X},\tilde{Z})$ & $\frac{L_1+1}{N}$ \\
         OLS& Both & $\infty$\\
         \hline
    \end{tabular}\label{omega2 a few estimators}}

    \caption{Some examples of $\omega_1$-class and $\omega_2$-class estimators. The left panel are from $\omega_1$-class and the right panel are from $\omega_2$-class.}
    \label{fig:enter-label}
\end{table}

\subsection{$\omega_2$-class estimator and UIJIVE2}
\begin{figure}
    \centering
    \includegraphics[width = .8\textwidth]{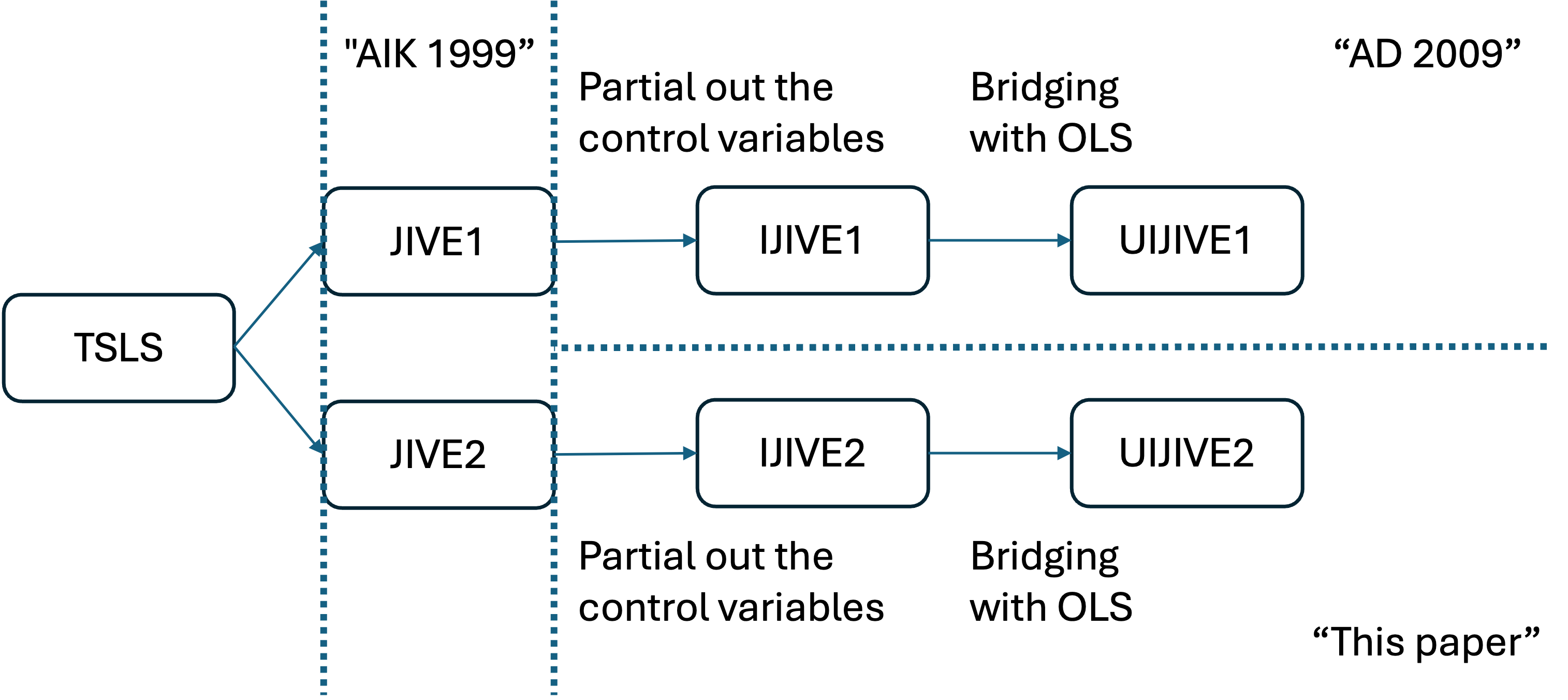}
    \caption{The development of past estimators and new estimators. The papers that develop the corresponding estimators are in quotation marks. AIK 1999 develop JIVE1 and JIVE2 and AD 2009 develop IJIVE and UIJIVE1. This paper develops new estimators IJIVE2 and UIJIVE2. As shown in the figure, the mathematical relationship between {JIVE1, IJIVE1, and UIJIVE1} is similar to that between {JIVE2, IJIVE2, and UIJIVE2}. The relationship between UIJIVE1 and UIJIVE2 (or between IJIVE1 and IJIVE2, which is not of interest to this paper) is analogous to the relationship between {JIVE1 and JIVE2}. Latter estimators (UIJIVE2 and JIVE2) remove the row-wise division in the former estimators (UIJIVE1 and JIVE1).}
    \label{Figure Estimator development history}
\end{figure}

I define $\omega_2$-class of estimator that contains JIVE2 and OLS. The $C$ matrix of $\omega_2$-class estimator is 
\begin{equation*}
    P_Z - D + \omega_2 I.
\end{equation*}
The only difference between $C$ matrices for $\omega_1$- and $\omega_2$- class estimators is the exclusion/inclusion of $(I - D + \omega_1 I)^{-1}$, which is a rowwise division operation (i.e. dividing $i$th row of $P_Z - D + \omega_1 I$ by $1 - D_i + \omega_1$).

$\omega_2 = 0$ corresponds to JIVE2 and $\omega_2 = \infty$ corresponds to OLS. By Corollary \ref{corollary for approximate bias}, the approximate bias of $\omega_2$-class estimators that takes $(\tilde{y}, \tilde{X}, \tilde{Z})$ is proportional to $\omega_2 N - L_1 - 1$. If $\omega_2 = 0$, I obtain a new estimator IJIVE2, corresponding to IJIVE1 from $\omega_1$-class estimators. Selecting $\omega_2 = \frac{L_1+1}{N}$, I obtain approximately unbiased $\omega_2$-class estimator and name it UIJIVE2. Its closed-form expression is:
\begin{equation*}
    (\tilde{X}'(P_{\tilde{Z}} - \tilde{D} + \frac{L_1+1}{N} I)'\tilde{X})^{-1}(\tilde{X}'(P_{\tilde{Z}} - \tilde{D} + \frac{L_1+1}{N} I)'\tilde{y})
\end{equation*}
where the transpose on $P_{\tilde{Z}} - \tilde{D} + \frac{L_1+1}{N} I$ is not necessary as the matrix is symmetric, but I keep it for coherent notation. 

The information of OLS, JIVE2, IJIVE2, and UIJIVE2 are summarized in Table \ref{omega2 a few estimators}. I also depict the parallel relationship between $\omega_1$-class and $\omega_2$-class estimators in the lower half of Figure \ref{Figure Estimator development history}. The development from OLS to JIVE2 to IJIVE2 to UIJIVE2 is analogous to the development of the $\omega_1$-class estimators. Readers can also interpret Figure \ref{Figure Estimator development history} from top to bottom, all the estimators in the bottom remove the rowwise division operations in the estimators above them. The removal of rowwise division is important to the comparison between $\omega_1$- and $\omega_2$-class estimators. The division operation is highly unstable when $D_i$ (or $\tilde{D}_i$) is close to 1, leading undesirable asymptotic property to $\omega_1$-class estimator, but not to $\omega_2$-class estimator. This point will be explained formally in Section \ref{sec asymptotic proofs}.

\begin{table}
 \centering
 
		\renewcommand{\arraystretch}{1.75}
		\begin{tabular}{c c  c  c c }
			\hline
			Name & Consistency & Approximately unbiased& \multicolumn{2}{c}{Many-instrument consistency} \\ 
			&&&Homoskedasticity&Heteroskedasticity \\
			\hline\hline 
			OLS&  & & & \\
			TSLS& \checkmark& & &  \\
			Nagar&\checkmark  & \checkmark$^*$ & &  \\
			JIVE1& \checkmark& & \checkmark& \\
			JIVE2& \checkmark& & \checkmark& \\
			UIJIVE1& \checkmark& \checkmark& \checkmark& \checkmark\\
			UIJIVE2& \checkmark& \checkmark& \checkmark& \checkmark\\
			\hline\hline
		\end{tabular}
		\caption{Properties of different estimators in the approximate bias literature with endogenous regressor. $^*$ means that Nagar estimator's approximately unbiased property is only true under homoskedasticity, but not under heteroskedasticity, see proof in AD 2009.} 
		\label{table estimator summary}
	\end{table}

\section{From UIJIVE to UOJIVE} \label{sec from UIJIVE to UOJIVE}
Figure \ref{Figure Estimator development history} shows that UIJIVEs can be interpreted as an approximately unbiased estimator selected from a class of estimators that bridges IJIVEs and OLS. I apply the same thought process to other classes of estimators that bridge between OLS, TSLS, and JIVEs to obtain new approximately unbiased estimators. Table \ref{Table for a few estimators in extension} summarizes these five classes of estimators (k-class, $\omega_1$-, $\omega_2$-, $\lambda_1$-, and $\lambda_2$-classes) and approximately unbiased estimators. See Appendix \ref{sec other classes of estimators} for details on these estimators. The relationships between five classes of estimators are illustrated in Figure \ref{three classes}.

\begin{table}
	\centering
	\renewcommand{\arraystretch}{2}
	\begin{tabular}{| c | c | c | c |}
		\hline
		Class & Estimators& C & parameter\\
            \hline
		$k$-class & AUK &  $kP_Z +(1-k)I$ & $\frac{N-L-1}{N-K}$\\
		$\lambda_1$-class&TSJI1 &    $(I - \lambda_1 D)(P_Z - \lambda_1 D)$ & $\frac{K-L-1}{K}$ \\
            $\lambda_2$-class&TSJI2 &    $P_Z- \lambda_2 D$ & $\frac{K-L-1}{K}$\\
		$\omega_1$-class & UOJIVE1 &    $(I-D+ \omega_1 I)^{-1}(P_Z-D+ \omega_1 I)$ & $\frac{L+1}{N}$\\
            $\omega_2$-class& UOJIVE2 & $P_Z - D + \omega_2 I$& $\frac{L+1}{N}$ \\
		\hline
	\end{tabular}
	\caption{$D$ is the diagonal matrix of the projection matrix $P_Z = Z(Z'Z)^{-1}Z'$. ``parameter'' column states the parameter value such that the estimator is approximately unbiased.}
	\label{Table for a few estimators in extension}
\end{table}

Note that UIJIVE1 and UIJIVE2 can be considered as special cases of UOJIVE1 and UOJIVE2. UIJIVE can be understood as the following two steps:
\begin{enumerate}
    \item Partial out $W$ from $Z^*$,$X^*$ and $y$. $\tilde{Z} = Z^* - P_WZ^*$. $\tilde{X} = X^* - P_WX^*$. $\tilde{y} = y - P_Wy$.
    \item Set $\omega = \frac{L_1+1}{N}$ and compute the estimate as an $\omega_1$- and $\omega_2$-class estimator using $\tilde{Z}$, $\tilde{X}$ and $\tilde{y}$.
\end{enumerate}
The second step is exactly the UOJIVE for the following model:
\begin{align*}
	\tilde{y} =& \tilde{X}\beta + \tilde{\epsilon} \\
	\tilde{X} = & \tilde{Z}\pi + \tilde{\eta}
\end{align*}
where $\tilde{\epsilon} = \epsilon - P_W \epsilon$ and $\tilde{\eta} = \eta - P_W \eta$.

For the rest of the paper, I will show the asymptotic properties of UOJIVE1 and UOJIVE2 and demonstrate that UOJIVE2 is more robust to outliers than UOJIVE1. The theoretical results can be easily generalized to UIJIVE1, UIJIVE2, TSJI1 and TSJI2. 


\begin{figure}

    \centering
    \subfloat[k, $\lambda_1$- and $\omega_1$-classes estimators]{\includegraphics[width=0.45\textwidth]{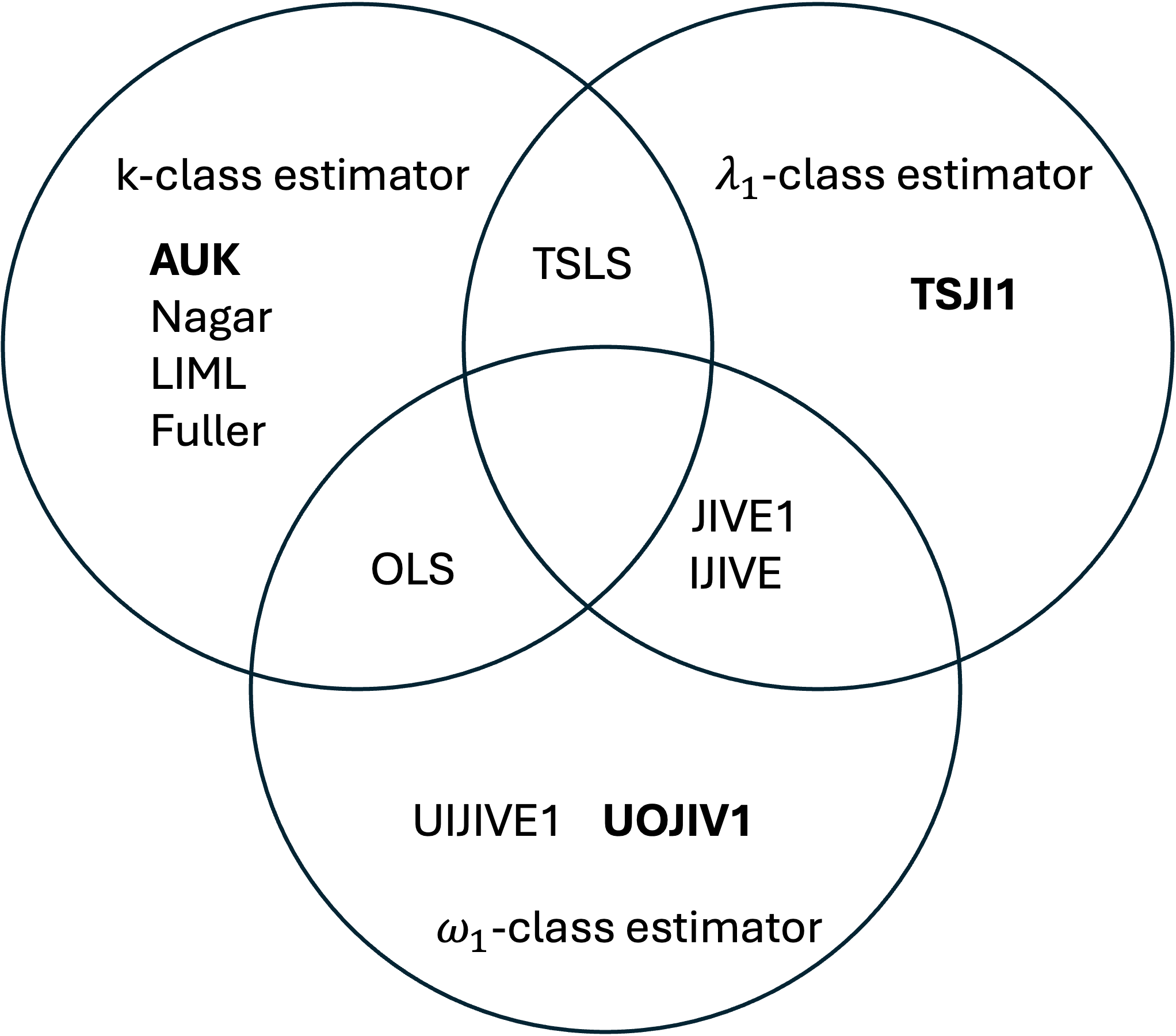}\label{three classes 1}}\hfill
    \subfloat[k, $\lambda_2$- and $\omega_2$-class estimators]{\includegraphics[width=0.45\textwidth]{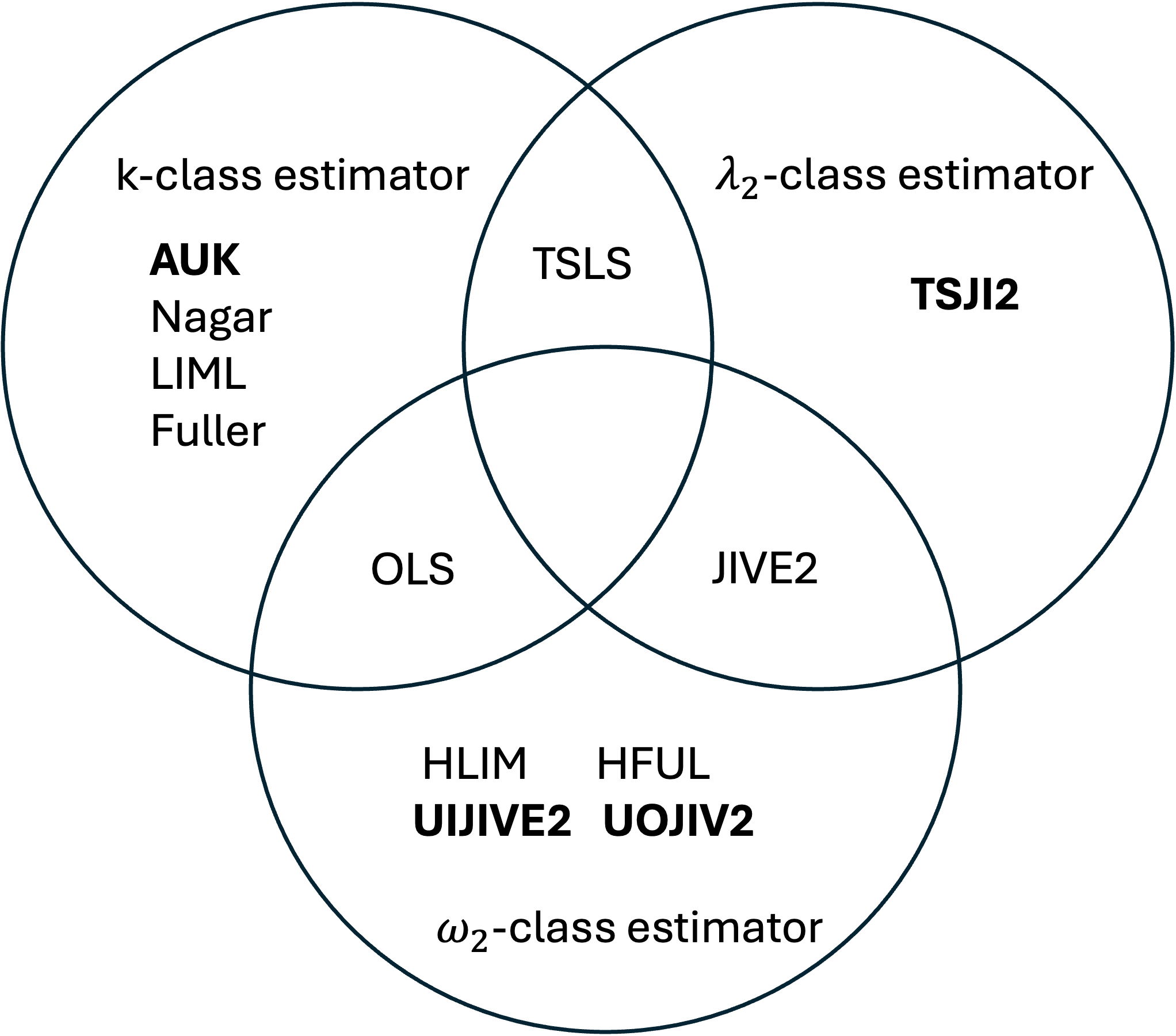}\label{three classes 2}}

    \caption{The two figures illustrate the connections between various classes of estimators. The estimators in bold are those proposed by this paper. The relationship between TSJI1 and TSJI2 is analogous to the relationship between JIVE1 and JIVE2. The same analogy applies to the relationship between UOJIVE1 and UOJIVE2. The left panel contains three classes of estimators whose $C$ matrices have the $CZ=Z$ property whereas $\lambda_2$- and $\omega_2$-class estimators on the right panel do not have this property.}
	\label{three classes}
\end{figure}

\section{Asymptotic property of UOJIVE1 and UOJIVE2} \label{sec asymptotic proofs}

\subsection{Consistency of UOJIVEs under fixed $K$}
Under fixed $K$ and $L$, I show that both UOJIVE1 and UOJIVE2 are consistent as $N \to \infty$ with some assumptions imposed on the moment existence for observable and unobservable variables. It is worth mentioning that Assumption BA is important to the consistency of UOJIVE1, but not to that of UOJIVE2, suggesting that UOJIVE2 is robust to the presence of high leverage points while UOJIVE1 is not. Throughout this subsection, I make the following assumptions

\begin{assumption}\label{regularity}
	Standard regularity assumptions hold for 
	\begin{align*}
		\frac{1}{N} X'Z \overset{p}{\to}& \Sigma_{XZ},\\
		\frac{1}{N} Z'X \overset{p}{\to}& \Sigma_{ZX},\\
		\frac{1}{N} Z'Z \overset{}{\to}& \Sigma_{ZZ},\\
		\frac{1}{\sqrt{N}} Z'\epsilon \overset{d}{\to}& N(0,\sigma_{\epsilon}^2\Sigma_{ZZ}).
	\end{align*}
\end{assumption}

\begin{assumption} \label{finite XX}
    $E[\lVert Z_i'X_i\rVert ^{1+\delta_1}]$ and $E[\lVert \eta_i'X_i\rVert ^{1+\delta_1}]$ are finite for some $\delta_1 > 0$. The existence of these two expectations jointly implies that $E[\lVert X_i'X_i\rVert^{1+\delta_1}]$ is finite. 
\end{assumption}

\begin{assumption}\label{finite Xepsilon lower}
	$E[\lVert Z_i'\epsilon_i\rVert^{1+\delta_1}]$ and $E[\lVert \eta_i'\epsilon_i\rVert^{1+\delta_1}]$ are finite, for some $\delta_1 > 0$. The existence of these two expectations jointly implies that $E[\lVert X_i'\epsilon_i\rVert^{1+\delta_1}]$ is finite.
\end{assumption}

Denote $\frac{L+1}{N}$ as $\omega$. Recall that UOJIVE1's matrix expression is
\begin{align*}
    \hat{\beta}_{UOJIVE1} = &(X'(P_Z - D +\omega I)(I -  D + \omega I)^{-1}X)^{-1}(X'(P_Z - D +\omega I)(I - D + \omega I)^{-1}y)\\
	=& \beta + (X'(P_Z - D +\omega I)(I -  D + \omega I)^{-1}X)^{-1}(X'(P_Z - D +\omega I)(I -  D + \omega I)^{-1}\epsilon)
\end{align*}

\begin{lemma} \label{prop UOJIVE1 consistency A}
    Under Assumption BA, \ref{regularity}, and \ref{finite XX}, $\frac{1}{N}X'(P_Z - D +\omega I)(I -  D + \omega I)^{-1}X \overset{p}{\to} H$, where $H = \Sigma_{XZ}\Sigma_{ZZ}^{-1}\Sigma_{ZX}$.
\end{lemma}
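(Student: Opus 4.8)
The plan is to reduce UOJIVE1's weighting matrix to a transparent form, isolate a term that converges by Assumption \ref{regularity}, and show the remainders vanish. Write $\Delta := I - D + \omega I$ for the diagonal matrix in the inverse; under Assumption BA its $i$-th entry satisfies $1 - D_i + \omega \ge m > 0$ for $N$ large. Since $P_Z - D + \omega I = -(I - P_Z) + (I - D + \omega I) = -M_Z + \Delta$, the matrix in the lemma factors as $(P_Z - D + \omega I)\Delta^{-1} = I - M_Z\Delta^{-1}$, so it differs from $P_Z$ by $-M_Z(\Delta^{-1} - I)$, where $\Delta^{-1} - I$ is diagonal with $i$-th entry $(D_i - \omega)/(1 - D_i + \omega)$. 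Using $M_Z Z = 0$ (hence $M_Z X = M_Z\eta$ and $X'M_Z = \eta'M_Z$) together with $X = Z\pi + \eta$, this yields the decomposition
\begin{equation*}
\frac1N X'(P_Z - D + \omega I)\Delta^{-1} X = \frac1N X'P_Z X - \frac1N \eta' M_Z(\Delta^{-1} - I) Z\pi - \frac1N \eta' M_Z(\Delta^{-1} - I)\eta ,
\end{equation*}
and it suffices to show the first term converges to $H$ and the two $\eta$-terms converge to zero in probability.

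The first term is $(\frac1N X'Z)(\frac1N Z'Z)^{-1}(\frac1N Z'X) \overset{p}{\to} \Sigma_{XZ}\Sigma_{ZZ}^{-1}\Sigma_{ZX} = H$ by Assumption \ref{regularity} and the continuous mapping theorem. For the second term, Assumption BA makes $\|\Delta^{-1} - I\|_{op}$ (hence $\|M_Z(\Delta^{-1}-I)\|_{op}$) bounded uniformly in $N$, so $\frac1N \eta'M_Z(\Delta^{-1}-I)Z\pi$ is a mean-zero linear form in the i.i.d.\ rows of $\eta$ with deterministic coefficients; its expected squared Frobenius norm is $O(N^{-2})\,\|M_Z(\Delta^{-1}-I)\|_{op}^2\,\|Z\pi\|_F^2 = O(N^{-2})\cdot O(N)$, where $\|Z\pi\|_F^2 = tr(\pi'Z'Z\pi) = O(N)$ by Assumption \ref{regularity} and the variance bound uses $E\|\eta_i\|^2 < \infty$ from Assumption \ref{finite XX}; Chebyshev's inequality then gives convergence to zero.

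The crux will be the quadratic form $\frac1N \eta'A_N\eta$ with $A_N := M_Z(\Delta^{-1}-I)$. Its mean is $\frac1N tr(A_N)\,\Sigma_\eta$ with $\Sigma_\eta := E[\eta_i'\eta_i]$; since $\Delta^{-1}$ is diagonal, $tr(A_N) = tr(M_Z\Delta^{-1}) - tr(M_Z) = \sum_i \frac{1-D_i}{1-D_i+\omega} - (N-K) = K - \sum_i \frac{\omega}{1-D_i+\omega}$, and under Assumption BA the last sum is at most $N\omega/m = (L+1)/m$, so $\frac1N tr(A_N)\to 0$ and the mean vanishes. For concentration I would split $\eta'A_N\eta$ into its diagonal part $\sum_i (A_N)_{ii}\eta_i'\eta_i$ and off-diagonal part $\sum_{i\ne j}(A_N)_{ij}\eta_i'\eta_j$. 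Divided by $N$, the diagonal part tends to zero by a weighted weak law of large numbers (the weights $(A_N)_{ii}$ are uniformly bounded under Assumption BA and $E\|\eta_i'\eta_i\|^{1+\delta_1}<\infty$ by Assumption \ref{finite XX}) combined with $\frac1N\sum_i(A_N)_{ii} = \frac1N tr(A_N)\to 0$. The off-diagonal part has mean zero, and since $E\langle\eta_i'\eta_j,\,\eta_k'\eta_l\rangle$ vanishes unless $\{i,j\}=\{k,l\}$ (this needs only $E\|\eta_i\|^2<\infty$), its expected squared Frobenius norm is $O(N^{-2}\|A_N\|_F^2)$; as $\|A_N\|_F^2 = tr(M_Z(\Delta^{-1}-I)^2) \le \|\Delta^{-1}-I\|_{op}^2(N-K) = O(N)$ under Assumption BA, this is $O(N^{-1})$ and Chebyshev finishes it. Collecting the three limits gives $\frac1N X'(P_Z - D + \omega I)\Delta^{-1}X \overset{p}{\to} H$.

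The main obstacle will be this last step---the weak law for the $\eta$-quadratic form under only the $1+\delta_1$ moment bound of Assumption \ref{finite XX}---and it is also exactly where Assumption BA is indispensable: without a uniform lower bound on $1 - D_i + \omega$, a leverage $D_i$ within $O(\omega)$ of $1$ contributes an entry of order $\omega^{-1} = O(N)$ to $\Delta^{-1} - I$, so neither $\frac1N tr(A_N)$ nor $\frac1N\|A_N\|_F^2$ need vanish and the probability limit is no longer $H$. This is the precise sense in which the rowwise division inside $\omega_1$-class estimators is fragile, in contrast with UOJIVE2, whose $C$ matrix omits the factor $(I - D + \omega I)^{-1}$ and therefore requires no such assumption.
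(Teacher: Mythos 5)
Your proof is correct, but it takes a genuinely different route from the paper's. The paper's proof splits $(P_Z-D+\omega I)(I-D+\omega I)^{-1}$ into the three pieces $P_Z(I-D+\omega I)^{-1}$, $-D(I-D+\omega I)^{-1}$ and $\omega(I-D+\omega I)^{-1}$, and controls each one elementwise: it compares $\frac1N\sum_i \frac{Z_i'X_i}{1-D_i+\omega}$ with $\frac1N Z'X$ and kills the remainders using $\sum_i D_i = K$, Assumption BA, and the max-order result of Lemma~\ref{lemma moment to max order} applied under Assumption~\ref{finite XX} (so the $1+\delta_1$ moments of $Z_i'X_i$ and $X_i'X_i$ do all the work). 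You instead factor the weighting matrix exactly as $P_Z - M_Z(\Delta^{-1}-I)$ and exploit $M_Z Z=0$ so that the deviation from $\frac1N X'P_ZX$ involves only $\eta$, which you then handle by Chebyshev bounds on a linear form and a diagonal/off-diagonal split of the quadratic form $\frac1N\eta'M_Z(\Delta^{-1}-I)\eta$, with Assumption BA entering through operator and trace bounds on $\Delta^{-1}-I$. Your route buys an exact identification of the $X'P_ZX$ leading term and a sharper picture of precisely where BA is indispensable (via $\frac1N\operatorname{tr}(A_N)\to 0$ and $\lVert A_N\rVert_F^2=O(N)$), at the cost of leaning on the i.i.d.\ mean-zero, finite-covariance structure of $\eta$ and on $Z$ being nonrandom; the paper's route is more elementary and recycles the same max-order lemma across all subsequent lemmas with assumptions stated directly on $Z_i'X_i$ and $X_i'X_i$. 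One attribution slip worth fixing: $E\lVert\eta_i\rVert^2<\infty$ and the moment you cite for $\eta_i'\eta_i$ do not literally come from Assumption~\ref{finite XX} (which bounds $\lVert Z_i'X_i\rVert$ and $\lVert\eta_i'X_i\rVert$); they come from the model setup's assumption that $(\epsilon_i,\eta_i)$ are i.i.d.\ with a finite covariance matrix, and indeed your weighted weak law for the diagonal part needs only a finite first moment of $\eta_i'\eta_i$ (i.e.\ finite second moments of $\eta_i$), so the argument stands once the citation is corrected.
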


\begin{lemma} \label{prop UOJIVE1 consistency B}
    Under Assumption BA, \ref{regularity}, and \ref{finite Xepsilon lower}, $\frac{1}{N}X'(P_Z - D +\omega I)(I -  D + \omega I)^{-1}\epsilon \overset{p}{\to} 0$.
\end{lemma}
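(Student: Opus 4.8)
The plan is to split $P_Z-D+\omega I$ into the ``jackknife core'' $P_Z-D$ and the small perturbation $\omega I$, and to split $X$ into its signal $Z\pi$ and its noise $\eta$, so that the whole expression decomposes into four pieces, each of which is $o_p(1)$ for a different reason. Write $\Delta := (I-D+\omega I)^{-1}$, the diagonal matrix with $i$th entry $(1-D_i+\omega)^{-1}$, which is bounded by $1/m$ under Assumption BA. Then
\[
\tfrac1N X'(P_Z-D+\omega I)\Delta\,\epsilon \;=\; \tfrac1N X'(P_Z-D)\Delta\,\epsilon \;+\; \tfrac{\omega}{N}X'\Delta\,\epsilon .
\]
The last term equals $\tfrac{L+1}{N^2}\sum_i (1-D_i+\omega)^{-1}X_i'\epsilon_i$, whose norm is at most $\tfrac{L+1}{mN}\cdot\tfrac1N\sum_i\lVert X_i'\epsilon_i\rVert$; since $\tfrac1N\sum_i\lVert X_i'\epsilon_i\rVert = O_p(1)$ by Assumption \ref{finite Xepsilon lower} and Markov's inequality, while $\tfrac{L+1}{mN}\to 0$, this term is $o_p(1)$.

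For the remaining term, set $A := (P_Z-D)\Delta$. The structural fact that makes the argument work is that $A$ has a zero diagonal: by the definition of $D$, $(P_Z)_{ii}=D_i$, so $(P_Z-D)_{ii}=0$ and hence $A_{ii}=0$. Substituting $X=Z\pi+\eta$ gives $\tfrac1N X'A\epsilon = \tfrac1N\pi'Z'A\epsilon + \tfrac1N\eta'A\epsilon$. Using $Z'P_Z=Z'$ one has $Z'A = Z'(I-D)\Delta$, i.e.\ $Z'$ with its $i$th row reweighted by $(1-D_i)/(1-D_i+\omega) = 1-\omega/(1-D_i+\omega)$, so $\tfrac1N\pi'Z'A\epsilon = \tfrac1N\pi'Z'\epsilon - \tfrac1N\sum_i \tfrac{\omega}{1-D_i+\omega}\pi'Z_i'\epsilon_i$. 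The first piece is $O_p(N^{-1/2})$ by the CLT in Assumption \ref{regularity}, and the second has norm at most $\tfrac{\omega}{m}\lVert\pi\rVert\cdot\tfrac1N\sum_i\lVert Z_i'\epsilon_i\rVert = o_p(1)$, again by Assumption \ref{finite Xepsilon lower}.

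The main work --- and the main obstacle --- is the cross term $\tfrac1N\eta'A\epsilon$, which I would handle with an $L^2$ bound. Because $A$ has zero diagonal and the pairs $(\epsilon_i,\eta_i)$ are i.i.d.\ with mean zero, $\eta_i$ is independent of $\epsilon_j$ for $i\ne j$, so $E[\eta'A\epsilon]=0$. Enumerating the surviving index contractions in the second moment (only the ``$i=i',\,j=j'$'' pairing and the ``$i=j',\,j=i'$'' swap survive, since any observation appearing in only one factor forces a zero), one obtains
\[
E\big[(\eta'A\epsilon)_k(\eta'A\epsilon)_l\big] \;=\; \sigma_\epsilon^2\,(\sigma_\eta)_{kl}\,tr(AA') \;+\; (\sigma_{\eta\epsilon})_k(\sigma_{\eta\epsilon})_l\,tr(A^2).
\]
It then remains to bound these traces. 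Since $A'=\Delta(P_Z-D)$, cyclicity gives $tr(AA') = tr\big(\Delta^2(P_Z-D)^2\big)$; as $(P_Z-D)^2$ is PSD and $\Delta^2$ is diagonal with entries $\le 1/m^2$, this is $\le \tfrac1{m^2}tr\big((P_Z-D)^2\big) = \tfrac1{m^2}\big(K-\sum_i D_i^2\big)\le K/m^2$ --- and this is precisely where Assumption BA cannot be dispensed with, since without it near-unit leverages would let $\Delta$, and hence $tr(AA')$, blow up. Also $|tr(A^2)|\le tr(AA')\le K/m^2$ by Cauchy--Schwarz on the Frobenius inner product. Hence $E\lVert\tfrac1N\eta'A\epsilon\rVert^2 = O(K/N^2)\to 0$ for fixed $K$, so $\tfrac1N\eta'A\epsilon\overset{p}{\to}0$. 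Collecting the four pieces proves the lemma; the same decomposition, minus the factor $\Delta$, is what lets the analogue for UOJIVE2 go through without Assumption BA.
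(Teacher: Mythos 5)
Your proof is correct, but it takes a genuinely different route from the paper's. The paper keeps $X$ intact and splits the matrix into the three pieces $P_Z(I-D+\omega I)^{-1}$, $D(I-D+\omega I)^{-1}$ and $\omega(I-D+\omega I)^{-1}$, bounding each one by $\tfrac{K}{m}$ or $\tfrac{N\omega}{m}$ times $\max_i\lVert Z_i'\epsilon_i\rVert$ or $\max_i\lVert X_i'\epsilon_i\rVert$ and invoking the max-order result (Lemma~\ref{lemma moment to max order}) that the $(1+\delta_1)$-moment bounds in Assumption~\ref{finite Xepsilon lower} deliver $\tfrac1N\max_i\lVert\cdot\rVert=o_P(1)$; no signal/noise decomposition of $X$ and no variance calculation appear anywhere. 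You instead peel off $\omega I$, substitute $X=Z\pi+\eta$, exploit the zero diagonal of $P_Z-D$, and kill the cross term $\tfrac1N\eta'(P_Z-D)\Delta\epsilon$ by a Chebyshev argument with the trace bounds $tr(AA')\le K/m^2$ and $|tr(A^2)|\le tr(AA')$ (your second-moment identity is exact here because the zero diagonal and independence across observations mean only within-observation second moments of $(\epsilon_i,\eta_i)$ ever enter). What your route buys: it needs only first moments of $\lVert Z_i'\epsilon_i\rVert$, $\lVert X_i'\epsilon_i\rVert$ and the error covariance matrix rather than the full force of the max-order lemma, it isolates exactly where Assumption BA is used (bounding $\Delta$ inside the traces), and it connects directly to the zero-diagonal arguments of the JIVE/HLIM literature, foreshadowing why the UOJIVE2 analogue goes through without BA. What the paper's route buys: a single, uniform piece of machinery (the max-order lemma) that is reused verbatim for Lemma~\ref{prop UOJIVE1 consistency A}, Lemma~\ref{prop asymptotic variance UOJIVE1 B} and the UOJIVE2 lemmas, at the cost of leaning on the $(1+\delta_1)$-moment conditions throughout. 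Both arguments use only the stated hypotheses, so your proposal stands as a valid alternative proof.
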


The proofs for Lemma~\ref{prop UOJIVE1 consistency A} and \ref{prop UOJIVE1 consistency B} are collected in Appendix \ref{appendix asymptotic proofs}. The importance of Assumption BA stems from the presence of $\{D_i\}_{i=1}^N$ in the denominators. UOJIVE2 does not have the same problem since $(I -  D + \omega I)^{-1}$ does not appear in its analytical form.

\begin{theorem}
    Under Assumption BA, \ref{regularity}, \ref{finite XX}, \ref{finite Xepsilon lower}, $\hat{\beta}_{UOJIVE1} \overset{p}{\to} \beta$.
\end{theorem}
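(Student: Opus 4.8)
The statement to prove is that $\hat{\beta}_{UOJIVE1} \overset{p}{\to} \beta$ under Assumptions BA, \ref{regularity}, \ref{finite XX}, and \ref{finite Xepsilon lower}. The plan is to combine Lemma~\ref{prop UOJIVE1 consistency A} and Lemma~\ref{prop UOJIVE1 consistency B} via the continuous mapping theorem and Slutsky's theorem, using the explicit decomposition already displayed in the excerpt,
\begin{equation*}
    \hat{\beta}_{UOJIVE1} = \beta + \left(\tfrac{1}{N}X'(P_Z - D +\omega I)(I -  D + \omega I)^{-1}X\right)^{-1}\left(\tfrac{1}{N}X'(P_Z - D +\omega I)(I -  D + \omega I)^{-1}\epsilon\right).
\end{equation*}
By Lemma~\ref{prop UOJIVE1 consistency A}, the first bracketed factor converges in probability to $H = \Sigma_{XZ}\Sigma_{ZZ}^{-1}\Sigma_{ZX}$; by Lemma~\ref{prop UOJIVE1 consistency B}, the second converges in probability to $0$.

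The one nontrivial point is inverting the first factor. I would first argue that $H$ is invertible: $\Sigma_{ZZ}$ is positive definite (it is the limit of $Z'Z/N$ and $Z$ has full column rank by the relevance condition $\pi \neq 0$ together with the overidentification setup), and $\Sigma_{ZX}$ has full column rank $L$ because $X = Z\pi + \eta$ with $\pi$ of full column rank, so $\Sigma_{ZX} = \Sigma_{ZZ}\pi$ is full-rank; hence $H = \Sigma_{XZ}\Sigma_{ZZ}^{-1}\Sigma_{ZX}$ is an $L\times L$ positive definite matrix and invertible. Since matrix inversion is continuous on the open set of invertible matrices, the continuous mapping theorem gives $\left(\tfrac{1}{N}X'(P_Z - D +\omega I)(I -  D + \omega I)^{-1}X\right)^{-1} \overset{p}{\to} H^{-1}$.

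Finally, apply Slutsky's theorem: the product of a sequence converging in probability to $H^{-1}$ and a sequence converging in probability to $0$ converges in probability to $H^{-1}\cdot 0 = 0$. Therefore $\hat{\beta}_{UOJIVE1} - \beta \overset{p}{\to} 0$, i.e. $\hat{\beta}_{UOJIVE1} \overset{p}{\to} \beta$.

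I do not anticipate a serious obstacle here, since the substantive work has been pushed into Lemmas~\ref{prop UOJIVE1 consistency A} and \ref{prop UOJIVE1 consistency B} (which is precisely where Assumption BA is invoked, to control the entries of $(I - D + \omega I)^{-1}$ whose denominators $1 - D_i + \omega$ could otherwise degenerate). The only item requiring a word of care in this theorem is the invertibility/positive-definiteness of the limiting matrix $H$, which justifies applying the continuous mapping theorem to the inverse; everything else is a routine Slutsky argument.
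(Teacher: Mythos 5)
Your proposal is correct and follows exactly the route the paper intends: the paper's own proof simply states that the result is immediate from Lemma~\ref{prop UOJIVE1 consistency A} and Lemma~\ref{prop UOJIVE1 consistency B}, which is the Slutsky/continuous-mapping argument you spell out. Your added remark on the invertibility of $H$ (via $\Sigma_{ZX}=\Sigma_{ZZ}\pi$ having full column rank) fills in a detail the paper leaves implicit, but it is not a different approach.
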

\begin{proof}
    With Lemma~\ref{prop UOJIVE1 consistency A} and \ref{prop UOJIVE1 consistency B}, proving the theorem is trivial.
\end{proof}

Now, we turn to analyzing the consistency of UOJIVE2. The steps for proving its consistency are similar to what we have done for UOJIVE1.

\begin{lemma} \label{prop UOJIVE2 consistency A}
    Under Assumption \ref{regularity} and \ref{finite XX}, $\frac{1}{N}X'(P_Z - D +\omega I)X \overset{p}{\to} H$, where $H = \Sigma_{XZ}\Sigma_{ZZ}^{-1}\Sigma_{ZX}$.
\end{lemma}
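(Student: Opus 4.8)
The plan is to use the linearity of the weight matrix, $\frac1N X'(P_Z-D+\omega I)X=\frac1N X'P_ZX-\frac1N X'DX+\omega\,\frac1N X'X$, and handle the three pieces in turn. The projection piece is immediate: writing $\frac1N X'P_ZX=\big(\tfrac1N X'Z\big)\big(\tfrac1N Z'Z\big)^{-1}\big(\tfrac1N Z'X\big)$ and invoking Assumption~\ref{regularity} together with the continuous mapping theorem (legitimate since $\Sigma_{ZZ}$ is invertible) gives $\frac1N X'P_ZX\overset{p}{\to}\Sigma_{XZ}\Sigma_{ZZ}^{-1}\Sigma_{ZX}=H$. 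The ridge piece is also immediate: $\omega=(L+1)/N$, and Assumption~\ref{finite XX} supplies a weak law for the independent array $\{X_i'X_i\}_{i=1}^N$, whence $\frac1N X'X-\big(\tfrac1N\pi'Z'Z\pi+E[\eta_i'\eta_i]\big)\overset{p}{\to}0$ with the centering term bounded, so $\frac1N X'X=O_p(1)$ and $\omega\,\frac1N X'X=\frac{L+1}{N}\,O_p(1)\overset{p}{\to}0$.

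The content of the lemma is in the remaining piece, $\frac1N X'DX\overset{p}{\to}0$. Substituting $X=Z\pi+\eta$, I would split $X'DX=\pi'Z'DZ\pi+\pi'Z'D\eta+(\pi'Z'D\eta)'+\eta'D\eta$ and bound the four terms using $\sum_{i=1}^N D_i=tr(P_Z)=K$. For the deterministic term, $\big\|\tfrac1N\pi'Z'DZ\pi\big\|\le\|\pi\|^2\,\tfrac1N\,tr(Z'DZ)=\|\pi\|^2\,\tfrac1N\sum_i D_i\|Z_i\|^2\le\|\pi\|^2K\,\frac{\max_i\|Z_i\|^2}{N}$. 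The $\eta$-quadratic term has mean $\tfrac KN E[\eta_i'\eta_i]\to0$ and entrywise variance of order $\sum_i D_i^2/N^2\le K/N^2$ (using $\sum_i D_i^2\le\sum_i D_i=K$ and the fourth-order moment bound on $\eta$ implied by Assumption~\ref{finite XX}); the cross term has mean zero and variance of order $K\max_i\|Z_i\|^2/N^2$. A Chebyshev argument disposes of the two random terms, so $\frac1N X'DX\overset{p}{\to}0$ provided $\max_i\|Z_i\|^2/N\to0$, i.e. $\max_i D_i\to0$.

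That leverage condition is the single ingredient I would isolate, and the only place I expect any friction. It is \emph{not} implied by Assumption BA --- which this lemma deliberately avoids --- since BA merely keeps $D_i$ away from $1$; it is much weaker, excluding only a lone observation whose leverage is of exact order one while $\sum_i D_i$ stays bounded. Under the paper's nonrandom-$Z$ convention it follows from regularity already assumed: $D_i=Z_i(Z'Z)^{-1}Z_i'\le\|Z_i\|^2\|(Z'Z)^{-1}\|$ with $\|(Z'Z)^{-1}\|=\tfrac1N\big\|(\tfrac1N Z'Z)^{-1}\big\|=O(1/N)$ because $\tfrac1N Z'Z\to\Sigma_{ZZ}\succ0$, while $\max_i\|Z_i\|^2=o(N)$ because $\tfrac1N\sum_i\|Z_i\|^2=\tfrac1N tr(Z'Z)$ converges and the uniform moment control on $Z_i$ built into Assumptions~\ref{regularity}--\ref{finite XX} rules out a dominant observation. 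Collecting the three limits yields $\frac1N X'(P_Z-D+\omega I)X\overset{p}{\to}H-0+0=H$. This is exactly the step where UOJIVE2 improves on UOJIVE1: the quantities $1-D_i+\omega$ never appear in a denominator, so no lower bound on $1-D_i$ (hence no Assumption BA) is needed.
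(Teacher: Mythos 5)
Your decomposition into $\frac1N X'P_ZX-\frac1N X'DX+\omega\frac1N X'X$, and your treatment of the first and third pieces, coincide with the paper's proof. The gap is in the middle piece. You substitute $X=Z\pi+\eta$ and run a mean--variance/Chebyshev argument that ultimately needs $\max_i\lVert Z_i\rVert^2/N\to0$ (equivalently $\max_i D_i\to0$), and you then assert this follows from Assumption~\ref{regularity}. It does not: with nonrandom $Z$, $\frac1N Z'Z\to\Sigma_{ZZ}\succ0$ is perfectly compatible with a single row of squared norm proportional to $N$ (e.g.\ $K=1$, $Z_1=\sqrt{N/2}$, $Z_i=1$ for $i\ge2$ gives $\Sigma_{ZZ}=3/2$ and $D_1\to1/3$), so convergence of $\frac1N\mathrm{tr}(Z'Z)$ never forces $\max_i\lVert Z_i\rVert^2=o(N)$; and the ``uniform moment control on $Z_i$'' you appeal to is not something Assumptions~\ref{regularity}--\ref{finite XX} state about the nonrandom $Z_i$ themselves (a high-leverage row can even satisfy $Z_i\pi=0$). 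Your aside about Assumption BA is also backwards: $\max_i D_i\to0$ \emph{implies} $D_i\le1-m$ eventually, so it is a strictly stronger leverage restriction than BA, not a weaker one --- and importing any leverage restriction here defeats the stated purpose of this lemma, namely that UOJIVE2, unlike UOJIVE1, needs no such condition. A secondary over-requirement: your variance bound for $\frac1N\eta'D\eta$ uses fourth moments of $\eta_i$, which Assumption~\ref{finite XX} (only $1+\delta_1$ moments of $\eta_i'X_i$) does not supply.

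The paper's own argument shows the extra machinery is unnecessary: since $D_i\ge0$ and $\sum_{i=1}^N D_i=\mathrm{tr}(P_Z)=K$, one has $\bigl\lVert\frac1N X'DX\bigr\rVert=\bigl\lVert\frac1N\sum_i D_iX_i'X_i\bigr\rVert\le\frac{K}{N}\max_i\lVert X_i'X_i\rVert$, and Lemma~\ref{lemma moment to max order} with $r=1$ under Assumption~\ref{finite XX} gives $\frac1N\max_i\lVert X_i'X_i\rVert=o_P(1)$. This uses only the trace identity and the moment condition already assumed --- no decomposition of $X$, no fourth moments, and no condition whatsoever on individual leverages. To repair your proof, replace the Chebyshev step for $\frac1N X'DX$ by this max-based bound (or apply it directly to $D_iX_i'X_i$ without splitting off $\eta$); as written, the step ``provided $\max_i D_i\to0$'' is a genuine hole.
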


\begin{lemma} \label{prop UOJIVE2 consistency B}
    Under Assumption \ref{regularity} and \ref{finite Xepsilon lower}, $\frac{1}{N}X'(P_Z - D +\omega I)\epsilon \overset{p}{\to} 0$.
\end{lemma}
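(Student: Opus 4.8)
The plan is to substitute $X = Z\pi + \eta$ into $\frac{1}{N}X'(P_Z - D + \omega I)\epsilon$, split it as
\[
\frac{1}{N}\pi'Z'(P_Z - D + \omega I)\epsilon \;+\; \frac{1}{N}\eta'(P_Z - D + \omega I)\epsilon ,
\]
and within each summand separate the $P_Z$, $D$, and $\omega I$ contributions. Because no $(I - D + \omega I)^{-1}$ factor appears here, Assumption BA is not needed; the only facts about $D$ that enter are $0 \le D_i \le 1$ and $\sum_{i=1}^{N} D_i = tr(P_Z) = K$, which is fixed.

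The $Z\pi$ summand is easy. Since $Z'P_Z = Z'$,
\[
\frac{1}{N}\pi'Z'(P_Z - D + \omega I)\epsilon \;=\; \frac{1+\omega}{N}\,\pi'Z'\epsilon \;-\; \frac{1}{N}\pi'Z'D\epsilon .
\]
By Assumption~\ref{regularity}, $\frac{1}{\sqrt N}Z'\epsilon = O_p(1)$, so the first term is $o_p(1)$ (and $\omega \to 0$). For the second I would pass to $L^1$: as $Z$ is nonrandom and $E\lvert\epsilon_i\rvert < \infty$, $E\lVert \frac{1}{N}\pi'Z'D\epsilon\rVert \le \frac{E\lvert\epsilon_1\rvert}{N}\sum_i D_i \lVert Z_i\pi\rVert$, and Cauchy--Schwarz with $\sum_i D_i^2 \le \sum_i D_i = K$ and $\sum_i \lVert Z_i\pi\rVert^2 = tr(\pi'Z'Z\pi) = O(N)$ gives $\sum_i D_i\lVert Z_i\pi\rVert = O(\sqrt N)$, so this term is $O(N^{-1/2})$ in $L^1$, hence $o_p(1)$.

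For the $\eta$ summand, write $\frac{1}{N}\eta'(P_Z - D + \omega I)\epsilon = \frac{1}{N}\eta'P_Z\epsilon - \frac{1}{N}\eta'D\epsilon + \frac{\omega}{N}\eta'\epsilon$. The last term is $\frac{L+1}{N}\cdot\frac{1}{N}\eta'\epsilon \to 0$ since $\frac{1}{N}\eta'\epsilon \overset{p}{\to}\sigma_{\eta\epsilon}$ by the WLLN (Assumption~\ref{finite Xepsilon lower}). For $\frac{1}{N}\eta'P_Z\epsilon$ I would factor $\eta'P_Z\epsilon = \big(\frac{1}{\sqrt N}\eta'Z\big)\big(\frac{Z'Z}{N}\big)^{-1}\big(\frac{1}{\sqrt N}Z'\epsilon\big)$, note $\frac{1}{\sqrt N}\eta'Z = O_p(1)$ (mean zero with variances bounded by $tr(\sigma_\eta^2)\,tr(\frac{Z'Z}{N})$, using finite second moments of $\eta_i$ and Assumption~\ref{regularity}), $\big(\frac{Z'Z}{N}\big)^{-1} \to \Sigma_{ZZ}^{-1}$, and $\frac{1}{\sqrt N}Z'\epsilon = O_p(1)$; hence $\frac{1}{N}\eta'P_Z\epsilon = \frac{1}{N}O_p(1) = o_p(1)$. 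Finally, for $\frac{1}{N}\eta'D\epsilon = \frac{1}{N}\sum_i D_i\eta_i'\epsilon_i$, center $\eta_i'\epsilon_i = \sigma_{\eta\epsilon} + \zeta_i$ with $E\zeta_i = 0$ and $E\lVert\zeta_i\rVert < \infty$ (Assumption~\ref{finite Xepsilon lower}); then $\frac{1}{N}\sum_i D_i\eta_i'\epsilon_i = \frac{K}{N}\sigma_{\eta\epsilon} + \frac{1}{N}\sum_i D_i\zeta_i$, where the first term vanishes and $E\lVert\frac{1}{N}\sum_i D_i\zeta_i\rVert \le \frac{K}{N}E\lVert\zeta_1\rVert \to 0$. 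Summing the pieces yields $\frac{1}{N}X'(P_Z - D + \omega I)\epsilon \overset{p}{\to} 0$.

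The step requiring the most care is the control of the $D$-terms $\frac{1}{N}\pi'Z'D\epsilon$ and $\frac{1}{N}\eta'D\epsilon$: for the latter only $(1+\delta_1)$-th moments of $\eta_i'\epsilon_i$ are available, so a Chebyshev/variance bound is not directly usable, and the $L^1$ argument above — which exploits only $\sum_i D_i = K$ — is what makes it go through. (Alternatively one can handle $\frac{1}{N}\eta'(P_Z - D)\epsilon$ jointly, bounding $E\lVert\eta'(P_Z - D)\epsilon\rVert_F^2$ by a constant multiple of $\sum_{i\ne j}(P_Z)_{ij}^2 = K - \sum_i D_i^2 \le K$, using that $P_Z - D$ has zero diagonal so no $i=j$ terms arise and only second moments of $(\epsilon_i,\eta_i)$ are needed.) This is exactly where UOJIVE2 is better behaved than UOJIVE1: the diagonal $D$ is harmless once $\sum_i D_i = K$ is invoked, whereas in Lemmas~\ref{prop UOJIVE1 consistency A}--\ref{prop UOJIVE1 consistency B} the extra factor $(I - D + \omega I)^{-1}$ puts $1 - D_i + \omega$ in denominators and forces Assumption BA.
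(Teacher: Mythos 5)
Your proof is correct, and it shares the paper's basic skeleton (split $C=P_Z-D+\omega I$ into its three pieces, kill the $P_Z$ piece with Assumption \ref{regularity}, the $D$ piece with $\sum_i D_i=\mathrm{tr}(P_Z)=K$ fixed, and the $\omega$ piece with $\omega=O(1/N)$, all without Assumption BA), but it diverges in two ways. First, the paper never substitutes $X=Z\pi+\eta$: it treats $\frac{1}{N}X'P_Z\epsilon$, $\frac{1}{N}X'D\epsilon$ and $\frac{\omega}{N}X'\epsilon$ directly, so your six-term decomposition, the $\eta'P_Z\epsilon$ factorization, and the separate $\pi'Z'D\epsilon$ Cauchy--Schwarz step are extra work relative to the paper's three-line argument. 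Second, and more substantively, the $D$-term is controlled differently: the paper bounds $\lVert\frac{1}{N}\sum_i D_iX_i'\epsilon_i\rVert\le\frac{K}{N}\max_i\lVert X_i'\epsilon_i\rVert$ and invokes Lemma \ref{lemma moment to max order} (this is exactly where the $(1+\delta_1)$-moment in Assumption \ref{finite Xepsilon lower} is used, with $r=1$), whereas you use expectation/Markov bounds that exploit only $\sum_i D_i=K$ (and $\sum_i D_i^2\le K$) together with first moments of $\epsilon_i$ and $\eta_i'\epsilon_i$. Your route is therefore slightly more economical in moment requirements for that step (first moments suffice, and your parenthetical second-moment bound on $\eta'(P_Z-D)\epsilon$ is the HLIM-style alternative), while the paper's route is shorter and reuses machinery (the max-order lemma) already needed for the UOJIVE1 lemmas; both correctly identify that the absence of the $(I-D+\omega I)^{-1}$ factor is what removes the need for Assumption BA. One cosmetic remark: the centering $\eta_i'\epsilon_i=\sigma_{\eta\epsilon}+\zeta_i$ is unnecessary, since $E\lVert\frac{1}{N}\sum_i D_i\eta_i'\epsilon_i\rVert\le\frac{K}{N}E\lVert\eta_1'\epsilon_1\rVert\to0$ already does the job.
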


The proofs for Lemma~\ref{prop UOJIVE2 consistency A} and \ref{prop UOJIVE2 consistency B} are collected in Appendix \ref{appendix asymptotic proofs}. With these two lemmas, we establish the consistency of UOJIVE2.
\begin{theorem} \label{consistency of UIJIVE2}
    Under Assumption \ref{regularity}, \ref{finite XX}, and \ref{finite Xepsilon lower}, $\hat{\beta}_{UOJIVE2} \overset{p}{\to} \beta$.
\end{theorem}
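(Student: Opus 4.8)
The plan is to reduce the claim to the two lemmas just stated together with standard limit-theorem machinery. First I would write the estimator in its error-decomposed form
\[
\hat{\beta}_{UOJIVE2} = \beta + \left(\frac{1}{N}X'(P_Z - D + \omega I)X\right)^{-1}\left(\frac{1}{N}X'(P_Z - D + \omega I)\epsilon\right),
\]
so that consistency amounts to showing the second factor vanishes in probability while the first converges to an invertible limit.

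Next, Lemma~\ref{prop UOJIVE2 consistency A} supplies $\frac{1}{N}X'(P_Z - D + \omega I)X \overset{p}{\to} H = \Sigma_{XZ}\Sigma_{ZZ}^{-1}\Sigma_{ZX}$, and Lemma~\ref{prop UOJIVE2 consistency B} supplies $\frac{1}{N}X'(P_Z - D + \omega I)\epsilon \overset{p}{\to} 0$. I would then invoke the continuous mapping theorem to pass to $(\frac{1}{N}X'(P_Z - D + \omega I)X)^{-1} \overset{p}{\to} H^{-1}$, which requires noting that $H$ is nonsingular; this follows from the relevance condition $\pi \neq 0$ together with the rank conditions implicit in Assumption~\ref{regularity} (in particular $\Sigma_{ZZ}$ positive definite and $\Sigma_{ZX}$ of full column rank, so that $H$ is positive definite). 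Finally Slutsky's theorem combines the two convergences, giving $H^{-1}\cdot 0 = 0$ and hence $\hat{\beta}_{UOJIVE2} \overset{p}{\to} \beta$.

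In short, the substantive content lives entirely in Lemmas~\ref{prop UOJIVE2 consistency A} and \ref{prop UOJIVE2 consistency B}, not in the theorem itself; the only point worth spelling out at this stage is the invertibility of $H$ so that the continuous mapping step is legitimate. It is also worth flagging, by contrast with UOJIVE1, that the absence of the factor $(I - D + \omega I)^{-1}$ means no denominators of the form $1 - D_i$ ever enter the argument, which is precisely why Assumption~BA can be dropped from the hypotheses here.
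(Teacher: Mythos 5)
Your proposal matches the paper's own argument: the paper likewise treats the theorem as an immediate consequence of Lemmas~\ref{prop UOJIVE2 consistency A} and \ref{prop UOJIVE2 consistency B}, combining them via the continuous mapping and Slutsky steps you describe. Your added remark on the invertibility of $H$ (and the contrast with UOJIVE1 regarding Assumption~BA) is a sensible, correct elaboration of what the paper leaves implicit.
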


\subsection{Asymptotic variance of UOJIVEs under fixed $K$}
\begin{assumption}\label{finite Xepsilon}
	$E[\lVert Z_i'\epsilon_i\rVert^{2+\delta_1}]$ and $E[\lVert \eta_i'\epsilon_i\rVert^{2+\delta_1}]$ are finite, for some $\delta_1 > 0$. The existence of these two expectations jointly implies that $E[\lVert X_i'\epsilon_i\rVert^{2+\delta_1}]$ is finite.
\end{assumption}

\begin{lemma} \label{prop asymptotic variance UOJIVE1 B}
    Under Assumption BA, \ref{regularity}, \ref{finite Xepsilon}, $\frac{1}{\sqrt{N}}X'(P_Z - D +\omega I)(I -  D + \omega I)^{-1}\epsilon \overset{d}{\to} N(0,\sigma_\epsilon^2 H)$.
\end{lemma}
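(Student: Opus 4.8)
The plan is to split $\frac{1}{\sqrt{N}}X'C'\epsilon$, with $C = (I - D + \omega I)^{-1}(P_Z - D + \omega I)$, into a term that carries the Gaussian limit and remainder terms that are $o_p(1)$. The first observation is that $C$ preserves the column space of $Z$: since $P_Z Z = Z$ we get $(P_Z - D + \omega I)Z = (I - D + \omega I)Z$, hence $CZ = Z$ and $Z'C' = Z'$. Writing $X = Z\pi + \eta$ in stacked notation, this gives
\begin{equation*}
\frac{1}{\sqrt{N}}X'C'\epsilon = \frac{1}{\sqrt{N}}\pi'Z'\epsilon + \frac{1}{\sqrt{N}}\eta'C'\epsilon .
\end{equation*}
By the last part of Assumption~\ref{regularity}, $\frac{1}{\sqrt{N}}\pi'Z'\epsilon \overset{d}{\to} N(0, \sigma_\epsilon^2\,\pi'\Sigma_{ZZ}\pi)$; and because $\frac{1}{N}\eta'Z \overset{p}{\to} 0$ (so $\Sigma_{XZ} = \pi'\Sigma_{ZZ}$, as used in Lemma~\ref{prop UOJIVE1 consistency A}), the limiting covariance equals $\sigma_\epsilon^2 H$ with $H = \Sigma_{XZ}\Sigma_{ZZ}^{-1}\Sigma_{ZX} = \pi'\Sigma_{ZZ}\pi$. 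Everything then reduces to showing $\frac{1}{\sqrt{N}}\eta'C'\epsilon \overset{p}{\to} 0$, after which Slutsky's theorem delivers the claim.

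For the remainder I would use the identity
\begin{equation*}
C' = (P_Z - D + \omega I)(I - D + \omega I)^{-1} = I - (I - P_Z)A, \qquad A := (I - D + \omega I)^{-1},
\end{equation*}
where $A$ is diagonal with entries $a_i = (1 - D_i + \omega)^{-1}$, so that $\eta'C'\epsilon = \eta'(I - A)\epsilon + \eta'P_Z A\epsilon$. The quantitative engine is that, under Assumption~BA ($1 - D_i + \omega \ge m$), one has $|1 - a_i| = |D_i - \omega|/(1 - D_i + \omega) \le (D_i + \omega)/m$, whence
\begin{equation*}
\sum_{i=1}^N |1 - a_i| \le \frac{1}{m}\Big(\sum_{i=1}^N D_i + N\omega\Big) = \frac{\operatorname{tr}(P_Z) + (L+1)}{m} = \frac{K + L + 1}{m} = O(1),
\end{equation*}
and also $0 < a_i \le 1/m$. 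With a mean/variance (Chebyshev) bound using the moments granted by Assumption~\ref{finite Xepsilon}, $\eta'(I-A)\epsilon = \sum_i (1 - a_i)\eta_i'\epsilon_i$ has mean of order $\sum_i|1-a_i| = O(1)$ in norm and covariance of operator norm $\le \big(\max_i|1-a_i|\big)\big(\sum_i|1-a_i|\big)\lVert\mathrm{Var}(\eta_1'\epsilon_1)\rVert = O(1)$, hence $\eta'(I-A)\epsilon = O_p(1)$ and its $\sqrt{N}$-normalization vanishes. For the second piece, write $\eta'P_Z A\epsilon = (\eta'Z)(Z'Z)^{-1}(Z'A\epsilon)$ and bound the factors: $\eta'Z = O_p(\sqrt{N})$ (mean zero, variance $O(N)$), $(Z'Z)^{-1} = O(1/N)$, and $Z'A\epsilon = O_p(\sqrt{N})$ because $\sum_i a_i^2\lVert Z_i\rVert^2 \le m^{-2}\operatorname{tr}(Z'Z) = O(N)$; hence $\eta'P_Z A\epsilon = O_p(1)$ and $\frac{1}{\sqrt{N}}\eta'P_Z A\epsilon \overset{p}{\to} 0$. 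Combining the two pieces gives $\frac{1}{\sqrt{N}}\eta'C'\epsilon \overset{p}{\to} 0$.

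The main obstacle — and the precise place where Assumption~BA cannot be dispensed with — is controlling $\frac{1}{\sqrt{N}}\eta'(I-A)\epsilon$ and $\frac{1}{\sqrt{N}}\eta'P_Z A\epsilon$: a pointwise bound on a single $a_i$ is useless, and the argument works only because the $a_i$ are \emph{collectively} close to $1$, quantified by $\sum_i|1-a_i| = O(1)$, together with the uniform bound $a_i \le 1/m$; both rest on the denominators $1 - D_i + \omega$ being bounded away from $0$ uniformly in $i$. If a high-leverage point drives some $D_i \to 1$, then $a_i$ can be as large as $O(N)$, $\sum_i|1-a_i|$ need not remain bounded, and the $\sqrt{N}$-scaled remainder need not vanish — exactly the UOJIVE1/UOJIVE2 asymmetry stressed in the paper. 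I would also note that this proof mirrors that of Lemma~\ref{prop UOJIVE1 consistency B} but at the faster rate $N^{-1/2}$, which is why the strengthened moment condition of Assumption~\ref{finite Xepsilon} (rather than Assumption~\ref{finite Xepsilon lower}) is the relevant hypothesis.
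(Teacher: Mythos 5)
Your proposal is correct, but it follows a genuinely different route from the paper's. The paper expands $X'(P_Z-D+\omega I)(I-D+\omega I)^{-1}\epsilon$ into three pieces ($P_Z$, $D$, and $\omega I$ against the diagonal inverse), shows $\frac{1}{\sqrt N}\sum_i \frac{Z_i'\epsilon_i}{1-D_i+\omega}$ is asymptotically equivalent to $\frac{1}{\sqrt N}Z'\epsilon$ by invoking Assumption BA together with the max-order lemma (Lemma~\ref{lemma moment to max order}, which is exactly where the $2+\delta_1$ moments of Assumption~\ref{finite Xepsilon} are consumed), reads off normality from $\frac{1}{N}X'Z(Z'Z)^{-1}\to\Sigma_{XZ}\Sigma_{ZZ}^{-1}$, and kills the $D$ and $\omega$ pieces with $\frac{K}{\sqrt N m}\max_i\lVert X_i'\epsilon_i\rVert$-type bounds. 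You instead exploit the exact algebraic property $CZ=Z$ so that the Gaussian term is literally $\frac{1}{\sqrt N}\pi'Z'\epsilon$, and you control the single remainder $\frac{1}{\sqrt N}\eta'C'\epsilon$ via the identity $C'=I-(I-P_Z)A$ and Chebyshev/variance bounds powered by the collective inequality $\sum_i|1-a_i|\le (K+L+1)/m$ and $a_i\le 1/m$ — both of which, correctly, rest on Assumption BA. Your route buys two things: it needs only second moments of $\eta_i'\epsilon_i$ and $\epsilon_i$ (no appeal to the max-order lemma), and it isolates very cleanly where BA is indispensable; its small cost is the extra identification step $\frac{1}{N}\eta'Z\overset{p}{\to}0$ (immediate by Chebyshev under the i.i.d., finite-variance error assumption with nonrandom $Z$) needed to equate the limit variance $\sigma_\epsilon^2\pi'\Sigma_{ZZ}\pi$ with $\sigma_\epsilon^2 H=\sigma_\epsilon^2\Sigma_{XZ}\Sigma_{ZZ}^{-1}\Sigma_{ZX}$, a step the paper's factorization through $\frac{1}{N}X'Z(Z'Z)^{-1}$ sidesteps. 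Your closing remark about why Assumption~\ref{finite Xepsilon} rather than \ref{finite Xepsilon lower} is the relevant hypothesis matches the paper's logic, even though your own argument in fact uses less than the full $2+\delta_1$ moment strength.
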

Proof for Lemma~\ref{prop asymptotic variance UOJIVE1 B} can be found in Appendix \ref{appendix asymptotic proofs}. $\frac{1}{\sqrt{N}}X'(P_Z - D +\omega I)(I -  D + \omega I)^{-1}\epsilon \overset{d}{\to} N(0,\sigma_\epsilon^2H)$. Combining with Lemma \ref{prop UOJIVE1 consistency A}, the asymptotic variance of UOJIVE is $\sigma_\epsilon^2H^{-1}$.

\begin{theorem}
    Under Assumption BA, \ref{regularity}, \ref{finite XX}, \ref{finite Xepsilon}, $\sqrt{N}(\hat{\beta}_{UOJIVE1} - \beta) \overset{d}{\to} N(0, \sigma_\epsilon^2 H^{-1})$.
\end{theorem}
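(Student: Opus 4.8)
The plan is to combine the two lemmas already established for UOJIVE1—Lemma~\ref{prop UOJIVE1 consistency A} (a law of large numbers for the ``denominator'' matrix) and Lemma~\ref{prop asymptotic variance UOJIVE1 B} (a central limit theorem for the ``numerator'')—via Slutsky's theorem. First I would write the scaled sampling error as
\begin{equation*}
\sqrt{N}(\hat{\beta}_{UOJIVE1} - \beta) = \left(\tfrac{1}{N}X'(P_Z - D + \omega I)(I - D + \omega I)^{-1}X\right)^{-1}\left(\tfrac{1}{\sqrt{N}}X'(P_Z - D + \omega I)(I - D + \omega I)^{-1}\epsilon\right),
\end{equation*}
which follows directly from the displayed expression for $\hat{\beta}_{UOJIVE1}$ above. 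This rearrangement is legitimate because, by Lemma~\ref{prop UOJIVE1 consistency A}, the matrix being inverted converges in probability to $H = \Sigma_{XZ}\Sigma_{ZZ}^{-1}\Sigma_{ZX}$, which is nonsingular under the relevance condition $\pi \neq 0$ together with Assumption~\ref{regularity} ($\Sigma_{ZZ}$ positive definite and $\Sigma_{ZX}$ of full column rank). Hence the inverse exists with probability approaching one, and, inversion being continuous on the set of invertible matrices, the continuous mapping theorem gives that the first factor converges in probability to $H^{-1}$.

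Next I would invoke Lemma~\ref{prop asymptotic variance UOJIVE1 B}, which states that the second factor converges in distribution to $N(0, \sigma_\epsilon^2 H)$. Applying Slutsky's theorem to the product of a term converging in probability to the constant matrix $H^{-1}$ and a term converging in distribution, I obtain
\begin{equation*}
\sqrt{N}(\hat{\beta}_{UOJIVE1} - \beta) \overset{d}{\to} H^{-1}\, N(0, \sigma_\epsilon^2 H) = N\!\left(0, H^{-1}(\sigma_\epsilon^2 H)H^{-1}\right) = N(0, \sigma_\epsilon^2 H^{-1}),
\end{equation*}
where the last equality uses symmetry of $H$. This completes the argument.

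The bulk of the work has already been discharged by the two lemmas, so the remaining steps are essentially bookkeeping and there is no serious obstacle. The only points requiring care are (i) verifying that $H$ is invertible, so that both the algebraic rearrangement and the continuous-mapping step are valid—this is where the relevance assumption and the full-rank part of Assumption~\ref{regularity} enter—and (ii) confirming that the hypotheses line up with the inputs of the lemmas: Assumption~BA together with Assumptions~\ref{regularity} and~\ref{finite XX} feed Lemma~\ref{prop UOJIVE1 consistency A}, while Assumption~BA, Assumption~\ref{regularity}, and the strengthened moment condition Assumption~\ref{finite Xepsilon} feed Lemma~\ref{prop asymptotic variance UOJIVE1 B}; no assumptions beyond these are needed. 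It is worth noting in passing that Assumption~BA is precisely what keeps the row-wise division by $1 - D_i + \omega$ bounded, which is why it reappears in the hypotheses here but is absent from the UOJIVE2 analogue.
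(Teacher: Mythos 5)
Your proposal is correct and matches the paper's own argument, which likewise obtains the result by combining Lemma~\ref{prop UOJIVE1 consistency A} (so the normalized denominator matrix tends in probability to $H$, hence its inverse to $H^{-1}$) with Lemma~\ref{prop asymptotic variance UOJIVE1 B} (the CLT for the numerator) via Slutsky's theorem. Your added remarks on the invertibility of $H$ and on how the hypotheses feed the two lemmas are just the bookkeeping the paper leaves implicit.
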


\begin{lemma} \label{prop asymptotic variance UOJIVE2 B}
    Under Assumption \ref{regularity} and \ref{finite Xepsilon}, $\frac{1}{\sqrt{N}}{X}'(P_{{Z}} - {D} +\omega I)'{\epsilon} \overset{d}{\to} N(0,\sigma_{{\epsilon}}^2 H)$.
\end{lemma}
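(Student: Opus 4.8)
The plan is to split the weight as $P_Z-D+\omega I=P_Z+(\omega I-D)$, show that $\frac{1}{\sqrt N}X'P_Z\epsilon$ carries the limiting distribution, and that the correction $\frac{1}{\sqrt N}X'(\omega I-D)\epsilon$ is $o_p(1)$; Slutsky's theorem then delivers the result. For the leading term I would write
\[
\frac{1}{\sqrt N}X'P_Z\epsilon=\left(\frac{X'Z}{N}\right)\left(\frac{Z'Z}{N}\right)^{-1}\left(\frac{1}{\sqrt N}Z'\epsilon\right),
\]
and invoke Assumption~\ref{regularity}: the first factor converges in probability to $\Sigma_{XZ}$, the (deterministic) second to $\Sigma_{ZZ}^{-1}$, and the third converges in distribution to $\xi\sim N(0,\sigma_\epsilon^2\Sigma_{ZZ})$. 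By Slutsky, $\frac{1}{\sqrt N}X'P_Z\epsilon\overset{d}{\to}\Sigma_{XZ}\Sigma_{ZZ}^{-1}\xi$, which is $N(0,\sigma_\epsilon^2\Sigma_{XZ}\Sigma_{ZZ}^{-1}\Sigma_{ZX})=N(0,\sigma_\epsilon^2 H)$ with $H$ the same matrix that appears in Lemma~\ref{prop UOJIVE2 consistency A}, so that combining this lemma with Lemma~\ref{prop UOJIVE2 consistency A} will give asymptotic variance $\sigma_\epsilon^2 H^{-1}$ for UOJIVE2.

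It remains to kill the two correction pieces. The ridge piece is immediate: $\frac{\omega}{\sqrt N}X'\epsilon=\frac{L+1}{\sqrt N}\cdot\frac{X'\epsilon}{N}$, and $\frac{X'\epsilon}{N}=O_p(1)$ — indeed it converges in probability to $\sigma_{\eta\epsilon}$ by a weak law of large numbers, exactly as in the proof of Lemma~\ref{prop UOJIVE2 consistency B} under Assumption~\ref{finite Xepsilon} — so the piece is $O_p(N^{-1/2})=o_p(1)$. For the diagonal piece, substituting $X=Z\pi+\eta$ gives $\frac{1}{\sqrt N}X'D\epsilon=\frac{1}{\sqrt N}\pi'Z'D\epsilon+\frac{1}{\sqrt N}\eta'D\epsilon$. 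The $\eta$-part is $\frac{1}{\sqrt N}\sum_i D_i\eta_i'\epsilon_i$, with mean $\frac{tr(D)}{\sqrt N}\sigma_{\eta\epsilon}=\frac{K}{\sqrt N}\sigma_{\eta\epsilon}\to0$ and variance matrix $\frac{1}{N}\big(\sum_i D_i^2\big){\rm Var}(\eta_1'\epsilon_1)$, which tends to zero because $\sum_i D_i^2\le\sum_i D_i=tr(P_Z)=K$ and ${\rm Var}(\eta_1'\epsilon_1)$ is finite by Assumption~\ref{finite Xepsilon}; hence it is $o_p(1)$. The $Z\pi$-part has mean zero ($Z$ is nonrandom), and its variance matrix equals $\frac{\sigma_\epsilon^2}{N}\pi'\big(\sum_i D_i^2 Z_i'Z_i\big)\pi$, which is dominated in the positive semidefinite order by $\sigma_\epsilon^2(\max_i D_i)^2\,\pi'\frac{Z'Z}{N}\pi$ since $D_i^2\le(\max_k D_k)^2$; as $\frac{Z'Z}{N}\to\Sigma_{ZZ}$ is bounded and $\max_i D_i\to0$ — the Lindeberg-type negligibility condition built into the regularity behind the CLT in Assumption~\ref{regularity} — this term is also $o_p(1)$. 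Assembling the three pieces with Slutsky gives the claim.

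The crux is the term $\frac{1}{\sqrt N}\pi'Z'D\epsilon$. Unlike the $\eta$-diagonal term it is not killed by $tr(D)=K$ alone: its variance is of order $(\max_i D_i)^2$ times a bounded matrix, so it is asymptotically negligible exactly when the leverages themselves vanish, and there is no self-normalization here to cancel the diagonal of $P_Z$. This is also the structural reason the lemma does not need Assumption~BA: no leverage ever sits in a denominator. By contrast, in Lemma~\ref{prop asymptotic variance UOJIVE1 B} one has the exact identity $Z'(P_Z-D+\omega I)(I-D+\omega I)^{-1}=Z'$, so the $Z\pi$-diagonal correction never arises, but the residual $\eta$-term carries the factor $(I-D+\omega I)^{-1}$ whose entries $\frac{1}{1-D_i+\omega}$ blow up unless $\max_i D_i$ stays bounded away from $1$ — precisely what Assumption~BA guarantees.
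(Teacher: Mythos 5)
Your decomposition into the $P_Z$ term, the $D$ term, and the $\omega I$ term is the same as the paper's, and your treatment of the leading term and of the ridge term matches the paper's proof. The gap is in the diagonal term, specifically in $\frac{1}{\sqrt N}\pi'Z'D\epsilon$. You bound its variance by $\sigma_\epsilon^2(\max_i D_i)^2\,\pi'\frac{Z'Z}{N}\pi$ and then claim $\max_i D_i\to 0$ on the grounds that a Lindeberg-type negligibility condition is ``built into'' Assumption~\ref{regularity}. That assumption, as stated, only asserts the four convergence statements; it neither asserts negligibility of the rows of $Z$ nor implies $\max_i D_i\to 0$. For instance, with Gaussian errors $\frac{1}{\sqrt N}Z'\epsilon\sim N\bigl(0,\sigma_\epsilon^2\frac{Z'Z}{N}\bigr)$ converges to $N(0,\sigma_\epsilon^2\Sigma_{ZZ})$ whenever $\frac{Z'Z}{N}\to\Sigma_{ZZ}$, even if one observation's leverage stays bounded away from zero. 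So your argument quietly adds a vanishing-leverage condition that is not among the hypotheses, and doing so undercuts the very point of this lemma: it is advertised as holding with no condition on $\{D_i\}$ (no Assumption BA), and the outlier designs of the simulation section are precisely situations in which $\max_i D_i$ does not vanish.

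The repair is the paper's own device, which needs no leverage condition: under Assumption~\ref{finite Xepsilon}, Lemma~\ref{lemma moment to max order} with $r=2$ gives $\frac{1}{\sqrt N}\max_i\lVert X_i'\epsilon_i\rVert=o_P(1)$, so
\[
\Bigl\lVert \tfrac{1}{\sqrt N}X'D\epsilon \Bigr\rVert \;\le\; \tfrac{1}{\sqrt N}\sum_{i=1}^N D_i\lVert X_i'\epsilon_i\rVert \;\le\; \tfrac{K}{\sqrt N}\max_i\lVert X_i'\epsilon_i\rVert \;=\; o_P(1),
\]
using only $0\le D_i\le 1$ and $\sum_i D_i=tr(P_Z)=K$ fixed. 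This disposes of the entire diagonal term at once (your split $X=Z\pi+\eta$ then becomes unnecessary, although your Chebyshev argument for the $\eta$-part was itself fine), and it explains why the lemma's hypotheses are the $2+\delta_1$ moment conditions of Assumption~\ref{finite Xepsilon} rather than any restriction on the leverages. Your handling of the $P_Z$ and $\omega$ pieces, and the final assembly via Slutsky, are correct and coincide with the paper.
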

Proof for the lemma can be found in Appendix \ref{appendix asymptotic proofs}. Lemma \ref{prop UOJIVE2 consistency A} and \ref{prop asymptotic variance UOJIVE2 B} establish the asymptotic normality of UOJIVE2 without Assumption BA.
\begin{theorem} \label{asymptotic normality of UOJIVE2}
    Under Assumptions \ref{regularity}, \ref{finite XX}, and \ref{finite Xepsilon}, $\sqrt{N}(\hat{\beta}_{UOJIVE2} - \beta) \overset{d}{\to} N(0,\sigma_{\epsilon}^2H^{-1})$.
\end{theorem}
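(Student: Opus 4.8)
The plan is the same two-line argument used for the UOJIVE1 case: write $\sqrt{N}(\hat{\beta}_{UOJIVE2}-\beta)$ as a product of a term that converges in probability to $H^{-1}$ and a term that converges in distribution to a mean-zero normal, then apply Slutsky's theorem. Starting from the closed form (and using that $P_Z-D+\omega I$ is symmetric),
\[
\hat{\beta}_{UOJIVE2} = \beta + \bigl(X'(P_Z - D +\omega I)X\bigr)^{-1}\bigl(X'(P_Z - D +\omega I)\epsilon\bigr),
\]
so that
\[
\sqrt{N}(\hat{\beta}_{UOJIVE2}-\beta) = \Bigl(\tfrac{1}{N}X'(P_Z - D +\omega I)X\Bigr)^{-1}\Bigl(\tfrac{1}{\sqrt{N}}X'(P_Z - D +\omega I)\epsilon\Bigr).
\]

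First I would treat the outer factor. Lemma~\ref{prop UOJIVE2 consistency A} gives $\frac{1}{N}X'(P_Z - D +\omega I)X \overset{p}{\to} H = \Sigma_{XZ}\Sigma_{ZZ}^{-1}\Sigma_{ZX}$. Since $H$ is invertible — which follows from the relevance condition $\pi\neq 0$ together with the full-rank content of Assumption~\ref{regularity} (so $\Sigma_{ZZ}$ is positive definite and $\Sigma_{ZX}$ has full column rank, hence $H$ is positive definite) — the continuous mapping theorem yields $\bigl(\frac{1}{N}X'(P_Z - D +\omega I)X\bigr)^{-1} \overset{p}{\to} H^{-1}$. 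Next, Lemma~\ref{prop asymptotic variance UOJIVE2 B} supplies the inner factor directly: $\frac{1}{\sqrt{N}}X'(P_Z - D +\omega I)\epsilon \overset{d}{\to} N(0,\sigma_\epsilon^2 H)$ under Assumptions~\ref{regularity} and \ref{finite Xepsilon}. Applying Slutsky's theorem to the product gives $\sqrt{N}(\hat{\beta}_{UOJIVE2}-\beta)\overset{d}{\to} H^{-1}N(0,\sigma_\epsilon^2 H)$, and since a nonsingular linear image of a Gaussian is Gaussian, this limit law is $N\bigl(0,\, H^{-1}(\sigma_\epsilon^2 H)H^{-1}\bigr) = N(0,\sigma_\epsilon^2 H^{-1})$. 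I would also remark that neither Lemma~\ref{prop UOJIVE2 consistency A} nor Lemma~\ref{prop asymptotic variance UOJIVE2 B} invokes Assumption BA, so the result holds with no restriction on the leverages $\{D_i\}$, which is precisely the contrast with UOJIVE1 being advertised.

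There is essentially no obstacle left at the level of the theorem itself: all the analytic content has been pushed into Lemmas~\ref{prop UOJIVE2 consistency A} and \ref{prop asymptotic variance UOJIVE2 B}, so the theorem's proof is a one-paragraph Slutsky argument, just as the consistency theorem's proof was described as ``trivial'' given its lemmas. The only point I would be careful to state explicitly is the invertibility (indeed positive definiteness) of $H$; if one wants to be fully rigorous one could add a sentence noting $\mathrm{rank}(\Sigma_{ZX}) = L$ as part of the standard regularity in Assumption~\ref{regularity}. If the published proof of Lemma~\ref{prop asymptotic variance UOJIVE2 B} itself were in question, the real work there would be showing $\frac{1}{\sqrt N}X'DX\text{-type remainder}$ and $\frac{1}{\sqrt N}X'D\epsilon$ are $o_p(1)$ and that $\frac{1}{\sqrt N}X'P_Z\epsilon$ satisfies a CLT via $\frac{1}{N}Z'X\to\Sigma_{ZX}$ and $\frac{1}{\sqrt N}Z'\epsilon\overset{d}{\to}N(0,\sigma_\epsilon^2\Sigma_{ZZ})$ — but that is assumed given here.
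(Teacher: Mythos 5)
Your proposal is correct and follows essentially the same route as the paper: the paper likewise obtains the theorem by combining Lemma~\ref{prop UOJIVE2 consistency A} (for the $\frac{1}{N}X'(P_Z-D+\omega I)X \overset{p}{\to} H$ factor) with Lemma~\ref{prop asymptotic variance UOJIVE2 B} (for the $\frac{1}{\sqrt{N}}X'(P_Z-D+\omega I)\epsilon$ CLT) and an implicit Slutsky step, with no use of Assumption BA. Your explicit remark on the invertibility of $H$ is a minor added detail the paper leaves implicit, not a different approach.
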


\subsection{Many-instrument consistency of UOJIVE2}
The many-instrument asymptotics framework is that both $K_1$ and $N$ go to infinity and the ratio $\frac{K_1}{N}$ converges to a constant $\alpha$ where $0 < \alpha < 1$. The motivation behind the many-instrument framework is to provide a better approximation of a situation where the number of instruments is large relative to sample size and first-stage overfitting is concerning. Many papers have looked into many-instrument setup \citep{bekker1994alternative, kunitomo2012optimal,chao2012asymptotic, hausman2012instrumental}. Theorem 1 from \citet{hausman2012instrumental} directly applies to UOJIVE2.
\begin{theorem} \label{theorem many instrument asymptotic for UOJIVE2}
    Under Assumption 1-4 specified in \citet{hausman2012instrumental}, $\hat{\beta}_{UOJIVE2} \overset{p}{\to} \beta$ as $N \to \infty$.
\end{theorem}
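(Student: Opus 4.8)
The plan is to observe that UOJIVE2 differs from JIVE2 only through a deterministic, asymptotically negligible perturbation of the weighting matrix, and that the many-instrument consistency of JIVE2 is precisely what Theorem 1 of \citet{hausman2012instrumental} delivers.

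First I would write the estimator in error form as
\[
\hat{\beta}_{UOJIVE2} = \beta + \big(X'(P_Z - D + \omega I)X\big)^{-1}\big(X'(P_Z - D + \omega I)\epsilon\big),\qquad \omega = \tfrac{L+1}{N},
\]
and split $C = (P_Z - D) + \omega I$ additively in each factor, so that the Gram term is $X'(P_Z - D)X + \tfrac{L+1}{N}X'X$ and the moment term is $X'(P_Z - D)\epsilon + \tfrac{L+1}{N}X'\epsilon$. This exhibits $\hat{\beta}_{UOJIVE2}$ as the JIVE2 estimator with the scalar $\omega = (L+1)/N$ added to the diagonal of its weighting matrix; equivalently, it is an HFUL-type estimator in the sense of \citet{hausman2012instrumental} with the ``Fuller scalar'' taken to be the deterministic quantity $-(L+1)/N$, which is $O(1/N)$ and in particular $o_p(1)$.

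Next I would invoke Theorem 1 of \citet{hausman2012instrumental}: under their Assumptions 1--4, JIVE2 lies in the class their theorem covers, and their argument yields, with $\mu_N^2$ the concentration-parameter sequence entering their normalization ($\mu_N^2 \to \infty$ under their assumptions), $\tfrac{1}{\mu_N^2}X'(P_Z - D)X \overset{p}{\to} H^*$ for some nonsingular $H^*$ and $\tfrac{1}{\mu_N^2}X'(P_Z - D)\epsilon \overset{p}{\to} 0$. It remains to bound the two perturbation terms on the same scale. Using $X = Z\pi + \eta$ and their moment conditions, $\lVert X'X\rVert = O_p(\mu_N^2 + N)$ and $\lVert X'\epsilon\rVert = O_p(\mu_N^2 + N)$, so $\tfrac{1}{\mu_N^2}\cdot\tfrac{L+1}{N}X'X = O_p\big(\tfrac{L+1}{N} + \tfrac{L+1}{\mu_N^2}\big) = o_p(1)$ and likewise $\tfrac{1}{\mu_N^2}\cdot\tfrac{L+1}{N}X'\epsilon = o_p(1)$, because $L$ is fixed while $N$ and $\mu_N^2$ both diverge. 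Hence $\tfrac{1}{\mu_N^2}X'(P_Z - D + \omega I)X \overset{p}{\to} H^*$ and $\tfrac{1}{\mu_N^2}X'(P_Z - D + \omega I)\epsilon \overset{p}{\to} 0$, and Slutsky's theorem together with continuity of matrix inversion at $H^*$ gives $\hat{\beta}_{UOJIVE2} \overset{p}{\to} \beta$.

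I expect the only real obstacle to be bookkeeping: verifying that JIVE2 (the $C = P_Z - D$ member) indeed falls inside the estimator class to which Theorem 1 of \citet{hausman2012instrumental} applies, and matching their precise normalization and notation for $\mu_N^2$ so that the perturbation bounds are stated on the correct scale. Once that identification is in place nothing further is delicate, since adding $\omega I$ with $\omega = (L+1)/N$ and $L$ fixed is dominated by every relevant term, and no moment or rate condition beyond their Assumptions 1--4 is needed.
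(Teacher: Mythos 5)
Your proposal is correct and matches the paper's own argument: the paper proves this theorem simply by observing that UOJIVE2 has the form $(X'(P_Z-D+\omega I)X)^{-1}(X'(P_Z-D+\omega I)y)$ with deterministic $\omega=(L+1)/N$, i.e.\ it is a member of the estimator class (with ``$\hat{\alpha}$'' equal to $-(L+1)/N$, which is $O(1/N)$ and hence admissible) to which Theorem 1 of \citet{hausman2012instrumental} applies directly. Your extra bookkeeping---splitting off the $\omega I$ perturbation from JIVE2 and checking it is negligible on the concentration-parameter scale---is just an explicit version of that same identification, so there is no substantive difference in approach.
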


\section{Simulation} \label{simulation}
I run three types of simulations to contrast the performances of OLS, TSLS, JIVEs, and UIJIVEs (UOJIVEs). They are designed to test for many-instrument asymptotic under homoskedasticty, many-instrument asymptotic under consistency under heteroskedasticity, and robustness to outliers. Each simulation has two or four setups and each setup consists of 1000 rounds. All simulations have $L_1 = 1$. There is only one endogenous variable $X^*$. $\beta^*$ is set to be 0.3.  The intercepts are both stages are set to be 0 throughout all simulations, though a column of ones is still included in the regressions, either partialled out as part of $W$ for UIJIVEs or included as a part of $Z$ for the rest of the estimators. 

\subsection{Simulation for many-instrument asymptotic under homoskedasticity}
The simulation setup for many-instrument asymptotics is summarized in Table \ref{Simulation Setup Table}. The parameters for all instruments $Z^*$ (and for all controls $W$) are set as a constant, so I only report one value for each parameter instead of a vector. $R_0 = \frac{N\pi E[Z'Z] \pi}{\sigma_\eta^2}$ is the concentration parameter, I set it to be around 150 following \citet{hansen2014instrumental} to maintain a reasonable instrumental variable strength. $Z$ follows standard multivariate normal distributions.\footnote{I follow \citet{bekker2015jackknife} which also assumes nonrandom $Z$ for analyzing the theoretical property of the IV estimator, but let $Z$ follow a random distribution in their simulation study.}
The error terms $\epsilon$ and $\eta$ are bivariate normal with mean $(0,0)'$ and covariance matrix 
$\begin{pmatrix}
    0.8 & -0.6 \\
    -0.6 & 1 
\end{pmatrix}$.
	 
\begin{table}
    \centering
    \begin{tabular}{c c c c c c c c c}
            \hline
             Setup & $N$ & $K$ & $L$ & $\beta^*$ & $\gamma^*$ & $\pi^*$ & $\delta^*$ & $R_0$ \\
             \hline
             1 & 500 & 50 & 10 & 0.3 & 1 & 0.08 & 0.05 & 140.5 \\
             2 & 2000 & 200 & 40 & 0.3 & 1 & 0.02 & 0.02 & 160 \\
             \hline
        \end{tabular}
    \caption{The simulation setups for testing estimators' performances under many-instrument scenarios under homoskedasticity}
    \label{Simulation Setup Table}
\end{table}

I report the simulation results in Table~\ref{table simulation many-instrument}, it is clear that OLS, TSLS, and JIVEs suffer from a larger bias than approximately unbiased estimators. In addition, as I increase the degree of overidentification and the number of control variables, the bias of TSLS and JIVEs deteriorates. The sharp worsening of JIVEs' bias is likely exacerbated by its instability as pointed out by \citet{davidson2007moments}. The simulation results align with Corollary \ref{corollary for approximate bias} which implies that TSLS' approximate bias is proportional to the degree of overidentification while JIVEs' approximate biases are proportional to the number of control variables. All approximately unbiased estimators perform well under both setups. For example, UOJIVE1 and UOJIVE2 virtually do not show any bias and have a relatively small variance compared to JIVEs. 

\begin{table}[ht]
    \centering
    \begin{tabular}{lcccccc}
        \toprule
        & \multicolumn{3}{c}{Setup 1} & \multicolumn{3}{c}{Setup 2} \\
        \cmidrule(lr){2-4} \cmidrule(lr){5-7}
        Estimator & Bias & Variance & MSE & Bias & Variance & MSE \\
        \midrule
        OLS        & 0.475 & 0.001 & 0.226 & 0.564 & 0.000 & 0.318 \\
        TSLS       & 0.143 & 0.004 & 0.024 & 0.337 & 0.002 & 0.116 \\
        Nagar      & 0.013 & 0.009 & 0.009 & 0.055 & 0.016 & 0.019 \\
        AUK        & 0.007 & 0.011 & 0.011 & 0.023 & 0.027 & 0.027 \\
        JIVE1      & 0.069 & 0.016 & 0.021 & 0.421 & 1.009 & 1.185 \\
        JIVE2      & 0.069 & 0.016 & 0.021 & 0.420 & 1.093 & 1.268 \\
        TSJI1      & 0.003 & 0.010 & 0.010 & 0.002 & 0.022 & 0.022 \\
        TSJI2      & 0.003 & 0.010 & 0.010 & 0.002 & 0.022 & 0.022 \\
        UIJIVE1    & 0.003 & 0.010 & 0.010 & 0.002 & 0.023 & 0.023 \\
        UIJIVE2    & 0.003 & 0.010 & 0.010 & 0.002 & 0.023 & 0.023 \\
        UOJIVE1    & 0.003 & 0.010 & 0.010 & 0.002 & 0.023 & 0.023 \\
        UOJIVE2    & 0.003 & 0.010 & 0.010 & 0.002 & 0.022 & 0.022 \\
        \bottomrule
    \end{tabular}
    \caption{Comparison of estimator performances with many instruments}
    \label{table simulation many-instrument}
\end{table}

\subsection{Many-instrument asymptotic under heteroskedasticity}
Following the setup in \citet{ackerberg2009improved}, I test the performance of estimators with heteroskedastic errors while setting $Z$ to be a group-fixed effect matrix without any control variables $W$. Since there is no $W$, UIJIVEs and UOJIVEs are equivalent and I only report UOJIVEs' results.

The sample size is set to be 500, $N = 500$. The first 115 observations belong to Group 1, the next 115 observations belong to Group 2. Groups 1 and 2 are two big groups. The rest of the 270 observations consist of 18 small groups. Each small group has 15 observations. The first group is excluded from $Z$ and $\pi^* = 0.3$, meaning that Group 1 on average has its $X$ value 0.3 below other groups conditional on everything else being equal. There are two covariance matrices:
$\begin{pmatrix}
    0.25 & -0.1 \\
    -0.1 & 0.25
\end{pmatrix}$ and
$\begin{pmatrix}
    0.25 & 0.2 \\
    0.2 & 0.25
\end{pmatrix}$, denoted as $-$ and $+$ covariance matrices, respectively. In Setup 1, I let the big groups have the $+$ covariance matrix and the small groups have the $-$ covariance matrix; in Setup 2, I reverse the covariance matrices for the two types of groups.

I summarize the simulation results in Table~\ref{table simulation heteroskedasticity}. The result aligns with \citet{ackerberg2009improved}. The Nagar estimator is neither approximately unbiased nor consistent under heteroskedasticity with many instruments. The same can be said about approximately unbiased k-class estimator AUK which is proposed in Appendix~\ref{sec other classes of estimators}. In contrast, the performance of UOJIVE2 remains stellar, aligning with Theorem \ref{theorem many instrument asymptotic for UOJIVE2} that consistency of UOJIVE2 is established without assuming homoskedasticity. 

\begin{table}[ht]
    \centering
    \begin{tabular}{lcccccc}
        \toprule
        & \multicolumn{3}{c}{Setup 1: ${\rm small}^{+}{\rm big}^{-}$} & \multicolumn{3}{c}{Setup 2: ${\rm small}^{-}{\rm big}^{+}$} \\
        \cmidrule(lr){2-4} \cmidrule(lr){5-7}
        Estimator & Bias & Variance & MSE & Bias & Variance & MSE \\
        \midrule
        OLS        & 0.232 & 0.002 & 0.056 & 0.141 & 0.002 & 0.022 \\
        TSLS       & 0.286 & 0.028 & 0.109 & 0.135 & 0.025 & 0.043 \\
        Nagar      & 0.258 & 6.927 & 6.987 & 0.364 & 0.274 & 0.406 \\
        AUK        & 0.336 & 0.246 & 0.358 & 0.409 & 1.087 & 1.253 \\
        JIVE1      & 0.311 & 119.913 & 119.890 & 0.047 & 0.116 & 0.118 \\
        JIVE2      & 0.036 & 0.392 & 0.393 & 0.047 & 0.113 & 0.115 \\
        TSJI1      & 0.054 & 0.127 & 0.130 & 0.072 & 0.073 & 0.078 \\
        TSJI2      & 0.075 & 0.234 & 0.239 & 0.072 & 0.069 & 0.074 \\
        UOJIVE1    & 0.011 & 0.088 & 0.088 & 0.023 & 0.064 & 0.065 \\
        UOJIVE2    & 0.019 & 0.095 & 0.096 & 0.024 & 0.061 & 0.062 \\
        \bottomrule
    \end{tabular}
    \caption{Comparison of estimator performances under heteroskedasticity}
    \label{table simulation heteroskedasticity}
\end{table}

\subsection{Simulation with outlier}
In my proofs for the consistency and asymptotic normality results of UOJIVE1, Assumption BA is repeatedly invoked to bound below the denominator. With high leverage points, the asymptotic proofs for UOJIVE1 are no longer valid. In contrast, UOJIVE2's asymptotic results do not require Assumption BA. In the following simulation setups, I deliberately introduce high leverage points in the DGP, when these high leverage points coincide with large variances of $\epsilon$, the performance of UOJIVE1 is much worse than that of UOJIVE2. 

There are 5 instruments $Z^*$, one endogenous variable $X^*$, and no controls $W$. Again, UIJIVEs and UOJIVEs are equivalent due to the absence of $W$, so I only report UOJIVEs. I set the sample size $N$ to be $\{101,401,901,1601\}$. All these numbers are 1 plus a square number so it is easy to set up the following simulations. The first observation is the high leverage point with its first entry being $(N-1)^{1/3}$. For the rest of the $N-1$ observations, every $\sqrt{N-1}$ observations have their first five rows being the identity matrix, the rest of the rows are all zeroes. This setup is equivalent to the group fixed effect setup for the heteroskedasticity simulation where there are five small groups (indexed 2-6) and one large group (Group 1), except the first row which is supposed to belong to Group 2 is contaminated, and has its value multiplied by $N^{1/3}$. I include Table~\ref{table matrix with outlier} in the Appendix~\ref{sec matrix with outlier} to illustrate this simulation setup. The error terms $\epsilon$ and $\eta$ are bivariate normal with mean $(0,0)'$ and covariance matrix 
$\begin{pmatrix}
    0.8 & -0.6 \\
    -0.6 & 1 
\end{pmatrix}$, except that $\epsilon_1$, the error terms for the high leverage point, is multiplied by $N^{1/3}$. The coincidence between high leverage and large variance of $\epsilon$ generates an \textbf{outlier}. Intuitively, the first observation has a high influence on its fitted value and has a large probability of deviating a lot from the regression line. $\pi^*$ is set to be one.

The results for the simulations with outliers are summarized in Table \ref{table simulation outliers}. Throughout the four simulations, UOJIVE2 outperforms UOJIVE1 and TSJI2 outperforms TSJI1 substantially. The simulation result points to the importance of Assumption BA to the consistency of UOJIVE1. 
\begin{table}[ht]
\centering
\begin{tabular}{cc}
\toprule
\begin{tabular}{>{\centering\arraybackslash}p{0.45\linewidth}}
\textbf{N = 101} \\
\begin{tabular}{lccc}
\toprule
Estimator & Bias & Variance & MSE \\
\midrule
TSJI1   & 0.018 & 0.388 & 0.388 \\
TSJI2   & 0.014 & 0.130 & 0.130 \\
UOJIVE1 & 0.013 & 0.193 & 0.193 \\
UOJIVE2 & 0.001 & 0.067 & 0.067 \\
\end{tabular}
\end{tabular}
&
\begin{tabular}{>{\centering\arraybackslash}p{0.45\linewidth}}
\textbf{N = 401} \\
\begin{tabular}{lccc}
\toprule
Estimator & Bias & Variance & MSE \\
\midrule
TSJI1   & 0.054 & 0.397 & 0.400 \\
TSJI2   & 0.012 & 0.110 & 0.110 \\
UOJIVE1 & 0.036 & 0.169 & 0.170 \\
UOJIVE2 & 0.010 & 0.036 & 0.036 \\
\end{tabular}
\end{tabular} \\
\midrule
\begin{tabular}{>{\centering\arraybackslash}p{0.45\linewidth}}
\textbf{N = 901} \\
\begin{tabular}{lccc}
\toprule
Estimator & Bias & Variance & MSE \\
\midrule
TSJI1   & 0.029 & 0.359 & 0.360 \\
TSJI2   & 0.010 & 0.093 & 0.092 \\
UOJIVE1 & 0.019 & 0.144 & 0.144 \\
UOJIVE2 & 0.004 & 0.024 & 0.024 \\
\end{tabular}
\end{tabular}
&
\begin{tabular}{>{\centering\arraybackslash}p{0.45\linewidth}}
\textbf{N = 1601} \\
\begin{tabular}{lccc}
\toprule
Estimator & Bias & Variance & MSE \\
\midrule
TSJI1   & 0.034 & 0.395 & 0.396 \\
TSJI2   & 0.004 & 0.097 & 0.097 \\
UOJIVE1 & 0.019 & 0.152 & 0.152 \\
UOJIVE2 & 0.003 & 0.021 & 0.020 \\
\end{tabular}
\end{tabular} \\
\bottomrule
\end{tabular}
\caption{Comparison of estimator performances with outliers}
\label{table simulation outliers}
\end{table}

Moreover, the consistency results of UOJIVE2 seem to be preserved under this simulation setup even though $\Sigma_{ZZ}^{-1}$ does not exist. As shown by Figure \ref{fig outliers}, the empirical distribution of the 1000 UOJIVE2 estimates concentrates around $\beta^* = 0.3$ more and more as the sample size increases, whereas that of UOJIVE1 does not change much across the four simulations. It can be an interesting future research direction to show that UOJIVE2 is consistent under such a weak-instrument setup whereas UOJIVE1 is not.

\begin{figure}[!htb]
  \centering
  \subfloat[$N = 101$]{\includegraphics[width=0.45\textwidth]{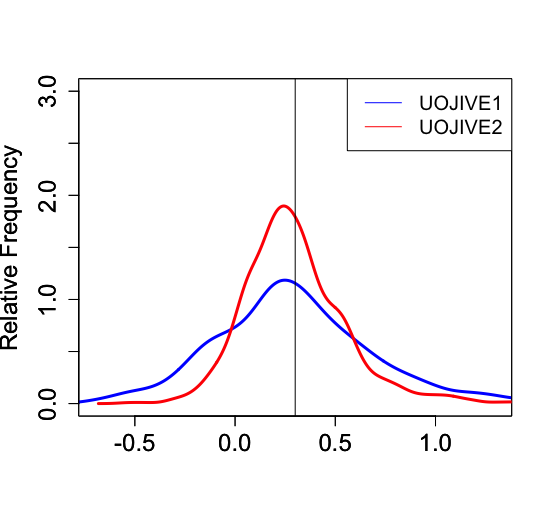}\label{fig:outlier100}}
  \hfill 
  \subfloat[$N = 401$]{\includegraphics[width=0.45\textwidth]{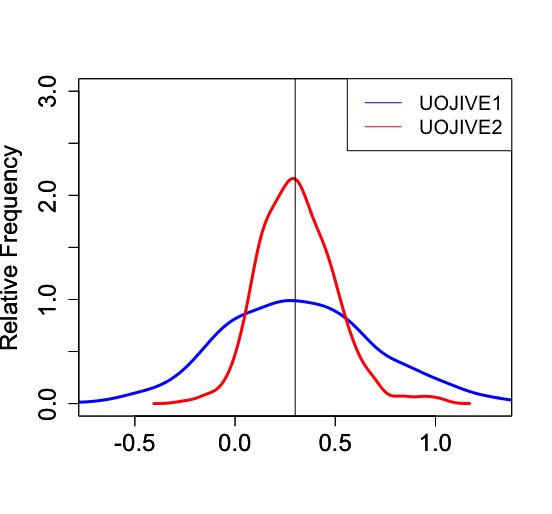}\label{fig:outlier400}}
  
  \subfloat[$N = 901$]{\includegraphics[width=0.45\textwidth]{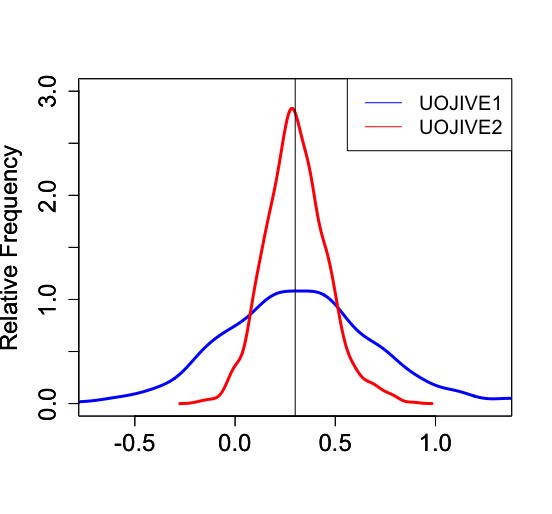}\label{fig:outlier900}}
  \hfill 
  \subfloat[$N = 1601$]{\includegraphics[width=0.45\textwidth]{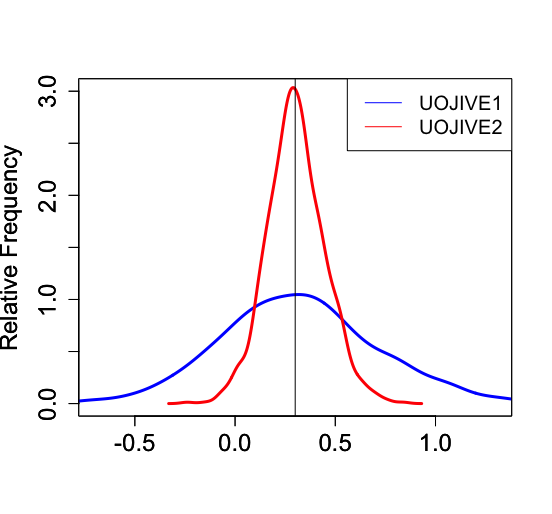}\label{fig:outlier1600}}
  \caption{Simulation with outliers}
  \label{fig outliers}
\end{figure}

\section{Empirical Studies} \label{Empirical}
There are multitudes of social science studies that use a large number of instruments. Some examples include the judge leniency IV design where researchers use the identity of the judge as an instrument. In other words, the number of instruments is equal to the number of judges in the sample. The method has been applied to other settings (See Table 1 in \citet{frandsen2023judging} for the immense popularity of judge leniency design). In this section, I apply approximately unbiased estimators to two classical empirical studies. I compute the standard error by assuming homoskedasticity and treating these approximately unbiased estimators as just-identified IV estimators.

\subsection{Quarter of birth} 
The quarter of birth example has been repeatedly cited by many-instrument literature. Here I apply TSJIs and UOJIVEs the famous example in \citet{angrist1991does}. 

Many states in the US have a compulsory school attendance policy. Students are mandated to stay in school until their 16th, 17th, or 18th birthday depending on which state they are from. As such, students' quarter of birth may induce different quitting-school behaviors. This natural experiment makes a quarter of birth a valid IV to estimate the marginal earnings brought by an additional school year for those who are affected by the compulsory attendance policy. 

\citet{angrist1991does} interacts quarter of birth with other dummy variables to generate a large number of IVs,
\begin{enumerate}
    \item Quarter of birth $\times$ Year of birth 
    \item Quarter of birth $\times$ Year of birth $\times$ State of birth
\end{enumerate}
where case 1 contains 30 instruments, and case 2 contains 180 instruments. The results are reported in Table \ref{return of school}.

\begin{table}[ht]
\centering
    \begin{tabular}{lcccc}
        \toprule
        \textbf{Estimator} & \multicolumn{2}{c}{\textbf{Case 1}} & \multicolumn{2}{c}{\textbf{Case 2}} \\
        \cmidrule(lr){2-3} \cmidrule(lr){4-5}
        & \textbf{Estimate (\%)} & \textbf{ SE (\%)} & \textbf{Estimate (\%)} & \textbf{ SE (\%)} \\
        \midrule
        TSLS          & 8.91 & 1.61 & 9.28 & 0.93 \\
        Nagar         & 9.35 & 1.80 & 10.88 & 1.20 \\
        AUK           & 9.35 & 1.80 & 10.88 & 1.20 \\
        JIVE1         & 9.59 & 1.90 & 12.11 & 1.37 \\
        JIVE2         & 9.59 & 1.90 & 12.11 & 1.37 \\
        TSJI1         & 9.34 & 1.79 & 10.92 & 1.20 \\
        TSJI2         & 9.34 & 1.79 & 10.93 & 1.20 \\
        UIJIVE1       & 9.34 & 1.80 & 10.20 & 1.08 \\
        UIJIVE2       & 9.34 & 1.80 & 10.95 & 1.20 \\
        UOJIVE1       & 9.34 & 1.80 & 10.20 & 1.08 \\
        UOJIVE2       & 9.34 & 1.80 & 10.95 & 1.20 \\
        \bottomrule
    \end{tabular}
    \caption{text}
    \label{return of school}
\end{table}

\subsection{Veteran's smoking behavior}
\citet{bedard2006long} use year of birth and its interaction with gender as instruments to estimate how much enlisting for WWII and the Korean War increases the veterans' probability of smoking during the later part of their lives. The result can be interpreted as LATE.
\begin{enumerate}
    \item Birth year $\times$ gender
    \item Birth year
\end{enumerate}
where case 1 uses all data and case 2 uses only data for male veterans. The results are summarized in Table \ref{table smoking}.

\begin{table}[ht]
\centering
\begin{tabular}{lcccccccc}
\toprule
\textbf{Estimator} & \multicolumn{4}{c}{\textbf{Case 1}} & \multicolumn{4}{c}{\textbf{Case 2}} \\
\cmidrule(lr){2-5} \cmidrule(lr){6-9}
& \multicolumn{2}{c}{\textbf{CPS 60}} & \multicolumn{2}{c}{\textbf{CPS 90}} & \multicolumn{2}{c}{\textbf{CPS 60}} & \multicolumn{2}{c}{\textbf{CPS 90}} \\
\cmidrule(lr){2-3} \cmidrule(lr){4-5} \cmidrule(lr){6-7} \cmidrule(lr){8-9}
& \textbf{Estimate} & \textbf{SE} & \textbf{Estimate} & \textbf{SE} & \textbf{Estimate} & \textbf{SE} & \textbf{Estimate} & \textbf{SE} \\
\midrule
TSLS         & 27.63 & 3.50 & 34.58 & 2.38 & 23.69 & 13.90 & 30.14 & 3.15 \\
Nagar        & 27.82 & 3.51 & 34.68 & 2.39 & 32.42 & 17.61 & 30.43 & 3.18 \\
AUK          & 27.82 & 3.51 & 34.68 & 2.39 & 32.45 & 17.62 & 30.43 & 3.18 \\
JIVE1        & 28.50 & 3.65 & 35.03 & 2.43 & -136.12 & 224.30 & 31.13 & 3.33 \\
JIVE2        & 28.50 & 3.65 & 35.03 & 2.43 & -136.12 & 224.35 & 31.13 & 3.33 \\
TSJI1        & 27.84 & 3.53 & 34.68 & 2.39 & 32.52 & 21.99 & 30.43 & 3.21 \\
TSJI2        & 27.84 & 3.53 & 34.68 & 2.39 & 32.52 & 21.99 & 30.43 & 3.21 \\
UIJIVE1      & 27.85 & 0.12 & 34.70 & 2.39 & 32.37 & 17.64 & 30.43 & 3.18 \\
UIJIVE2      & 27.85 & 0.12 & 34.70 & 2.39 & 32.37 & 17.65 & 30.43 & 3.18 \\
UOJIVE1      & 27.85 & 0.12 & 34.70 & 2.39 & 32.37 & 17.64 & 30.43 & 3.18 \\
UOJIVE2      & 27.85 & 0.12 & 34.70 & 2.39 & 32.37 & 17.65 & 30.43 & 3.18 \\
\bottomrule
\end{tabular}
\caption{Estimation Results from Empirical Study for Cases 1 and 2, CPS 60 and 90}
\label{table smoking}
\end{table}

The results of all estimators are close except for Case 2 with CPS90. In this setup of Case 2 CPS90, JIVE's result, which is negative and counterintuitively large in magnitude (larger than 1), is driven by its instability. TSLS' estimate is about 10\% below the estimates of other approximately unbiased estimators. Compared to other setups (Case 1 CPS60, Case 1 CPS90, Case 2 CPS90), the TSLS estimate for this case is the smallest. In contrast, for other approximately unbiased estimators, their estimates throughout all four cases are closer to each other. \citet{bedard2006long} claims that the four TSLS estimates ``are almost identical''. Table \ref{table smoking} gives stronger evidence for this claim with even closer estimates from other approximately unbiased estimators. 

\section{Conclusion} \label{conclusion}
This paper formalizes the definition of approximate bias and extends its applicability to other estimators that the approximate bias literature has not considered. The extension motivates new approximately unbiased estimators such as UOJIVE2. I show that UOJIVE2 is consistent under a fixed number of instruments and control variables as the sample size goes to infinity. The consistency and asymptotic normality results of UOJIVE2 do not require the absence of high leverage points, a condition that is necessary for the proofs that I construct to establish consistency and asymptotic normality of UOJIVE1. The simulation study aligns with these theoretical results. When a high leverage point coincides with a high variance of the error term, an outlier is generated and the performance of UOJIVE1 is much poorer than that of UOJIVE2.

%

\newpage
\begin{appendices}

\section{Approximate bias for classes of estimators that have $CZ = Z$ property}	\label{def for approximate bias derivation}
Recall that 
\begin{equation*}
	R_N = \underbrace{J\epsilon}_{R1} - \underbrace{\frac{Q_0}{N}\pi'Z'\eta J \epsilon}_{R2} +\underbrace{\frac{Q_0}{N}\eta' C'\epsilon}_{R3} - \underbrace{\frac{Q_0}{N}\eta'P_{Z\pi}\epsilon}_{R4} + o_P(\frac{1}{N})
\end{equation*}
In this section, I will provide proof for corollary \ref{corollary for approximate bias} and in the process, the derivation of $R1$, $R2$, $R3$ and $R4$. 
Consider an IV estimator that takes form of $(X'C'X)^{-1}(X'C'y) = \beta + (X'C'X)^{-1}(X'C'\epsilon)$ where $CZ= Z$ and hence, $CX = Z\pi+C\eta$.
\begin{align*}
	(X'CX)^{-1}(X'C\epsilon) = & (X'CX)^{-1}(X'C\epsilon)\\
	= & ((\pi'Z'+\eta'C')X)^{-1}(X'C'\epsilon) \\
	= & (\pi'Z'X+\eta'C'X)^{-1}(X'C'\epsilon) \\
	= & (I + Q\eta'C'X)^{-1}Q(X'C'\epsilon) \quad \text{where} \quad Q = (\pi'Z'X)^{-1}\\
	= & (I - \underbrace{Q\eta'C'X}_{\sim O_P(\frac{1}{\sqrt{N}})})\underbrace{Q(X'C'\epsilon)}_{\sim O_P(\frac{1}{\sqrt{N}})} + o_P(\frac{1}{N})
\end{align*}
The last step is a geometric expansion of $(I + Q\eta'C'X)^{-1}$ where $Q\eta'C'X = O_P(\frac{1}{\sqrt{N}})$ since $Q = (\pi'Z'X)^{-1} = O_P(\frac{1}{N})$ and $\eta'C'X =O_P(\sqrt{N})$. The first term's stochastic order is obvious, I evaluate the stochastic order of $\lambda_1$-class estimator's $\eta'C'X$ as an example to show that $\eta'C'X =O_P(\sqrt{N})$. The proof for $\omega_1$-class is similar but easier given that $\omega \sim O(\frac{1}{N})$ and Assumption BA. The proof for the k-class estimator is trivial.
\begin{align*}
	\eta'C'X = & \eta' (P_Z - \lambda D)(I - \lambda D)^{-1}X \\
	= & \eta' Z(Z'Z)^{-1}Z' (I - \lambda D)^{-1} X - \lambda  \sum_{i=1}^{N}\eta_i'Z_i(Z'Z)^{-1}Z_i' \frac{X_i}{1-\lambda D_i} \\
	= & \eta' Z(Z'Z)^{-1}\sum_{i=1}^{N} \frac{Z_i'X_i}{1-\lambda D_i}- \lambda  \sum_{i=1}^{N}\eta_i'D_i \frac{X_i}{1-\lambda D_i}
\end{align*}

I make the following Assumption BA for the leverage of projection matrix $P_Z = Z(Z'Z)^{-1}Z'$, the assumption implies that for large enough $N$, $D_i \leq m$ for some fixed $m > 0$, for all $i = 1,2,3 \dots, N$.

\begin{lemma}\label{etaCX A}
    Assume that $0\leq\lambda\leq 1$, $\eta' Z(Z'Z)^{-1}\sum_{i=1}^{N} \frac{Z_i'X_i}{1-\lambda D_i} = O_P({\sqrt{N}})$.
\end{lemma}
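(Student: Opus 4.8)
The plan is to factor the left-hand side as a product of three pieces, bound the stochastic order of each piece separately, and multiply. Write the expression as $(\eta'Z)\,(Z'Z)^{-1}\,S$, where $S := \sum_{i=1}^N \frac{Z_i'X_i}{1-\lambda D_i}$ is a $K\times L$ matrix. I will establish that (i) $\eta'Z = O_P(\sqrt N)$, (ii) $(Z'Z)^{-1} = O(1/N)$ as a deterministic rate, and (iii) $S = O_P(N)$. Combining the three, the product is $O_P(\sqrt N)\cdot O(1/N)\cdot O_P(N) = O_P(\sqrt N)$, which is the assertion of the lemma.

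For (i): since $Z$ is nonrandom and $E[\eta_i]=0$, the summands $\eta_i'Z_i$ in $\eta'Z=\sum_i \eta_i'Z_i$ are mean-zero and independent, and the Frobenius-norm identity $\|\eta_i'Z_i\| = \|\eta_i\|\,\|Z_i\|$ together with the assumed finite covariance matrix of $(\epsilon_i,\eta_i)$ gives $E\big[\|\eta'Z\|^2\big] = E[\|\eta_1\|^2]\sum_i \|Z_i\|^2 = E[\|\eta_1\|^2]\,\mathrm{tr}(Z'Z) = O(N)$, using $\tfrac{1}{N}Z'Z\to \Sigma_{ZZ}$ from Assumption~\ref{regularity}; Chebyshev then yields $\eta'Z = O_P(\sqrt N)$. (Equivalently, $\tfrac{1}{\sqrt N}\eta'Z$ is the exact $\eta$-analogue of the object $\tfrac{1}{\sqrt N}Z'\epsilon$ appearing in Assumption~\ref{regularity}.) For (ii): Assumption~\ref{regularity} together with invertibility of $\Sigma_{ZZ}$ (guaranteed by relevance and regularity) gives $N(Z'Z)^{-1}\to \Sigma_{ZZ}^{-1}$, hence $(Z'Z)^{-1}=O(1/N)$. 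For (iii), I will use Assumption BA: with $0\le\lambda\le1$ and $D_i\le 1-m$ for large $N$, the denominator satisfies $1-\lambda D_i \ge 1-\lambda(1-m)\ge m>0$ uniformly in $i$ and $\lambda$ (the bound $1-\lambda(1-m)$ is linear in $\lambda$ on $[0,1]$, decreasing from $1$ to $m$, and $m\le1$ since $D_i\ge0$). Therefore $\|S\| \le m^{-1}\sum_{i=1}^N \|Z_i'X_i\|$, and a weak law of large numbers under the finite-$(1+\delta_1)$-moment condition of Assumption~\ref{finite XX} gives $\tfrac1N\sum_i\|Z_i'X_i\| = O_P(1)$, so $S = O_P(N)$.

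The only genuine obstacle is step (iii): when $\lambda$ is at (or near) $1$ and some $D_i$ approaches $1$, the denominator $1-\lambda D_i$ can vanish, and then $S$ need not be $O_P(N)$ — this is precisely the instability that motivates Assumption BA, and is the reason the assumption is needed here. Everything else is routine bookkeeping of stochastic orders. The companion bound for the second term $\lambda\sum_i \eta_i' D_i X_i/(1-\lambda D_i)$ in the expansion of $\eta'C'X$ is handled in the same way, using $D_i\le1-m$ to control the denominators and the finiteness of $E[\|\eta_i'X_i\|^{1+\delta_1}]$ from Assumption~\ref{finite XX}.
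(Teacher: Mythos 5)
Your proposal is correct and follows essentially the same route as the paper, which proves the lemma by exactly this three-factor decomposition: $\eta'Z=O_P(\sqrt N)$ (CLT), $(Z'Z)^{-1}=O(1/N)$, and the summation term $=O_P(N)$ (law of large numbers after dividing by $N$), so the product is $O_P(\sqrt N)$. The only difference is that you spell out details the paper leaves implicit -- the Chebyshev bound for $\eta'Z$, the uniform lower bound $1-\lambda D_i\ge m$ from Assumption BA, and the moment condition justifying the LLN -- which is a faithful elaboration rather than a different argument.
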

\begin{proof}
	\begin{equation*}
		\eta' Z(Z'Z)^{-1}\sum_{i=1}^{N} \frac{Z_i'X_i}{1-\lambda D_i}= O_P(\sqrt{N})O_P(\frac{1}{N})O_P(N) = O_P(\sqrt{N})
	\end{equation*}
because CLT applies to $\frac{1}{\sqrt{N}}\eta' Z$ and law of large of number applies to the summation term when divided by $N$.
\end{proof}

\begin{lemma}\label{lemma moment to max order}
    $E[\lVert \phi_i \rVert^{r+\delta_1}] \text{ for some $\delta_1 > 0$} \implies \frac{1}{N^{1/r}} \max_i{\lVert \phi_i \rVert} = o_P(1)$.
\end{lemma}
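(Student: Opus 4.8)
The plan is a standard ``maximal inequality via union bound plus Markov'' argument, and I would carry it out in three short steps. First, I would fix an arbitrary $\varepsilon > 0$ and reduce the conclusion to showing $P\big(\max_{i\le N}\lVert \phi_i\rVert > \varepsilon N^{1/r}\big) \to 0$ as $N\to\infty$. Since the event $\{\max_{i\le N}\lVert\phi_i\rVert > \varepsilon N^{1/r}\}$ is exactly the union over $i=1,\dots,N$ of the events $\{\lVert\phi_i\rVert > \varepsilon N^{1/r}\}$, a union bound controls it:
\[
P\Big(\max_{i\le N}\lVert\phi_i\rVert > \varepsilon N^{1/r}\Big) \le \sum_{i=1}^{N} P\big(\lVert\phi_i\rVert > \varepsilon N^{1/r}\big).
\]

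Second, I would apply Markov's inequality at exponent $r+\delta_1$ to each summand, which gives $P\big(\lVert\phi_i\rVert > \varepsilon N^{1/r}\big) \le \varepsilon^{-(r+\delta_1)} N^{-(r+\delta_1)/r}\, E[\lVert\phi_i\rVert^{r+\delta_1}]$. Under the i.i.d.\ assumption maintained throughout the paper the moments $E[\lVert\phi_i\rVert^{r+\delta_1}]$ equal a common finite constant, say $M$, so summing $N$ such terms yields
\[
P\Big(\max_{i\le N}\lVert\phi_i\rVert > \varepsilon N^{1/r}\Big) \le N\cdot \frac{M}{\varepsilon^{r+\delta_1}\, N^{(r+\delta_1)/r}} = \frac{M}{\varepsilon^{r+\delta_1}}\, N^{-\delta_1/r}.
\]

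Third, I would conclude: since $\delta_1 > 0$ the exponent $-\delta_1/r$ is strictly negative, so the bound tends to $0$ as $N\to\infty$; because $\varepsilon$ was arbitrary this establishes $N^{-1/r}\max_{i\le N}\lVert\phi_i\rVert \overset{p}{\to} 0$, i.e.\ it is $o_P(1)$. I do not expect any genuine obstacle here — the argument is elementary. The only point needing minimal care is that the $(r+\delta_1)$-th moment bound be uniform in $i$, which is automatic under the identically-distributed assumption used everywhere this lemma is invoked; for a non-identically-distributed triangular array one would instead posit $\sup_i E[\lVert\phi_i\rVert^{r+\delta_1}] < \infty$ and the identical computation goes through verbatim.
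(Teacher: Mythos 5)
Your proof is correct, and it takes a slightly different (and somewhat more robust) route than the paper's. The paper argues on the complementary event: it writes $P\bigl(N^{-1/r}\max_i\lVert\phi_i\rVert<\delta\bigr)$ as a product $P(\lVert\phi_i\rVert<\delta N^{1/r})^N$ — a step that uses independence and identical distribution — then applies Markov's inequality at exponent $r+\delta_1$ and the inequality $(1-x)^N\ge 1-Nx$ to conclude the probability tends to $1$. You instead bound $P\bigl(\max_i\lVert\phi_i\rVert>\varepsilon N^{1/r}\bigr)$ directly by the union bound and Markov, arriving at the same rate $M\,\varepsilon^{-(r+\delta_1)}N^{-\delta_1/r}\to 0$; indeed the paper's $1-(1-p)^N\le Np$ step is just the union bound in disguise, so the two computations are numerically identical. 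What your version buys is that it dispenses with independence and with identical distribution entirely — only $\sup_i E[\lVert\phi_i\rVert^{r+\delta_1}]<\infty$ is needed, as you note — which is actually a better fit for the paper's setting, where $Z$ is nonrandom and hence quantities like $\phi_i=Z_i'\epsilon_i$ are independent but not identically distributed across $i$; the paper's product-of-probabilities step implicitly treats them as i.i.d. The paper's approach offers nothing extra here beyond being an equivalent packaging of the same Markov bound.
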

\begin{proof}
For a fixed $\delta > 0$,
\begin{align*}
    P( \frac{1}{N^{1/r}} \max_i{\lVert \phi_i \rVert} < \delta) = & P(  \max_i \lVert \phi_i \rVert < \delta N^{1/r} ) \\
	= & P(\lVert \phi_i \rVert < \delta N^{1/r} \quad \text{for} \quad i=1,2,\dots,N) \\
	= & P(\lVert \phi_i \rVert < \delta N^{1/r})^N \\
	= & P(\lVert \phi_i \rVert^{r+\delta_1} < (\delta N^{1/r})^{r+\delta_1} )^N \quad \text{for $\delta_1 > 0$} \\
	= & \left(1 - P(\lVert \phi_i \rVert^{r+\delta_1} \geq (\delta N^{1/r})^{r+\delta_1} )\right)^N \\
	\geq & \left(1 - \frac{E[\lVert \phi_i \rVert^{r+\delta_1}] }{(\delta N^{1/r})^{r+\delta_1}}  \right)^N \\
	= & \left(1 - \frac{1}{N}\frac{E[\lVert \phi_i \rVert^{r+\delta_1}] }{\delta^{r+\delta_1}N^{\delta_1/r}}  \right)^N \\
	\geq & 1 - \frac{E[\lVert \phi_i \rVert^{r+\delta_1}] }{\delta^{r+\delta_1}N^{\delta_1/r}} \to 1
\end{align*}

The last inequality holds when $N\geq 1$ and $\frac{E[\lVert \phi_i \rVert^{r+\delta_1}] }{\delta^{r+\delta_1}N^{\delta_1/r}} < N$, both of which are true for large $N$.
\end{proof}

With Lemma~\ref{lemma moment to max order}, it is easy to show the following: assuming that $E[\lVert \eta_i'X_i\rVert^{2+\delta_1}]$ is finite, for some $\delta_1 > 0$, $\lVert \sum_{i=1}^{N}\eta_i'D_i \frac{X_i}{1-\lambda D_i} \rVert \leq \frac{K}{1-m} \max_i \lVert \eta_i'X_i \rVert = o_P(\sqrt{N})$. Combining this result with Lemma~\ref{etaCX A}, I show that $\eta'C'X = O_P(\sqrt{N})$. The proof for $X'C'\epsilon = O_P(\sqrt{N})$ is similar.

Next, I sub $X = Z\pi + \eta$ into the $(I - {Q\eta'C'X})Q(X'C'\epsilon)+ o_P(\frac{1}{N})$.
\begin{align*}
	(X'CX)^{-1}(X'C\epsilon) = & (I - {Q\eta'C'(Z\pi + \eta)}Q((Z\pi + \eta)'C'\epsilon)+ o_P(\frac{1}{N}) \\
	=& (I - Q\eta'C'Z\pi - Q'\eta'C'\eta)Q(\pi'Z'\epsilon + \eta'C'\epsilon) + o_P(\frac{1}{N}) \\
	= &Q\pi'Z'\epsilon + Q\eta'C'\epsilon - Q\eta'C'Z\pi Q\pi'Z'\epsilon + o_P(\frac{1}{N}) 
\end{align*}
The last equality holds because after cross-multiplying, we have six terms to evaluate:
\begin{table}[H]
	\centering
	\renewcommand{\arraystretch}{1}
	\begin{tabular}{c c c}
		term & stochastic order & keep or not \\
		$Q\pi'Z'\epsilon$ &  $O_P(\frac{1}{\sqrt{N}})$& Yes\\ 
		$Q\eta'C'\epsilon$ & $O_P(\frac{1}{N})$ & Yes\\ 
		$-Q\eta'C'Z\pi Q\pi'Z'\epsilon$ & $O_P(\frac{1}{N})$& Yes\\
		$-Q\eta'C'Z\pi Q \eta'C'\epsilon$ & $O_P(\frac{1}{N\sqrt{N}})$& No\\ 
		$-Q\eta'C'\eta Q\pi'Z'\epsilon$ & $O_P(\frac{1}{N\sqrt{N}})$& No\\
		$-Q\eta'C'\eta Q\eta'C'\epsilon$ & $O_P(\frac{1}{N^2})$ & No
	\end{tabular}
\end{table}

After dropping the last three terms, we obtain the following expression for the difference between the estimator and $\beta$:
\begin{equation} \label{intermediate equation for evaluating approximate bias}
	(X'CX)^{-1}(X'C\epsilon) = Q\pi'Z'\epsilon+ Q\eta'C'\epsilon-Q\eta'C'Z\pi Q\pi'Z'\epsilon + o_P(\frac{1}{N}) 
\end{equation}

We evaluate the three terms in Eq.(\ref{intermediate equation for evaluating approximate bias}) separately.

\subsection{$Q\pi'Z'\epsilon$}\label{A1}
\begin{align*}
	Q\pi'Z'\epsilon =& (\pi'Z'X)^{-1}\pi'Z'\epsilon\\
	=& (\pi'Z'Z\pi+\pi'Z'\eta)^{-1}\pi'Z'\epsilon\\
	=& \underbrace{(\pi'Z'Z\pi)^{-1}\pi'Z'\epsilon}_{E[(\pi'Z'Z\pi)^{-1}\pi'Z'\epsilon]=0} - (\pi'Z'Z\pi)^{-1}\pi'Z'\eta(\pi'Z'Z\pi+\pi'Z'\eta)^{-1}\pi'Z'\epsilon
\end{align*}
The part of the expression with a zero expectation is exactly $R1$. 

\begin{align*}
	&(\pi'Z'Z\pi)^{-1}\pi'Z'\eta(\pi'Z'Z\pi+\pi'Z'\eta)^{-1}\pi'Z'\epsilon \\
	= &(\pi'Z'Z\pi)^{-1}\pi'Z'\eta(\pi'Z'Z\pi)^{-1}\pi'Z'\epsilon- \underbrace{(\pi'Z'Z\pi)^{-1}\pi'Z'\eta(\pi'Z'Z\pi)^{-1}\pi'Z'\eta(\pi'Z'Z\pi+\pi'Z'\eta)^{-1}\pi'Z'\epsilon}_{\sim O_P(\frac{1}{N\sqrt{N}})}\\
	= & \underbrace{\frac{Q_0}{N}\pi'Z'\eta(\pi'Z'Z\pi)^{-1}\pi'Z'\epsilon}_{R2}+ o_P(\frac{1}{N})\\
\end{align*}
The last equality holds because $N(\pi'Z'Z\pi)^{-1} {\to} Q_0$, therefore, $(\pi'Z'Z\pi)^{-1}- \frac{Q_0}{N} = o(\frac{1}{N})$. And we also have that $\pi'Z'\eta(\pi'Z'Z\pi)^{-1}\pi'Z'\epsilon = O_P(1)$. So,
\begin{align*}
	& \frac{Q_0}{N}\pi'Z'\eta(\pi'Z'Z\pi)^{-1}\pi'Z'\epsilon - (\pi'Z'Z\pi)^{-1}\pi'Z'\eta(\pi'Z'Z\pi)^{-1}\pi'Z'\epsilon \\
 =& (\frac{Q_0}{N} - (\pi'Z'Z\pi)^{-1})\pi'Z'\eta(\pi'Z'Z\pi)^{-1}\pi'Z'\epsilon = o(\frac{1}{N})O_P(1) = o_P(\frac{1}{N}).
\end{align*} 
Then, I evaluate the expectation of R2:
\begin{align*}
	E[\frac{Q_0}{N}\pi'Z'\eta(\pi'Z'Z\pi)^{-1}\pi'Z'\epsilon] =& \frac{Q_0}{N}E[\pi'Z'E[\eta(\pi'Z'Z\pi)^{-1}\pi'Z'\epsilon|Z]] \\
	=&\frac{Q_0}{N} E[\pi'Z'((\pi'Z'Z\pi)^{-1}\pi'Z')']\sigma_{\eta\epsilon} \\
	=& \frac{Q_0}{N} I_L \sigma_{\eta\epsilon} = \frac{Q_0}{N}\sigma_{\eta\epsilon}
\end{align*}

\subsection{$Q\eta'C'\epsilon$} \label{A2}
\begin{align*}
	&Q\eta'C'\epsilon = \underbrace{\frac{Q_0}{N}\eta'C'\epsilon}_{R3} + o_P(\frac{1}{N}) \\
	&E[\frac{Q_0}{N}\eta'C'\epsilon] = \frac{Q_0}{N}tr(C')\sigma_{\eta\epsilon}
\end{align*}

\subsection{$Q\eta'C'Z\pi Q\pi'Z'\epsilon$}\label{A3}
\begin{align*}
	Q\eta'C'Z\pi Q\pi'Z'\epsilon =& Q\eta'C'Z\pi (\pi'Z'X)^{-1}\pi'Z'\epsilon\\
	 =& Q\eta'C'Z\pi (\pi'Z'Z\pi +\pi'Z'\eta)^{-1}\pi'Z'\epsilon\\
	 =& Q\eta'C'Z\pi (\pi'Z'Z\pi)^{-1}\pi'Z'\epsilon   
	-
	 \underbrace{Q\eta'C'Z\pi (\pi'Z'Z\pi )^{-1} \pi'Z'\eta (\pi'Z'Z\pi +\pi'Z'\eta)^{-1}\pi'Z'\epsilon}_{\sim O(\frac{1}{N\sqrt{N}}) }\\
	 =&\underbrace{\frac{Q_0}{N}\eta'C'Z\pi (\pi'Z'Z\pi)^{-1}\pi'Z'\epsilon}_{\text{equivalent to $R4$ for approximate bias computation purpose}}+ o_P(\frac{1}{N})\\
\end{align*}
Though the last expression is not the same as $R4$, it does not affect the definition of approximate bias since we are only interested in the expectation of the last expression and that of $R4$. As long as the last expression and $R4$ share the same expectation, definition \ref{def for approximate bias} remains valid. Recall that $R4 = \frac{Q_0}{N}\eta'P_{Z\pi}\epsilon$ and $E[\frac{Q_0}{N}\eta'P_{Z\pi}\epsilon] = \frac{Q_0}{N}tr(P_{Z\pi}) \sigma_{\eta\epsilon}$. The following shows that the last expression has the same expectation.

\begin{align*}
	E[ \frac{Q_0}{N}\eta'C'Z\pi (\pi'Z'Z\pi)^{-1}\pi'Z'\epsilon] = & \frac{Q_0}{N}tr(C'Z\pi (\pi'Z'Z\pi)^{-1}\pi'Z') \sigma_{\eta\epsilon}\\
	=& \frac{Q_0}{N}tr(Z'C'Z\pi (\pi'Z'Z\pi)^{-1}\pi') \sigma_{\eta\epsilon}\\
	=& \frac{Q_0}{N}tr(Z'Z\pi (\pi'Z'Z\pi)^{-1}\pi') \sigma_{\eta\epsilon}\\
	=& \frac{Q_0}{N}tr(Z\pi (\pi'Z'Z\pi)^{-1}\pi'Z') \sigma_{\eta\epsilon}\\
	=& \frac{Q_0}{N}tr(P_{Z\pi}) \sigma_{\eta\epsilon}
\end{align*}

Combining the results from sections \ref{A1}, \ref{A2} and \ref{A3}, we obtain Corollary \ref{corollary for approximate bias}.

\section{Approximate bias for classes of estimators that do not have $CZ = Z$ property} \label{Approximate bias for other classes of estimators}

This section shows that definition \ref{def for approximate bias} applies to $\omega_2$-class and $\lambda_2$-class estimators. Once the validity of the definition is established, it is trivial to show that corollary \ref{corollary for approximate bias} is also true for these two classes of estimators.

\subsection{$\omega_2$-class estimators}
Recall that the closed-form expression of $\omega_2$-class estimator is 
\begin{equation*}
    \hat{\beta}_{\omega_2} = (X'(P_Z - D + \omega_2 I)'X)^{-1}(X'(P_Z - D + \omega_2 I)'y)
\end{equation*}
and the difference between $\hat{\beta}_{\omega_2}$ and $\beta$ is
\begin{align*}
    & (X'(P_Z - D + \omega_2 I)'X)^{-1}(X'(P_Z - D + \omega_2 I)'\epsilon)\\
    = & (QX'(P_Z - D + \omega_2 I)'X)^{-1}(QX'(P_Z - D + \omega_2 I)'\epsilon)\\
    = & (Q(\pi'Z' + \eta')(P_Z - D + \omega_2 I)'X)^{-1}(QX'(P_Z - D + \omega_2 I)'\epsilon) \\
    = & (I - \underbrace{Q\pi'Z'D'X}_{o_P(\frac{1}{\sqrt{N}})} + \underbrace{\omega_2 I}_{O(\frac{1}{N})} + \underbrace{Q\eta'P_Z'X}_{O_P(\frac{1}{\sqrt{N}})} - \underbrace{Q\eta'D'X}_{o_P(\frac{1}{\sqrt{N}})} + \underbrace{\omega_2Q\eta'X}_{O_P(\frac{1}{N})})^{-1}\underbrace{(QX'(P_Z - D + \omega_2 I)'\epsilon)}_{\sim O_P(\frac{1}{\sqrt{N}})}\\
    = & (I - Q\eta'P_Z'X)(QX'(P_Z - D + \omega_2 I)'\epsilon) + o_P(\frac{1}{N})
\end{align*}

After cross-multiplying, we obtain the following six terms

\begin{table}[H]
    \centering
    \renewcommand{\arraystretch}{1}
    \begin{tabular}{c c c}
        term & stochastic order & keep or not\\
        $QX'P_Z'\epsilon$ & $O_P(\frac{1}{\sqrt{N}})$ & Yes \\
        $-QX'D'\epsilon$ & $o_P(\frac{1}{\sqrt{N}})$ & Yes \\
        $\omega_2QX'\epsilon$ & $O_P(\frac{1}{N})$ & Yes \\
        $-Q\eta'P_Z'XQX'P_Z'\epsilon$ & $O_P(\frac{1}{N})$ & Yes \\ 
        $Q\eta'P_Z'XQX'D'\epsilon$ & $o_P(\frac{1}{N})$ & No\\
        $-\omega_2Q\eta'P_Z'XQX'\epsilon$ & $O_P(\frac{1}{N\sqrt{N}})$ & No
    \end{tabular}
\end{table}

\subsubsection{$QX'P_Z'\epsilon$}
\begin{align*}
    QX'P_Z'\epsilon =& (\pi'Z'X)^{-1}(\pi'Z' + \eta')P_Z'\epsilon\\
    =& (\pi'Z'X)^{-1}(\pi'Z'\epsilon + \eta'P_Z'\epsilon)\\
    =& \underbrace{(\pi'Z'Z\pi)^{-1}}_{O(\frac{1}{N})}(\underbrace{\pi'Z'\epsilon}_{O_P(\sqrt{N})} + \underbrace{\eta'P_Z'\epsilon}_{O_P(1)}) -
    \underbrace{(\pi'Z'Z\pi)^{-1}(\pi'Z'\eta)(\pi'Z'X)^{-1}}_{O_P(\frac{1}{N\sqrt{N}})}(\underbrace{\pi'Z'\epsilon}_{O_P(\sqrt{N})} + \underbrace{\eta'P_Z'\epsilon}_{O_P(1)})\\
    =& \underbrace{J\epsilon}_{R1} + \underbrace{\frac{Q_0}{N}\eta'P_Z'\epsilon}_{(a)} - \underbrace{\frac{Q_0}{N}(\pi'Z'\eta)(\pi'Z'Z\pi)^{-1}\pi'Z'\epsilon}_{R2} + o_P(\frac{1}{N})
\end{align*}

\subsubsection{$-QX'D'\epsilon$}
\begin{align*}
    -QX'D'\epsilon = & -(\pi'Z'Z\pi)^{-1}(\pi'Z' + \eta')D'\epsilon + \underbrace{(\pi'Z'Z\pi)^{-1}(\pi'Z'\eta)(\pi'Z'X)^{-1}}_{O_P(\frac{1}{N\sqrt{N}})}\underbrace{X'D'\epsilon}_{o_P({\sqrt{N}})}\\
    = & -\underbrace{(\pi'Z'Z\pi)^{-1}(\pi'Z'D'\epsilon)}_{E[(\pi'Z'Z\pi)^{-1}(\pi'Z'D'\epsilon)]=0} - \underbrace{\frac{Q_0}{N}(\eta'D'\epsilon)}_{(b)} + o_P(\frac{1}{N})
\end{align*}

\subsubsection{$\omega_2QX'\epsilon$}
\begin{align*}
    \underbrace{\omega_2QX'\epsilon}_{O_P(\frac{1}{N})} = & \omega_2(\pi'Z'Z\pi)^{-1}(\pi'Z' + \eta')\epsilon + o_P(\frac{1}{N}) \\
    =& \underbrace{\omega_2(\pi'Z'Z\pi)^{-1}(\pi'Z'\epsilon)}_{E[\omega_2(\pi'Z'Z\pi)^{-1}(\pi'Z'\epsilon)]=0} + \underbrace{\frac{Q_0}{N}\eta'(\omega_2 I)'\epsilon}_{(c)} + o_P(\frac{1}{N})  
\end{align*}
Note that $R3 = (a) - (b) + (c)$.

\subsubsection{$-Q\eta'P_Z'XQX'P_Z'\epsilon$}
\begin{align*}
    -\underbrace{Q\eta'P_Z'XQX'P_Z'\epsilon}_{O_P(\frac{1}{N})} =& -(\pi'Z'Z\pi)^{-1}\eta'Z\pi(\pi'Z'Z\pi)^{-1}\pi'Z'\epsilon + o_P(\frac{1}{N})\\
    =& -\underbrace{\frac{Q_0}{N}\eta'P_{Z\pi}\epsilon}_{R4} + o_P(\frac{1}{N})
\end{align*}

\subsection{$\lambda_2$-class estimators}
Recall that the closed-form expression of $\omega_2$-class estimator is 
\begin{equation*}
    \hat{\beta}_{\lambda_2} = (X'(P_Z - \lambda_2 D)'X)^{-1}(X'(P_Z - \lambda_2 D)'y)
\end{equation*}
and the difference between $\hat{\beta}_{\lambda_2}$ and $\beta$ is
\begin{align*}
    & (X'(P_Z - \lambda_2 D)'X)^{-1}(X'(P_Z - \lambda_2 D)'\epsilon)\\
    =& (QX'(P_Z - \lambda_2 D)'X)^{-1}(QX'(P_Z - \lambda_2 D)'\epsilon)\\
    =& (I + \underbrace{Q\eta'P_Z'X}_{O_P(\frac{1}{\sqrt{N}})} -\underbrace{\lambda_2QX'D'X}_{o_P(\frac{1}{\sqrt{N}})})^{-1}\underbrace{(QX'(P_Z - \lambda_2 D)'\epsilon)}_{O_P(\frac{1}{\sqrt{N}})}\\
    =& (I - \underbrace{Q\eta'P_Z'X}_{O_P(\frac{1}{\sqrt{N}})})(\underbrace{QX'P_Z'\epsilon}_{O_P(\frac{1}{\sqrt{N}})} - \underbrace{\lambda_2 QX'D'\epsilon}_{o_P(\frac{1}{\sqrt{N}})}) + o_P(\frac{1}{N})\\
    =& QX'P_Z'\epsilon - Q\eta'P_Z'XQX'P_Z'\epsilon - \lambda_2 QX'D'\epsilon + o_P(\frac{1}{N})
\end{align*}

\subsubsection{$QX'P_Z'\epsilon$}
\begin{align*}
    QX'P_Z'\epsilon = & Q\pi'Z'\epsilon + Q\eta'P_Z'\epsilon\\
    = & \underbrace{(\pi'Z'Z\pi)^{-1}\pi'Z'\epsilon}_{R1} +\underbrace{(\pi'Z'Z\pi)^{-1}\eta'Z\pi(\pi'Z'X)^{-1}\pi'Z'\epsilon}_{O_P(\frac{1}{N})} + \underbrace{Q\eta'P_Z'\epsilon}_{O_P(\frac{1}{N})}\\
    = & \underbrace{J\epsilon}_{R1} + \underbrace{\frac{Q_0}{N}\pi'Z'\eta(\pi'Z'Z\pi)^{-1}\pi'Z'\epsilon}_{R_2} + \underbrace{\frac{Q_0}{N}\eta'P_Z'\epsilon}_{(d)} + o_P(\frac{1}{N})
\end{align*}

\subsubsection{$-Q\eta'P_Z'XQX'P_Z'\epsilon$}
\begin{align*}
    -\underbrace{Q\eta'P_Z'XQX'P_Z'\epsilon}_{O_P(\frac{1}{N})} = & -\underbrace{(\pi'Z'Z\pi)^{-1}\eta'P_Z'(Z\pi + \eta))(\pi'Z'Z\pi)^{-1}(\pi'Z' + \eta')P_Z'\epsilon}_{O_P(\frac{1}{N})} + o_P(\frac{1}{N})\\
    = & -\underbrace{\frac{Q_0}{N}\eta'Z\pi(\pi'Z'Z\pi)^{-1}\pi'Z'\epsilon}_{R4} + o_P(\frac{1}{N})
\end{align*}

\subsubsection{$-\lambda_2 QX'D'\epsilon$}
\begin{align*}
    -\underbrace{\lambda_2 QX'D'\epsilon}_{o_P(\frac{1}{\sqrt{N}})} = & -\lambda_2 (\pi'Z'Z\pi)^{-1}\eta'D'\epsilon -\lambda_2 (\pi'Z'Z\pi)^{-1}\pi'ZD'\epsilon + o_P(\frac{1}{N})\\
    = &  - \underbrace{\frac{Q_0}{N} \eta'\lambda_2D'\epsilon}_{(e)} -\underbrace{\lambda_2 (\pi'Z'Z\pi)^{-1}\pi'Z'D'\epsilon}_{E[\lambda_2 (\pi'Z'Z\pi)^{-1}\pi'Z'D'\epsilon] = 0} + o_P(\frac{1}{N})\\
\end{align*}
Note that $R_3 = (d) - (e)$.

\section{Proof for approximate bias of UIJIVE1 is asymptotically vanishing} \label{appendix UIJIVE1 approximate bias asymptotically vanishing}

\begin{align*}
    tr(C) - \mathcal{L} - 1 = \sum_{i=1}^{N} \frac{\frac{L_1+1}{N}}{1 - \tilde{D}_i + \frac{L_1+1}{N}} - L_1 - 1 =&  \sum_{i=1}^{N} \frac{L_1+1}{N - N\tilde{D}_i + L_1+1} - L_1 - 1 \\
    =& \sum_{i=1}^{N} \{\frac{L_1+1}{N - N\tilde{D}_i + L_1+1} - \frac{L_1 + 1}{N}\}\\
    =& \sum_{i=1}^{N} \frac{N\tilde{D}_i(L_1+1) - (L_1+1)^2}{(N - N\tilde{D}_i + L_1+1)N}
\end{align*}


\begin{claim}
    $tr(C) - \mathcal{L} - 1 \to 0$. 
\end{claim}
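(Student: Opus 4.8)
The plan is to reduce the whole expression to something of order $1/N$ by bounding each summand. First I would record two elementary properties of the leverages $\{\tilde D_i\}_{i=1}^N$: since $P_{\tilde Z}$ is an orthogonal projection, each $\tilde D_i \in [0,1]$, and $\sum_{i=1}^N \tilde D_i = tr(P_{\tilde Z})$, which equals the rank of $\tilde Z$ and is therefore a fixed finite number (no larger than $K_1$) under the fixed-instrument regime considered here. These are the only structural facts about $\tilde Z$ that the argument needs.

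Next I would rewrite the generic summand appearing in the last display as
\[
\frac{N\tilde D_i(L_1+1) - (L_1+1)^2}{\big(N - N\tilde D_i + L_1+1\big)N} \;=\; \frac{L_1+1}{N}\cdot\frac{N\tilde D_i - (L_1+1)}{N(1-\tilde D_i) + L_1 + 1},
\]
so that $tr(C) - \mathcal L - 1 = \frac{L_1+1}{N}\sum_{i=1}^N \frac{N\tilde D_i - (L_1+1)}{N(1-\tilde D_i)+L_1+1}$. Assumption BA applied to $\{\tilde D_i\}$ provides an $m>0$ with $1-\tilde D_i \ge m$ for all $i$ once $N$ is large enough, so for such $N$ the denominator of each fraction is at least $Nm$; the numerator is controlled by the triangle inequality, $|N\tilde D_i - (L_1+1)| \le N\tilde D_i + (L_1+1)$.

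Combining these two bounds and summing over $i$,
\[
\big|\,tr(C) - \mathcal L - 1\,\big| \;\le\; \frac{L_1+1}{N}\cdot\frac{1}{Nm}\sum_{i=1}^N\big(N\tilde D_i + L_1+1\big) \;=\; \frac{(L_1+1)\big(tr(P_{\tilde Z}) + L_1+1\big)}{Nm},
\]
and the right-hand side tends to $0$ as $N\to\infty$ since $tr(P_{\tilde Z})$ and $L_1$ are fixed. I do not expect a genuine obstacle here; the one point deserving care is that Assumption BA only guarantees $1-\tilde D_i \ge m$ for $N$ sufficiently large, so the inequality chain above is valid only eventually, which is precisely what an asymptotic statement requires. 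Conceptually the computation merely says that replacing IJIVE1's vanishing numerators by $\omega_1 = (L_1+1)/N$ perturbs each of the $N$ summands by an $O(1/N)$ amount, the sum of which approaches $L_1+1$ and exactly cancels the $-(L_1+1)$ already present.
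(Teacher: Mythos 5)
Your proposal is correct and follows essentially the same route as the paper's own proof: the same rewriting of $tr(C)-\mathcal{L}-1$ as a sum of $O(1/N)$ perturbations, a triangle-inequality split of each numerator, Assumption BA (applied to $\{\tilde D_i\}$) to bound the denominators below by a multiple of $N$, and the fixed trace $tr(P_{\tilde Z})\le K_1$ to control the sum, yielding an overall $O(1/N)$ bound. The only differences are cosmetic (factoring out $(L_1+1)/N$ before bounding, and writing $tr(P_{\tilde Z})$ where the paper writes $K_1$).
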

\begin{proof}
    \begin{align*}
        \lVert \sum_{i=1}^{N} \frac{\frac{L_1+1}{N}}{1 - \tilde{D}_i + \frac{L_1+1}{N}} - L_1 - 1 \rVert
        \leq&  \lVert \sum_{i=1}^{N} \frac{N\tilde{D}_i(L_1+1)}{(N - N\tilde{D}_i + L_1+1)N} \rVert + \lVert \sum_{i=1}^{N} \frac{(L_1+1)^2}{(N - N\tilde{D}_i + L_1+1)N} \rVert\\
        \leq & 
        \lVert \sum_{i=1}^{N} \frac{\tilde{D}_i(L_1+1)}{(mN + L_1+1)} \rVert + \lVert \sum_{i=1}^{N} \frac{(L_1+1)^2}{(mN + L_1+1)N} \rVert \quad \text{(BA)}\\
        =& \frac{K_1(L_1+1)}{mN + L_1 +1} + \frac{(L_1+1)^2}{mN + L_1+1} \to 0
    \end{align*}
\end{proof}

\section{Generalizing approximate bias to other classes of estimators} \label{sec other classes of estimators}
\subsection{k-class estimators} \label{approximately unbiased k class estimator}
Classical k-class estimators takes the form of $(X'C'X)^{-1}(X'C'y)$ and its $C$ is an affine combination of $C_{OLS}$ ($=I$) and $C_{TSLS}$ ($=P_Z$). Its $C$ matrix satisfies the $CZ = Z$ property.
\begin{equation*}
	k C_{TSLS} Z  + (1-k) C_{OLS} Z =  kZ + (1-k) Z = Z \quad \text{where} \quad \alpha \in \mathbb{R}.
\end{equation*}
Therefore, corollary \ref{corollary for approximate bias} applies to all k-class estimators. I set $k = \frac{N-L-1}{N-K}$ such that the approximate bias of the k-class estimator is zero as in Eq(\ref{approximaely unbiased k-class estimator equation}). The resulting estimator is termed an Approximately Unbiased k-class estimator (or AUK in short) and AUK's $k$ converges at a rate of $O(\frac{1}{N^2})$ to that of the Nagar estimator. In contrast, the Nagar estimator's k converges to that of TSLS ($k=1$) at a rate of $O(\frac{1}{N})$. 
\begin{align}
	tr(k C_{TSLS} + (1-k) C_{OLS}) - L - 1 &= 0 \label{approximaely unbiased k-class estimator equation} \\
	kK - (1-k)N - L - 1 &= 0 \nonumber\\
	k = 1+ \frac{K-L-1}{N - K} &= \underbrace{ 1 + \frac{K-L-1}{N}}_{\text{Nagar estimator's $k$}} + O(\frac{1}{N^2}) \nonumber
\end{align}

\subsection{$\lambda_1$-class estimator}\label{Approximately Unbiased lambda class}
$\lambda_1$-class estimator bridges between JIVE1 and TSLS, both of which have the $CZ = Z$ property. To maintain this property, the $C$ matrix of $\lambda_1$-class estimator is designed to be $(I-\lambda_1D)^{-1}(P_Z - \lambda_1D)$ such that when $\lambda_1 = 0$, the estimator is TSLS; when $\lambda_1 = 1$, the estimator is JIVE1. Evaluating the approximate bias of $\lambda_1$-class estimator, we get
\begin{equation*}
    (I-\lambda_1D)^{-1}(P_Z - \lambda_1D)Z = (I-\lambda_1D)^{-1}(Z - \lambda_1DZ) = (I-\lambda_1D)^{-1}(I-\lambda_1D)Z = Z.
\end{equation*}
By Corollary \ref{corollary for approximate bias}, the approximate bias of $\lambda$-class estimator is proportional to
\begin{equation}\label{approximate bias of lambda class}
    (1-\lambda_1)\sum_{i=1}^{N} \frac{D_i}{1-\lambda_1 D_i} -L-1 
\end{equation}

Under Assumption BA, setting $\lambda_1 = \frac{K-L-1}{K}$ makes the approximate bias asymptotically vanish.

\subsection{$\lambda_2$-class estimator}
The relationship between $\lambda_1$-class estimator and $\lambda_2$-class estimator is analogous to the relationship between JIVE1 and JIVE2. $\lambda_2$-class estimator removes the row-wise division of $\lambda_1$-class estimator. Hence, the $C$ matrix of $\lambda_2$-class estimator is designed to be $P_Z - \lambda_2 D$.

By corollary \ref{corollary for approximate bias}, the approximate bias for $\lambda_2$-class estimator is $\lambda_2K -L - 1$, and hence, the approximately unbiased $\lambda_2$-class estimator has its $\lambda_2 = \frac{K-L-1}{K}$ and I call this estimator TSJI2. 

\subsection{$\omega_1$-class estimator: UOJIVE1}\label{Approximately Unbiased omega class}
By Corollary \ref{corollary for approximate bias}, the approximate bias of $\omega_1$-class estimator is proportional to 
\begin{equation} \label{approximate bias of omega class}
	\sum_{i=1}^{N}\frac{\omega_1}{1-D_i+\omega_1} - L -1.
\end{equation}
By a similar proof to Appendix \ref{appendix UIJIVE1 approximate bias asymptotically vanishing}, we set $\omega_1 = \frac{L+1}{N}$.

\subsection{$\omega_2$-class estimator: UOJIVE2}
Similarly, we $\omega_2 = \frac{L+1}{N}$.

\section{Asymptotic proofs} \label{appendix asymptotic proofs}

\subsection{Proof for Lemma~\ref{prop UOJIVE1 consistency A}}
\begin{align}
    &\frac{1}{N}X'P_Z(I-D-\omega I)^{-1}X = \frac{1}{N}X'Z(Z'Z)^{-1}\sum_{i=1}^{N}\frac{Z_i'X_i}{1-D_i+\omega} \label{E1}\\
    &\frac{1}{N}X'D(I-D-\omega I)^{-1}X = \frac{1}{N}\sum_{i=1}^{N}\frac{D_iX_i'X_i}{1-D_i+\omega} \label{E2} \\
    &\frac{1}{N}X'\omega I(I-D-\omega I)^{-1}X = \frac{1}{N}\sum_{i=1}^{N}\frac{\omega X_i'X_i}{1-D_i+\omega} \label{E3}
\end{align}

Consider expression (\ref{E1}), 
\begin{align*}
	\lVert
	\frac{1}{N}\sum_{i=1}^{N}\frac{Z_i'X_i}{1-D_i+\omega} - \frac{1}{N}Z'X \rVert \leq & \frac{1}{N} \sum_{i=1}^{N} \lVert \frac{(D_i-\omega)Z_i'X_i}{1-D_i+\omega} \rVert\\
	\leq & \frac{1}{N} \frac{K}{m} \max_i \lVert Z_i'X_i \rVert + \frac{1}{N} \frac{L+1}{m} \max_i \lVert Z_i'X_i \rVert
\end{align*}
Assumption \ref{finite XX} and Lemma~\ref{lemma moment to max order} jointly imply that $\lVert
\frac{1}{N}\sum_{i=1}^{N}\frac{Z_i'X_i}{1-D_i+\omega} - \frac{1}{N}Z'X \rVert$  is bounded above by the sum of of two $o_P(1)$ terms' norms. Therefore, expression (\ref{E1}) converges in probability to $H$.

Consider expression (\ref{E2}), 
\begin{equation*}
	\lVert \frac{1}{N}\sum_{i=1}^{N}\frac{D_iX_i'X_i}{1-D_i+\omega} \rVert \leq \frac{1}{N}\sum_{i=1}^{N} \lVert \frac{D_iX_i'X_i}{1-D_i+\omega} \rVert \leq \frac{K}{Nm} \max_i \lVert X_i'X_i \rVert.
\end{equation*}
Assumption \ref{finite XX} and Lemma~\ref{lemma moment to max order} jointly imply that expression (\ref{E2}) converges in probability to 0.

Consider expression (\ref{E3}), 
\begin{equation*}
	\lVert \frac{1}{N}\sum_{i=1}^{N}\frac{\omega X_i'X_i}{1-D_i+\omega} \rVert \leq \frac{\lim_{N\to\infty}\omega}{m} E[\lVert X_i'X_i \rVert] \to 0
\end{equation*}
because $\omega = O(\frac{1}{N})$.

\subsection{Proof for Lemma~\ref{prop UOJIVE1 consistency B}}

\begin{align}
		&\frac{1}{N} X'P_Z(I-D+\omega I)^{-1}\epsilon= \frac{1}{N} X'Z(Z'Z)^{-1}\sum_{i=1}^{N}\frac{Z_i'\epsilon_i}{1-D_i+\omega}  \label{EEE1}\\
		&\frac{1}{N} X'D(I-D+\omega I)^{-1}\epsilon= \frac{1}{N} \sum_{i=1}^{N}\frac{D_iX_i'\epsilon_i}{1-D_i+\omega}  \label{EEE2} \\
		& \frac{1}{N}X'\omega I(I-D-\omega I)^{-1}\epsilon = \frac{1}{N}\sum_{i=1}^{N}\frac{\omega X_i'\epsilon_i}{1-D_i+\omega}  \label{EEE3}
	\end{align}

Consider $\frac{1}{N}\sum_{i=1}^{N}\frac{Z_i'\epsilon_i}{1-D_i+\omega}$ in expression (\ref{EEE1}),
\begin{align*}
	\lVert\frac{1}{N}\sum_{i=1}^{N}\frac{Z_i'\epsilon_i}{1-D_i+\omega} -\frac{1}{N}Z'\epsilon\rVert \leq& \frac{1}{N}\sum_{i=1}^{N} \lVert \frac{D_iZ_i'\epsilon_i - \omega Z_i'\epsilon_i}{1-D_i+\omega} \rVert \\
	\leq& \frac{1}{Nm}\sum_{i=1}^{N} \lVert D_iZ_i'\epsilon_i - \omega Z_i'\epsilon_i\rVert \quad \text{(BA)}\\
	\leq & \frac{K}{Nm} \max_i \lVert Z_i'\epsilon_i \rVert + \frac{N\omega}{Nm}\max_i \lVert Z_i'\epsilon_i \rVert
\end{align*}
Both terms converge in probability to zero under Assumption \ref{finite Xepsilon lower}. Note that $N\omega = O(1)$. Therefore, $\frac{N\omega}{Nm}\max_i \lVert Z_i'\epsilon_i \rVert = O(1)o_P(1) = o_P(1)$.

Consider expression (\ref{EEE2}),
\begin{equation*}
	\lVert\frac{1}{N} \sum_{i=1}^{N}\frac{D_iX_i'\epsilon_i}{1-D_i+\omega}\rVert \leq \frac{1}{Nm}\sum_{i=1}^{N} \lVert D_iX_i'\epsilon_i \rVert\leq \frac{K}{Nm} \max_i\lVert X_i'\epsilon_i \rVert
\end{equation*}
The last terms converge in probability to zero under Assumption \ref{finite Xepsilon lower}.

Consider expression (\ref{EEE3}),
\begin{equation*}
	 \lVert \frac{1}{N}\sum_{i=1}^{N}\frac{\omega X_i'\epsilon_i}{1-D_i+\omega}\rVert \leq \frac{1}{N}\sum_{i=1}^{N} \lVert \frac{\omega X_i'\epsilon_i}{1-D_i+\omega} \rVert \leq \frac{\omega}{Nm} \max_i\lVert X_i'\epsilon_i \rVert
\end{equation*}
The last term converges in probability to zero under Assumption \ref{finite Xepsilon lower}.

\subsection{Proof for Lemma~\ref{prop UOJIVE2 consistency A}}
\begin{align*}
    &\frac{1}{N}X'P_ZX \overset{p}{\to} H \\
    &\frac{1}{N}X'DX = \frac{1}{N}\sum_{i=1}^{N}D_iX_i'X_i \leq \frac{K}{N} \max_i \lVert X_i'X_i \rVert \overset{p}{\to} 0 \\
    &\frac{1}{N}\omega X'X = \frac{1}{N}\sum_{i=1}^{N} \omega X_i'X_i \overset{p}{\to} 0
\end{align*}

\subsection{Proof for Lemma~\ref{prop UOJIVE2 consistency B}}
\begin{align*}
    &\frac{1}{N} X'P_Z\epsilon= \frac{1}{N} X'Z(Z'Z)^{-1}Z'\epsilon \overset{p}{\to} 0 \\
    &\frac{1}{N} X'D\epsilon= \frac{1}{N} \sum_{i=1}^{N} D_iX_i'\epsilon_i \leq \frac{K}{N} \max_i \lVert X_i'\epsilon_i \rVert \overset{p}{\to} 0 \\
    & \frac{1}{N}\omega X'\epsilon = \frac{1}{N}\sum_{i=1}^{N} \omega X_i'\epsilon_i \overset{p}{\to} 0
\end{align*}

\subsection{Proof for Lemma~\ref{prop asymptotic variance UOJIVE1 B}}
The proof is similar to proof for Lemma~\ref{prop UOJIVE1 consistency B}.
I first show that $\frac{1}{\sqrt{N}}\sum_{i=1}^{N}\frac{Z_i'\epsilon_i}{1-D_i+\omega} - \frac{1}{\sqrt{N}}Z'\epsilon \overset{p}{\to} 0$.
	
	\begin{align*}
		\lVert\frac{1}{\sqrt{N}}\sum_{i=1}^{N}\frac{Z_i'\epsilon_i}{1-D_i+\omega} - \frac{1}{\sqrt{N}}\sum_{i=1}^{N} Z_i'\epsilon_i \rVert= & \frac{1}{\sqrt{N}}\sum_{i=1}^{N} \lVert\frac{(D_i - \omega)Z_i'\epsilon_i}{1-D_i+\omega}\rVert\\
		= & \frac{1}{\sqrt{N}}\sum_{i=1}^{N} \lVert\frac{D_iZ_i'\epsilon_i}{1-D_i+\omega}\rVert + \frac{1}{\sqrt{N}}\sum_{i=1}^{N} \lVert\frac{ \omega Z_i'\epsilon_i}{1-D_i+\omega}\rVert\\
		\leq & \frac{1}{\sqrt{N}}\sum_{i=1}^{N} \lVert\frac{D_iZ_i'\epsilon_i}{m+\omega}\rVert + \frac{1}{\sqrt{N}}\sum_{i=1}^{N} \lVert\frac{ \omega Z_i'\epsilon_i}{m+\omega}\rVert\\
		\leq & \frac{K}{(m+\omega)\sqrt{N}}\max_i\lVert Z_i'\epsilon_i\rVert + \frac{\omega}{(m+\omega)\sqrt{N}}\sum_{i=1}^{N} \lVert Z_i'\epsilon_i\rVert
	\end{align*}
The first term converges to 0 in probability under Assumption \ref{finite Xepsilon}. The second term converges to 0 in probability $\omega = O(\frac{1}{N})$. Therefore, $\frac{1}{\sqrt{N}}\sum_{i=1}^{N}\frac{Z_i'\epsilon_i}{1-D_i+\omega}  \overset{d}{\to} N(0,\sigma^2 \Sigma_{ZZ})$ and $\frac{1}{\sqrt{N}}X'P_Z (I -  D + \omega I)^{-1}\epsilon \overset{d}{\to} N(0,\sigma_\epsilon^2 H)$.

The other two terms converge to 0 in probability under Assumption BA and \ref{finite Xepsilon}
\begin{align*}
	\lVert \frac{1}{\sqrt{N}}X'D (I -  D + \omega I)^{-1}\epsilon\rVert \leq \frac{1}{\sqrt{N}} \sum_{i=1}^{N} \lVert \frac{D_i X_i'\epsilon_i}{1-D_i +\omega} \rVert &\leq \frac{1}{\sqrt{N}} \frac{K}{m+\omega}\max_i\lVert X_i'\epsilon_i\rVert \overset{p}{\to} 0,\\
	\lVert \frac{1}{\sqrt{N}}X'\omega I (I -  D + \omega I)^{-1}\epsilon\rVert \leq \frac{1}{\sqrt{N}} \sum_{i=1}^{N} \lVert \frac{\omega X_i'\epsilon_i}{1-D_i+\omega} \rVert &\leq  \frac{N\omega}{m+\omega}\frac{1}{\sqrt{N}}\max_i\lVert X_i'\epsilon_i\rVert \\
	&= \frac{L+1}{m+o(1/N)}o_P(1) = o_P(1).
\end{align*}

\subsection{Proof for Lemma~\ref{prop asymptotic variance UOJIVE2 B}}
\begin{align*}
    \frac{1}{\sqrt{N}}{X}'P_{{Z}}'{\epsilon} =&  \frac{1}{\sqrt{N}}{X}'{Z} ({Z}'{Z})^{-1}{Z}'{\epsilon} \overset{d}{\to} N(0,\sigma_{{\epsilon}}^2\Sigma_{{X}'{Z}}\Sigma_{{Z}'{Z}}^{-1} \Sigma_{{Z}'{X}}) \\
    \frac{1}{\sqrt{N}}{X}' {D}'{\epsilon} =& \frac{1}{\sqrt{N}} \sum_{i=1}^{N} {D}_i{X}_i'{\epsilon}_i \overset{p}{\to} 0 \\
    \frac{L+1}{\sqrt{N}}\frac{1}{N}{X}'{\epsilon} =& O(\frac{1}{\sqrt{N}})O_P(1) = O_P(\frac{1}{\sqrt{N}}) = o_P(1)
\end{align*}

\section{Example for the simulation setup with outliers} \label{sec matrix with outlier}

\begin{table}[ht]
\centering
\begin{tabular}{|c|c|c|c|c|l|}
\hline
\textbf{Column 1} & \textbf{Column 2} & \textbf{Column 3} & \textbf{Column 4} & \textbf{Column 5} & \textbf{Description} \\ \hline
\( (N-1)^{1/3} \) & 0 & 0 & 0 & 0 & Outlier \\ \hline
1 & 0 & 0 & 0 & 0 & Group 2 (Row 2) \\ \hline
0 & 1 & 0 & 0 & 0 & Group 3 (Row 3) \\ \hline
0 & 0 & 1 & 0 & 0 & Group 4 (Row 4) \\ \hline
0 & 0 & 0 & 1 & 0 & Group 5 (Row 5) \\ \hline
0 & 0 & 0 & 0 & 1 & Group 6 (Row 6) \\ \hline
0 & 0 & 0 & 0 & 0 & Group 1 (Rows 7-11) \\ \hline
0 & 0 & 0 & 0 & 0 & Group 1 (continued) \\ \hline
0 & 0 & 0 & 0 & 0 & Group 1 (continued) \\ \hline
0 & 0 & 0 & 0 & 0 & Group 1 (continued) \\ \hline
0 & 0 & 0 & 0 & 0 & Group 1 (continued) \\ \hline
1 & 0 & 0 & 0 & 0 & Group 2 (Row 12) \\ \hline
0 & 1 & 0 & 0 & 0 & Group 3 (Row 13) \\ \hline
0 & 0 & 1 & 0 & 0 & Group 4 (Row 14) \\ \hline
0 & 0 & 0 & 1 & 0 & Group 5 (Row 15) \\ \hline
0 & 0 & 0 & 0 & 1 & Group 6 (Row 16) \\ \hline
0 & 0 & 0 & 0 & 0 & Group 1 (Rows 17-21) \\ \hline
0 & 0 & 0 & 0 & 0 & Group 1 (continued) \\ \hline
\multicolumn{6}{|c|}{\dots (Pattern continues until 101 rows)} \\ \hline
\end{tabular}
\caption{Matrix Structure for Simulation with Outliers for $N = 101$, the outlier is supposed to belong to Group 2, but has its value contaminated (multiplied by $(N-1)^{1/3}$).}
\label{table matrix with outlier}
\end{table}

\end{appendices}
\newpage
\bibliographystyle{plainnat}
\bibliography{reference}

\end{document}